\newtheorem{theorem}{Theorem}
\newtheorem{definition}[theorem]{Definition}
\newtheorem{lemma}[theorem]{Lemma}
\newtheorem{proposition}[theorem]{Proposition}
\newtheorem{remark}[theorem]{Remark}
\newtheorem{assumption}[theorem]{Assumption}
\begin{document}

\title[Euler-Discretized Hull-White Stochastic Volatility Model]
{Asymptotics for the Euler-Discretized Hull-White Stochastic Volatility Model}

\author{Dan Pirjol}
\email{
dpirjol@gmail.com}

\author{Lingjiong Zhu}
\address
{Department of Mathematics\newline
\indent Florida State University\newline
\indent 1017 Academic Way\newline
\indent Tallahassee, FL-32306\newline
\indent United States of America}
\email{
ling@cims.nyu.edu}

\date{27 April 2016. \textit{Final:} 4 July 2017}

\subjclass[2010]{60G99,60K99,82B26,60F10,60F05}  
\keywords{linear stochastic recursion, Lyapunov exponent, phase transitions, critical exponent, 
large deviations, central limit theorems.}

\begin{abstract}
We consider the stochastic volatility model $dS_t = \sigma_t S_t dW_t,
d\sigma_t = \omega \sigma_t dZ_t$, with $(W_t,Z_t)$ uncorrelated
standard Brownian motions. This is a special case of the Hull-White and the
$\beta=1$ (log-normal) SABR model, which are widely used in financial practice. 
We study the properties of this model, discretized in time under several
applications of the Euler-Maruyama scheme, and point out that the resulting 
model has certain properties which are different from those of the continuous 
time model. 
We study the asymptotics of the time-discretized model in the $n\to \infty$ limit
of a very large number of time steps of size $\tau$, at fixed $\beta=\frac12\omega^2\tau n^2$
and $\rho=\sigma_0^2\tau$, and derive three results: i) almost sure limits, 
ii) fluctuation results, and iii) explicit expressions for growth rates 
(Lyapunov exponents) of the positive integer moments of $S_t$. 
Under the Euler-Maruyama discretization for $(S_t,\log \sigma_t)$, the Lyapunov 
exponents have a phase transition, which appears in numerical simulations
of the model as a numerical explosion of the asset price moments.
We derive criteria for the appearance of these explosions.
\end{abstract}

\maketitle

\section{Introduction} 

Stochastic volatility models are widely used in financial practice for modeling
the dynamics of the volatility surface. Some of the most popular models are
affine models with stochastic volatility such as the Heston model, 
and models where the volatility is the exponential of a Gaussian process 
such as a Brownian motion or Ornstein-Uhlenbeck process.

In this paper we will study the stochastic volatility model defined by the 
process
\begin{eqnarray}\label{SDES}
&& dS_t = \sigma_t S_t dW_t \\
\label{SDEsigma}
&& d\sigma_t = \omega\sigma_t dZ_t
\end{eqnarray}
with initial condition $\sigma_0, S_0$,
where $W_t$ and $Z_t$ denote independent standard Brownian
motions. The model parameter $\omega>0$ is a positive real constant. 

This model is a particular case of the Hull-White model \cite{HW}
\begin{eqnarray}
&& dS_t = r S_t dt + \sqrt{Y_t} S_t 
(\varrho dW_t + \sqrt{1-\varrho^2} dB_t ) \\
&& dY_t = \mu Y_t dt + \zeta Y_t dW_t
\end{eqnarray}
where $W_t,B_t$ are uncorrelated standard Brownian motions. This reduces to
the model (\ref{SDES}),(\ref{SDEsigma}) by identifying $\sigma_t  =\sqrt{Y_t}$ 
and taking $\mu=\frac14 \zeta^2, \zeta = 2\omega, r = 0$ and $\varrho=0$.

The case of zero correlation $\varrho = 0$ has received special 
attention in the literature because of its analytical tractability \cite{GS,WPW}.
This is also a particular realization of the SABR model \cite{SABR}, 
corresponding to the so-called log-normal SABR model $(\beta=1)$
\begin{eqnarray}\label{SABRdef}
&& dS_t = \sigma_t S_t^\beta dW_t \\
&& d\sigma_t = \omega \sigma_t dZ_t \,, \quad 
\mathbb{E}[dW_t,dZ_t]  = \varrho dt\,.
\end{eqnarray} 
The model is also a limiting case (zero mean reversion) of the
Scott model \cite{Sco87,CS89}, which corresponds to assuming that
$\sigma_t$ is the exponential of an Ornstein-Uhlenbeck process. 

In practice the original SABR model as formulated in \cite{SABR}
is used mostly in parametric form for interpolating swaption or caplet
volatilities, due to the unrealistic assumption of zero mean reversion for the
volatility process. Nevertheless, due to its simplicity the model was studied
extensively and many analytical results are available. 
The properties of this model were studied in continuous time in 
\cite{Jourdain,AP2007,LionsMusiela}. We briefly summarize a few results.

The asset price $S_t$ is a strict martingale only if the correlation 
is non-positive $\varrho \leq 0$ \cite{Jourdain,LionsMusiela}, see also
\cite{Sin,BCM}.
This is a necessary condition if the diffusion (\ref{SDES}) is to be used to
model the price of a tradeable asset. 
The model (\ref{SDES}), (\ref{SDEsigma}) has also moment explosions. 
Moment explosions in stochastic volatility models have been studied
in a wide class of models, see \cite{AP2007,GK,review}.
Define the explosion time $T_*(q)$ of the $q$-th moment of 
the asset price $(S_t)^q$ with $q\in \mathbb{R}$ as 
\begin{equation}
T_*(q) = \mbox{sup} \{ t\geq 0 : \mathbb{E}[S_t^q] < \infty \} 
\end{equation}
For the model (\ref{SDES}), (\ref{SDEsigma}) the explosion time is given 
by \cite{Jourdain,LionsMusiela}
\begin{equation}\label{Tstaru}
T_*(q) = 
\begin{cases}
+\infty\,, & \varrho < \varrho^*(q) \\
0\,,       & \varrho > \varrho^*(q) 
\end{cases}
\end{equation}
where the critical correlation $\varrho^*(q)$ is 
\begin{equation}
\varrho^*(q) = - \sqrt{1-\frac{1}{q}} \,.
\end{equation}
For all positive moments $q>1$ the critical correlation is negative 
$\varrho^*(q) < 0$, such that at
zero correlation $\varrho = 0$ all these moments explode in zero time. 

In practical implementation using Monte Carlo approaches,  
stochastic volatility models are simulated in discrete time. 
For this purpose, the stochastic differential equation (\ref{SDES}),
(\ref{SDEsigma}) is discretized in time, using one of several available time 
discretization schemes \cite{KlPl,Jourdain2}. The simplest scheme is the forward 
Euler time discretization, or Euler-Maruyama discretization \cite{KlPl}. 
This can be applied either directly to the stochastic differential equation
for $S_t$, or to that for $X_t = \log S_t$. We will call these schemes the 
Euler and log-Euler
schemes, respectively. The same treatment can be applied to the stochastic
differential equation for $\sigma_t$. Although the latter can be solved exactly
for this case, we consider also its discretization as an illustration for
more complicated volatility processes, where an exact treatment is not available.

The Euler-Maruyama time discretizations of the stochastic volatility model 
(\ref{SDES}), (\ref{SDEsigma}) have distinctive
properties which can be different from those of the continuous-time model. 
For example, the asset price $S_n$ is a true martingale for any correlation 
$\varrho \in [-1,1]$ (Proposition~\ref{prop:martingale}), in contrast to the 
continuous time model where this property holds only for $\varrho \leq 0$.
Also, under Euler-Maruyama discretization of $(S_t, \log\sigma_t)$, 
the moments of the asset price $\mathbb{E}[(S_n)^q]$ are finite for any 
parameter values $(\sigma_0,\omega,\tau)$. On the other hand, in continuous 
time, as noted above, all moments $\mathbb{E}[(S_t)^q]$ with $q>1$ explode in 
zero time \cite{Jourdain,LionsMusiela}.

One surprising feature of the Euler-discretized model is that the
positive integer moments $\mathbb{E}[(S_n)^q]$ with $q\geq 2$ 
have a sudden rapid increase for sufficiently large $\omega$ or simulation
time step $n$. This phenomenon is well known to practitioners, and
is known to appear in simulations of stochastic volatility models with 
log-normally distributed volatility. See \cite{JK} for an informal discussion.
For a discussion in the context of Monte Carlo simulations of the Hull-White
model with arbitrary correlation, see Sec.~5.2 in \cite{GHS}, where 
the explosion of the variance of the asset price is controlled by imposing an 
upper bound on the values of the stochastic volatility process $\sigma_t$.

This phenomenon introduces difficulties in the estimation of the 
the error of Monte Carlo pricing of payoffs $f(S_t)$, since a very large 
variance of the asset price $S_t$ may lead to a very large variance of the
payoff $f(S_t)$ \cite{PGbook}. Large values of the higher moments can have 
also direct relevance for pricing certain instruments. For example, in fixed 
income markets, the second moment of forward Libor rates is relevant
for pricing certain instruments such as Libor payments in arrears \cite{AP2007}.

The explosion of moments observed for the time discretization of the model
(\ref{SDES}), (\ref{SDEsigma}) appears in a wider class of models. It was
observed in a discrete time stochastic compounding process 
$x_{i+1} = x_i(1 + \rho e^{\sigma W_i - \frac12\sigma^2 t_i})$
with multipliers
proportional to a geometric Brownian motion \cite{RMP}. The positive integer
moments of the compounding process $\mathbb{E}[(x_n)^q]$ were observed to 
explode for sufficiently large values of the volatility $\sigma$ or time step 
$n$. We emphasize that the explosion is to very
large (but finite) values, and the moments remain strictly finite, as expected
in a discrete time setting. 
This phenomenon was studied in \cite{LD}
using large deviations theory, where it was shown that the moment explosion
is due to a discontinuous behavior of the Lyapunov exponents $\lambda_q = 
\lim_{n\to \infty} \frac{1}{n} \log\mathbb{E}[(x_n)^q]$. The existence of this
limit requires that the model parameters are rescaled with $n$ such that a 
certain combination $\beta = \frac12\sigma^2 \tau n^2$ is kept finite in the 
$n\to \infty$ limit. We will use in this paper a similar approach to study the
phenomenon of moment explosion in the Euler discretized version of the
stochastic volatility model (\ref{SDES}), (\ref{SDEsigma}).

We study in this paper the large $n$ asymptotics of $S_n$ as $n\to \infty$
in the model (\ref{SDES}), (\ref{SDEsigma}) discretized in time under 
various applications of the Euler-Maruyama scheme. In usual applications
of the Euler discretization one is interested in the $n\to \infty$ limit
of a very large number of time steps at fixed maturity $n\tau = T$. 
The limit considered here is different, as we take $n\to \infty$ at fixed
$\beta = \frac12 \omega^2 \tau n^2$ and $\rho = \sigma_0^2\tau$.
As explained in the next section, this covers the fixed-maturity limit
$n\tau = $ fixed, with small $\omega \sim O(n^{-1/2})$ and large 
$\sigma_0 \sim O(n^{1/2})$ (denoted as Regime 3 below). In addition, this
scaling includes other regimes, corresponding to large maturity 
$n\tau \sim O(n)$, small $\omega \sim O(n^{-1})$ (Regime 1), and small maturity 
$n\tau \sim O(n^{-1})$, large $\sigma_0 \sim O(n)$ (Regime 2). 

As mentioned, the Euler-Maruyama time discretized versions of the
stochastic volatility model (\ref{SDES}) can have different properties
from those of the continuous time model. The different $n\to \infty$ limit
considered here introduces also differences in the asymptotics of the
discrete time model compared to those of the usual Euler-Maruyama discretized
model. For example the limits $\lim_{n\to \infty} S_n$ may be different from
the corresponding large time limit of the continuous time model. 

The motivation for adopting the specific large $n$ scaling of this paper
is that the growth rates (Lyapunov exponents) of the positive integer moments 
$\lambda_q = \lim_{n\to \infty} \frac{1}{n} \log \mathbb{E}[(S_n)^q]$ 
exist and are finite under this scaling. 
We obtain explicit results for the Lyapunov exponents 
under this special scaling of the model parameters using large deviations 
theory, and study their functional dependence on the model parameters. 
We find that, under the application of the Euler-Maruyama 
scheme to $(S_t, \log\sigma_t)$, the Lyapunov exponents have non-analyticity 
in the model parameters which is similar to the phase transition studied in 
\cite{LD}. This phenomenon is responsible for the numerical explosions of the
moments of the asset price observed in numerical simulations of the model, 
which have implications for the Monte Carlo simulation of the model as discussed
above.

Section 2 introduces the different Euler-Maruyama schemes, and the scaling 
of the model parameters under the large $n$ limit considered in this paper. 
The asymptotic properties of the different schemes
are discussed separately: the Euler-Log Euler scheme (Sec.~3), the
Log Euler-Log Euler scheme (Sec.~4), Log Euler-Euler scheme (Sec.~5) and
Euler-Euler scheme (Sec.~6). Section 7 presents a detailed comparison
of the asymptotics of these schemes with the known results of the continuous
time model, and we demonstrate good agreement between the properties of the
phase transition for Euler-Log Euler scheme obtained from the asymptotic
analysis of Section 4 with exact numerical simulations of the model. This 
agreement demonstrates the practical usefulness of our asymptotic results,
as they give thresholds for the numerical explosion of the moments observed
in numerical simulations of the model.

\section{Euler-Maruyama Time Discretizations}
\label{sec:2}

Stochastic volatility models are usually simulated in practice in 
discrete time. Finite grid simulations discretize the time, asset, and 
volatility, while Monte Carlo simulation discretize only in time.

Consider the simulation of the one-dimensional It\^o stochastic differential 
equation for the $k$-dimensional stochastic vector $X_t$ for $t\in [0,T]$
\begin{equation}\label{SDE}
dX_t = \sigma(X_t,t) dW_t + b(X_t,t) dt
\end{equation}
with initial condition $X_0 = x_0$ and coefficients 
$\sigma(z,t), b(z,t):\mathbb{R} \times [0,T] \to \mathbb{R}$ which are
globally Lipschitz functions. This ensures the existence
of strong solutions \cite{KlPl,Lewis}. $W_t$ is a standard Brownian motion. 

We divide the interval $[0,T]$ into $n$ time steps with uniform length 
$\tau = T/n$. We would like to simulate the SDE (\ref{SDE}) on the sequence of
discrete time steps
\begin{equation}
0 = t_0 < t_1 < t_2 < \cdots < t_n = T \,.
\end{equation}
The simplest time discretization is the explicit Euler scheme, or
Euler-Maruyama discretization, which is defined by the recursion
\begin{equation}
X_{i+1} = X_i + \sigma(X_i, t_i) (W_{i+1} - W_i) + b(X_i, t_i) (t_{i+1}-t_i)
\end{equation}
with initial condition $X_0 = x_0$. 
The convergence properties of the Euler-Maruyama scheme were studied in
\cite{TT,BT1,Guyon}.

We can apply the Euler-Maruyama discretization either to $S_t$ or to the
log-price $\log S_t$, and same for $\sigma_t$.
Upon taking all possible combinations, we have altogether four 
possible discretizations: Euler-log-Euler scheme, 
Log-Euler-log-Euler scheme, Euler-Euler scheme and Log-Euler-Euler scheme. 
For instance in Euler-log-Euler scheme, the first ``Euler''
refers to discretization of the asset price $S_{t}$ and the second 
``log-Euler'' refers to the discretization of the volatility $\sigma_{t}$
and so on and so forth. 

\begin{definition}[Euler-log-Euler scheme]
The Euler-Maruyama scheme (or simply Euler-log-Euler scheme) for the discretization of 
the SDE (\ref{SDES}) is defined by
\begin{align}\label{EulerS}
&S_{i+1}= S_{i}\left(1+\sigma_{i}\sqrt{\tau}\varepsilon_{i}\right) 
\\
&\sigma_{i+1}=\sigma_{i}\exp\left(\omega(Z_{i+1}-Z_{i})-\frac{1}{2}\omega^{2}\tau\right).
\nonumber
\end{align}
\end{definition}

We denote for simplicity $\sigma_i = \sigma(t_i)$, and
$\Delta W_i = W_{i+1} - W_i = \sqrt{\tau} \varepsilon_i$ 
and $Z_{i+1}-Z_{i}=\sqrt{\tau}V_{i}$, 
with
$\varepsilon_k \sim N(0,1)$ i.i.d. Gaussian variables with mean $0$
and variance $1$
independent of i.i.d. Gaussian variables $V_{i}\sim N(0,1)$ with mean $0$ and variance $1$.

The log-Euler discretization for $\sigma_{t}$ coincides with the exact solution for the volatility at the 
start of the $(t_i, t_{i+1})$ time interval since $\omega$ is a constant
\begin{equation}
\sigma_{i}=\sigma_{0}e^{\omega Z_{i}-\frac{1}{2}\omega^{2}t_{i}}.
\end{equation}

This scheme has the disadvantage that the asset price $S_n$ can become negative.
An alternative scheme which preserves the positivity of the asset price 
is defined by applying the Euler scheme to $\log S_n$. This is given by
the following recursion:

\begin{definition}[Log-Euler-log-Euler scheme]
The Euler-Maruyama scheme for $\log S_n$ (or 
the log-Euler-log-Euler scheme) for the discretization of 
the SDE (\ref{SDES}) is defined by
\begin{align}\label{LogEulerS}
&S_{i+1}=S_{i} 
\exp\left(\sigma_i (W_{i+1} - W_i) - \frac12 \sigma_i^2 \tau \right)
\\
&\sigma_{i+1}=\sigma_{i}\exp\left(\omega(Z_{i+1}-Z_{i})-\frac{1}{2}\omega^{2}\tau\right).
\nonumber
\end{align}
\end{definition}

\begin{definition}[Euler-Euler scheme]
\begin{align}\label{EulerEulerS}
&S_{i+1}= S_{i}\left(1+\sigma_{i}\sqrt{\tau}\varepsilon_{i}\right) 
\\
&\sigma_{i+1}=\sigma_{i}\left(1+\omega\sqrt{\tau}V_{i}\right).
\nonumber
\end{align}
\end{definition}

\begin{definition}[Log-Euler-Euler scheme]
\begin{align}\label{LogEulerEulerS}
&S_{i+1}=S_{i} 
\exp\left(\sigma_{i}(W_{i+1}-W_i)-\frac{1}{2}\sigma_{i}^{2}\tau\right)
\\
&\sigma_{i+1}=\sigma_{i}\left(1+\omega\sqrt{\tau}V_{i}\right).
\nonumber
\end{align}
\end{definition}

Throughout the paper, we assume the following:
\begin{assumption}\label{assump5}
We assume that
\begin{align}
&\beta:=\frac{1}{2}\omega^{2}n^{2}\tau,\label{AssumpI}
\\
&\rho:=\sigma_{0}\sqrt{\tau},\label{AssumpII}
\end{align}
are fixed positive constants\footnote{Note the change of notation for 
$\beta$ from the equation (\ref{SABRdef}). In the remainder of the paper
$\beta$ will be defined as in (\ref{AssumpI}).}.
\end{assumption}
The model parameters $\sigma_0,\omega,\tau$ may depend on $n$.
Our results should hold also under the less restrictive conditions
$\beta =\lim_{n\to \infty} \frac{1}{2}\omega^{2}n^{2}\tau$,
$\rho = \lim_{n\to \infty}\sigma_{0}\sqrt{\tau}$, but for the sake
of simplicity we will use the definitions in Assumption~\ref{assump5}.

In usual applications of the Euler-Maruyama discretization 
the model parameters $\sigma_0,\omega$ are kept fixed as $\tau \to 0$.
However, as shown above, in the continuous time case, the moments
$\mathbb{E}[(S_T)^q]$ will be infinite for any $q>1$. On the other hand,
under the scalings (\ref{AssumpI}), (\ref{AssumpII}) the Lyapunov exponents
$\lambda_q$ of these moments will be seen to be finite. (The moments themselves 
will approach infinity as $n\to \infty$ since 
$\mathbb{E}[(S_n)^q] = e^{\lambda_q n + o(n)}$.)
Thus the scalings (\ref{AssumpI}), (\ref{AssumpII}) ensure that a well-defined
limit exists for the growth rates of the moments (Lyapunov exponents). 

Our assumptions \eqref{AssumpI}, \eqref{AssumpII} include the following 
asymptotic regimes:
\begin{itemize}
\item
Regime 1. When $\tau$ and $\sigma_{0}$ are fixed constants, the volatility of 
volatility $\omega$ is of order $O(\frac{1}{n})$ and the maturity $t_{n}=n\tau$ 
is of order $O(n)$. So this is the small volatility of volatility and large 
maturity regime.

\item
Regime 2. When $\tau$ is of the order $O(\frac{1}{n^{2}})$, then $t_n \sim O(n^{-1})$
such  that $\sigma_{0}$ is of the order $O(n)$ and $\omega \sim O(1)$. 
This corresponds to the large initial volatility and small maturity regime.

\item
Regime 3. When $\tau$ is of the order $O(\frac{1}{n})$, then $\sigma_{0}$ is of the 
order $O(n^{\frac12})$, and $\omega$ is of the order $O(n^{-\frac12})$.
This corresponds to the fixed maturity regime.

\end{itemize}

The accuracy of the asymptotic results for the Lyapunov exponent increases
with the number of steps $n$.
We illustrate the application of the asymptotic 
results with numerical examples in Section~\ref{sec:7}, where we show that
the asymptotic limit gives a reasonably good approximation for the finite $n$
result of the Lyapunov exponents $\lambda_{q,n}  = \frac{1}{n}\log \mathbb{E}[
(S_n)^q]$ for values of $n$ as low as 40.

\section{Euler-Log-Euler Scheme}
\label{EulerLogEulerSection}

\subsection{Lyapunov Exponents of the Moments}

We would like to compute the moments of the asset price $S_n$
in the time discretizations introduced in the previous section. We consider
in this section the scheme (\ref{EulerS}).
Let us recall that this scheme is defined by the recursion
\begin{align}
&S_{i+1}= S_{i}\left(1+\sigma_{i}\sqrt{\tau}\varepsilon_{i}\right) 
\\
&\sigma_{i+1}=\sigma_{i}\exp\left(\omega(Z_{i+1}-Z_{i})-\frac{1}{2}\omega^{2}\tau\right).
\nonumber
\end{align}
with initial condition $S_0, \sigma_0$.

The $q$th moments of $S_n$, $q\in\mathbb{N}$ are given by
\begin{align}
\mathbb{E}[(S_{n})^{q}]
&=S_{0}^{q}\mathbb{E}\left[\prod_{k=0}^{n-1}\left(1+\sigma_{0}\sqrt{\tau}
e^{\omega Z_{k}-\frac{1}{2}\omega^{2}t_{k}}\varepsilon_{k}\right)^{q}\right]
\\
&=S_{0}^{q}\mathbb{E}\left[\prod_{k=0}^{n-1}\sum_{j=0}^{[q/2]}\frac{q!}{(2j)!(q-2j)!}(2j-1)!!(\sigma_{0}\sqrt{\tau})^{2j}
e^{2j(\omega Z_{k}-\frac{1}{2}\omega^{2}t_{k})}\right],
\nonumber
\end{align}
where we took the expectations over $(\varepsilon_{k})_{0\leq k\leq n-1}$ and used the identity for the $q$th moments
of Gaussian random variables
\begin{equation}
\mathbb{E}\left[\left(1+\sigma\varepsilon_{1}\right)^{q}\right]
=\sum_{j=0}^{[q/2]}\frac{q!}{(2j)!(q-2j)!}(2j-1)!!\sigma^{2j}.
\end{equation}
Since $\rho=\sigma_{0}\sqrt{\tau}$ and
$\beta=\frac{1}{2}\omega^{2}n^{2}\tau$,
we can express the $q$th moments by
\begin{equation}
\mathbb{E}[(S_{n})^{q}] 
=S_{0}^{q}m(q)^{n}\mathbb{E}\left[\prod_{k=0}^{n-1}e^{\log\rho Y_{k}+\omega Z_{k}Y_{k}-\frac{1}{2}\omega^{2}t_{k}Y_{k}}\right],
\nonumber
\end{equation}
where
$(Y_{k})_{k=0}^{n-1}$ are i.i.d. random variables such that
\begin{equation}
\mathbb{P}(Y_{k}=2j)=\frac{1}{m(q)}\frac{q!}{(2j)!(q-2j)!}(2j-1)!!,
\qquad
j=0,1,2,\ldots,[q/2].
\end{equation}
$m(q)$ is the normalizer defined as
\begin{equation}
m(q)=\sum_{j=0}^{[q/2]}\frac{q!}{(2j)!(q-2j)!}(2j-1)!!.
\end{equation}

Now, observe that $0\leq Y_{k}\leq q$ and $0\leq\omega^{2}t_{k}\leq\omega^{2}t_{n}=\omega^{2}\tau n=\frac{2\beta}{n}$.
Therefore,
\begin{align}
\mathbb{E}\left[\prod_{k=0}^{n-1}e^{\log\rho Y_{k}+\omega Z_{k}Y_{k}}\right]e^{-\beta}
&\leq
\mathbb{E}\left[\prod_{k=0}^{n-1}e^{\log\rho Y_{k}+\omega Z_{k}Y_{k}-\frac{1}{2}\omega^{2}t_{k}Y_{k}}\right]
\\
&\leq\mathbb{E}\left[\prod_{k=0}^{n-1}e^{\log\rho Y_{k}+\omega Z_{k}Y_{k}}\right].
\nonumber
\end{align}
Hence
\begin{equation}
\lim_{n\rightarrow\infty}\frac{1}{n}\log\mathbb{E}[(S_{n})^{q}] 
=\log m(q)+\lim_{n\rightarrow\infty}\frac{1}{n}\log\mathbb{E}\left[\prod_{k=0}^{n-1}e^{\log\rho Y_{k}+\omega Z_{k}Y_{k}}\right],
\end{equation}
if the limit exists. Indeed, using Large Deviations theory we will show that 
this limit exists, and is given by the following result.

\begin{theorem}\label{EulerLogEulerThm}
For any $q\in\mathbb{N}$, $\lambda(\rho,\beta;q):=\lim_{n\rightarrow\infty}\frac{1}{n}\log\mathbb{E}[(S_{n})^{q}]$
exists and it can be expressed in terms of a variational formula
\begin{equation}
\lambda(\rho,\beta;q)=\sup_{g\in\mathcal{G}_{q}}
\left\{g(1)\log\rho+\beta\int_{0}^{1}(g(1)-g(x))^{2}dx-\int_{0}^{1}I_{q}(g'(x))dx\right\},
\end{equation}
where $I_{q}(x):=\sup_{\theta\in\mathbb{R}}\{\theta x-f_{q}(\theta)\}$, where
\begin{equation}\label{IqFormula}
f_{q}(\theta)
:=\log\mathbb{E}\left[\left(1+\varepsilon_{1}e^{\theta}\right)^{q}\right]=\log\sum_{j=0}^{[q/2]}\frac{q!}{(2j)!(q-2j)!}(2j-1)!!e^{2j\theta},
\end{equation}
where $\varepsilon_{1}$ is a normal random variable with mean $0$ and variance $1$ and
\begin{align}\label{GqFormula}
\mathcal{G}_{q}&:=\bigg\{g:[0,1]\rightarrow[0,2[q/2]], 
\text{$g(0)=0$, $g$ is absolutely continuous} 
\\
&\qquad\qquad\qquad\qquad\qquad
\text{and $0\leq g'\leq 2[q/2]$}\bigg\}.
\nonumber
\end{align}
\end{theorem}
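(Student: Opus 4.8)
The plan is to integrate out the Gaussian volatility increments conditionally on the discrete variables $(Y_k)$, and then analyze the surviving functional of the partial-sum path of the $(Y_k)$ using the sample-path large deviations theory of Mogulskii together with Varadhan's lemma. The sandwich bound already displayed shows that the deterministic drift $-\frac12\omega^2 t_k Y_k$ is uniformly bounded (since $0\le\omega^2 t_k\le 2\beta/n$ and $0\le Y_k\le q$), so it contributes $o(n)$ to the exponent and may be dropped; it therefore suffices to evaluate $\lim_{n}\frac1n\log\mathbb{E}[\prod_{k} e^{\log\rho\, Y_k+\omega Z_k Y_k}]$ and to recall that $\lim_n\frac1n\log\mathbb{E}[(S_n)^q]=\log m(q)+\lim_n\frac1n\log\mathbb{E}[\prod_k e^{\log\rho Y_k+\omega Z_k Y_k}]$.

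First I would integrate out the Brownian motion $Z$. Writing $Z_k=\sqrt\tau\sum_{i=0}^{k-1}V_i$ and interchanging the order of summation gives $\sum_{k=0}^{n-1}\omega Z_k Y_k=\omega\sqrt\tau\sum_{i=0}^{n-1}c_i V_i$ with $c_i=\sum_{k=i+1}^{n-1}Y_k$. Since the $V_i$ are independent standard Gaussians independent of $(Y_k)$, conditioning on $(Y_k)$ and computing the Gaussian moment generating function yields $\mathbb{E}[e^{\sum_k\omega Z_k Y_k}\mid (Y_k)]=\exp(\frac12\omega^2\tau\sum_i c_i^2)$. Using $\frac12\omega^2\tau=\beta/n^2$ and introducing the normalized partial-sum path $g_n(x)=\frac1n\sum_{k=0}^{\lfloor nx\rfloor-1}Y_k$, one has $\frac1n\sum_k Y_k=g_n(1)$ and $\frac{\beta}{n^2}\sum_i c_i^2=\beta\,\frac1n\sum_i(g_n(1)-g_n(\tfrac{i+1}{n}))^2$, which is a Riemann-sum approximation of $\beta\int_0^1(g_n(1)-g_n(x))^2dx$. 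Hence the remaining expectation equals $\mathbb{E}[\exp(n F(g_n)+o(n))]$, where $F(g)=g(1)\log\rho+\beta\int_0^1(g(1)-g(x))^2dx$ and the $o(n)$ is uniform because the paths are bounded.

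The main step is then the large deviations analysis of $\mathbb{E}[e^{nF(g_n)}]$. By Mogulskii's theorem, the paths $g_n$ built from the i.i.d. bounded increments $(Y_k)$ satisfy a large deviations principle in the sup-norm topology with speed $n$ and good rate function $J(g)=\int_0^1\Lambda^*(g'(x))dx$ for $g\in\mathcal G_q$ (and $+\infty$ otherwise), where $\Lambda^*$ is the Legendre transform of $\Lambda(\theta)=\log\mathbb{E}[e^{\theta Y_1}]$; the constraints $g(0)=0$ and $0\le g'\le 2[q/2]$ encode the support $\{0,2,\dots,2[q/2]\}$ of $Y_1$ and the monotonicity of partial sums, giving exactly the admissible class $\mathcal G_q$. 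Because $F$ is bounded above and continuous in the sup-norm topology (the map $g\mapsto\int_0^1(g(1)-g)^2$ is continuous on uniformly bounded paths by dominated convergence), Varadhan's lemma applies with no extra tail condition and gives $\lim_n\frac1n\log\mathbb{E}[e^{nF(g_n)}]=\sup_{g\in\mathcal G_q}\{F(g)-J(g)\}$.

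Finally I would reconcile the rate function with the statement. From $f_q(\theta)=\log(m(q)\,\mathbb{E}[e^{\theta Y_1}])=\log m(q)+\Lambda(\theta)$ one gets $I_q(x)=\Lambda^*(x)-\log m(q)$, so $J(g)=\int_0^1 I_q(g'(x))dx+\log m(q)$. Substituting into the variational formula and adding back the prefactor $\log m(q)$ from the reduction, the two copies of $\log m(q)$ cancel and leave precisely the claimed expression for $\lambda(\rho,\beta;q)$. The \emph{delicate points} are the uniform control of the error terms so that the exponent is genuinely $nF(g_n)+o(n)$ with the $o(n)$ independent of the realization, and verifying the continuity of $F$ in the exact topology in which Mogulskii's principle is stated, so that Varadhan's lemma can be invoked cleanly; boundedness of the increments $(Y_k)$ is what makes both of these routine rather than delicate.
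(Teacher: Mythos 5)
Your proposal is correct and follows essentially the same route as the paper's own proof: integrating out the Gaussian increments conditionally on the $(Y_k)$ to obtain the quadratic functional $\frac{\beta}{n^2}\sum_j\bigl(\sum_{k>j}Y_k\bigr)^2$, applying Mogulskii's sample-path LDP and Varadhan's lemma in the sup-norm topology, and cancelling the $\log m(q)$ terms between the prefactor and the shifted rate function. The details you flag as delicate (uniform $o(n)$ error from the drift and Riemann-sum approximation, continuity and boundedness of the functional) are handled in the paper exactly as you indicate.
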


\begin{proof}
The proof will be given in Section \ref{AppendixSection}.
\end{proof}

\begin{remark}
We have studied the limit $\lim_{n\rightarrow\infty}\frac{1}{n}\log\mathbb{E}[(S_{n})^{q}]$ for any non-negative integer $q$.
Observe that $S_{n}$ can take negative values, therefore $(S_{n})^{q}$ is not well-defined for non-integer $q$.
If $q$ is a negative integer, $\mathbb{E}[X^{q}]$ is not well defined for any Gaussian random variable $X$
since $\frac{1}{|x|^{|q|}}e^{-\frac{(x-\mu)^{2}}{2\sigma^{2}}}$ is not Lebesgue integrable for any open interval including $x=0$.
Therefore, we restrict ourselves to the study of non-negative integer moments.
\end{remark}

\begin{remark}
Recall that
\begin{equation}
\mathbb{E}\left[\left(1+\varepsilon_{1}e^{\theta}\right)^{q}\right]
=\sum_{j=0}^{[q/2]}\frac{q!}{(2j)!(q-2j)!}(2j-1)!!e^{2j\theta},
\end{equation}
which equals to the $q$th moment of a normal random variable with mean $1$ and variance $e^{2\theta}$.
It is known that
\begin{equation}
\mathbb{E}[N(\mu,\sigma^{2})^{q}]
=(-2\sigma^{2})^{\frac{q}{2}}U\left(\frac{-q}{2},\frac{1}{2},\frac{-\mu^{2}}{2\sigma^{2}}\right),
\end{equation}
where $U$ is a confluent hypergeometric function and the function's second branch cut
can be chosen by multiplying with $(-1)^{q}$. For general $q\geq 2$, it seems to be difficult
to get a more explicit expression. But it is possible at least for $q=2,3,4,5$.

(i) For $q=2$, 
\begin{equation}
I_{2}(x)=\sup_{\theta\in\mathbb{R}}\left\{\theta x-\log(1+e^{2\theta})\right\}=
\frac12 x\log x + \frac12 (2-x)\log(2-x) - \log 2\,.
\end{equation}

(ii) For $q=3$,
\begin{equation}
I_{3}(x)=\sup_{\theta\in\mathbb{R}}\left\{\theta x-\log(1+3e^{2\theta})\right\}
=\frac{x}{2}\log\left(\frac{x}{6-3x}\right)-\log\left(\frac{6}{6-3x}\right).
\end{equation}

(iii) For $q=4$,
\begin{align}
I_{4}(x)&=\sup_{\theta\in\mathbb{R}}\left\{\theta x-\log\left(1+\binom{4}{2}\cdot 1\cdot e^{2\theta}
+\binom{4}{4}\cdot 3\cdot 1\cdot e^{4\theta}\right)\right\}
\\
&=\sup_{\theta\in\mathbb{R}}\left\{\theta x-\log\left(1+6e^{2\theta}+3e^{4\theta}\right)\right\}
\nonumber
\\
&=\frac{x}{2}\log \eta_{4}(x)
-\log\left(1+6\eta_{4}(x)+3(\eta_{4}(x))^{2}\right),
\nonumber
\end{align}
where $\eta_{4}(x)=\frac{6(x-2)+\sqrt{6^{2}(2-x)^{2}+3\cdot 4x(4-x)}}{2\cdot 3(4-x)}$.

(iv) For $q=5$,
\begin{align}
I_{5}(x)&=\sup_{\theta\in\mathbb{R}}\left\{\theta x-\log\left(1+\binom{5}{2}\cdot 1\cdot e^{2\theta}
+\binom{5}{4}\cdot 3\cdot 1\cdot e^{4\theta}\right)\right\}
\\
&=\sup_{\theta\in\mathbb{R}}\left\{\theta x-\log\left(1+10e^{2\theta}+15e^{4\theta}\right)\right\}
\nonumber
\\
&=\frac{x}{2}\log \eta_{5}(x)
-\log\left(1+10\eta_{5}(x)+15(\eta_{5}(x))^{2}\right),
\nonumber
\end{align}
where $\eta_{5}(x)=\frac{10(x-2)+\sqrt{10^{2}(2-x)^{2}+15\cdot 4x(4-x)}}{2\cdot 15(4-x)}$.
\end{remark}

The formula for $\lambda(\rho,\beta;q)$ is complicated but the limits for large $\beta$, small $\beta$,
and large $\rho$ are more tractable. We have the following result.

\begin{proposition}\label{AsympProp}
(i) The $\beta \to 0$ limit for the Lyapunov exponent is
\begin{equation}
\lambda(\rho,0;q)=\sup_{0\leq x\leq 2[q/2]}\{x\log\rho-I_{q}(x)\}=f_{q}(\log\rho). 
\end{equation}

(ii) The Lyapunov exponent is bounded from above and below as 
\begin{align}\label{bound1}
&\beta\frac{4}{3}[q/2]^{2}+2[q/2]\log\rho-I_{q}(2[q/2])
\\
&\qquad\qquad\qquad
\leq\lambda(\rho,\beta;q)
\leq\frac{4}{3}\beta[q/2]^{2}
+f_{q}(\log\rho).
\nonumber
\end{align}

(iii) These bounds give the asymptotic behavior in the 
large $\beta$ limit
\begin{equation}\label{LargeBeta}
\lim_{\beta\rightarrow\infty}\frac{\lambda(\rho,\beta;q)}{\beta}=\frac{4}{3}[q/2]^{2},
\end{equation}
and in the large $\rho$ limit
\begin{equation}\label{LargeRho}
\lim_{\rho\rightarrow\infty}\left|\lambda(\rho,\beta;q)-\beta\frac{4}{3}[q/2]^{2}-2[q/2]\log\rho+I_{q}(2[q/2])\right|=0.
\end{equation}
\end{proposition}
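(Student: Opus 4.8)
The plan is to treat the three parts in sequence, using the variational formula of Theorem~\ref{EulerLogEulerThm} throughout and exploiting that $I_q$ and $f_q$ form a Legendre dual pair. For part (i), I set $\beta=0$ and write $g(1)=\int_0^1 g'(x)\,dx$, so the functional collapses to $\int_0^1\{g'(x)\log\rho - I_q(g'(x))\}\,dx$, an integral whose integrand depends on $g$ only through $g'$. Since $g'$ may be any measurable function valued in $[0,2[q/2]]$ and any such choice yields an admissible $g\in\mathcal{G}_q$ after integration, the supremum may be taken pointwise, the optimal $g'$ being the constant maximizer of $x\mapsto x\log\rho - I_q(x)$ on $[0,2[q/2]]$. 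This gives the first equality $\lambda(\rho,0;q)=\sup_{0\le x\le 2[q/2]}\{x\log\rho - I_q(x)\}$. The second equality $=f_q(\log\rho)$ follows from convex duality: $I_q$ is by definition the Legendre transform of the smooth convex function $f_q$, and because $f_q'(\theta)$ sweeps out $(0,2[q/2])$ as $\theta$ ranges over $\mathbb{R}$, the effective domain of $I_q$ is exactly $[0,2[q/2]]$, so the biconjugate $\sup_x\{x\theta - I_q(x)\}$ recovers $f_q(\theta)$ at $\theta=\log\rho$.

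For part (ii), both bounds isolate the $\beta$-dependent term. For the upper bound I use $g(1)-g(x)=\int_x^1 g'(t)\,dt\le 2[q/2](1-x)$, whence $\int_0^1(g(1)-g(x))^2\,dx\le (2[q/2])^2\int_0^1(1-x)^2\,dx=\frac43[q/2]^2$ uniformly in $g\in\mathcal{G}_q$; pulling this bounded term out of the supremum and invoking part (i) for the remainder yields $\lambda(\rho,\beta;q)\le\frac43\beta[q/2]^2+f_q(\log\rho)$. The matching lower bound comes from evaluating the functional at the single linear test function $g(x)=2[q/2]\,x\in\mathcal{G}_q$, for which $g'\equiv 2[q/2]$ and $g(1)-g(x)=2[q/2](1-x)$, so the integral again contributes $\frac13$; this produces exactly $\beta\frac43[q/2]^2+2[q/2]\log\rho - I_q(2[q/2])$.

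Part (iii) follows by squeezing the two bounds of part (ii). Dividing through by $\beta$ and letting $\beta\to\infty$, both bounds tend to $\frac43[q/2]^2$, giving the large-$\beta$ limit. For the large-$\rho$ statement, the lower bound shows the quantity inside the absolute value is nonnegative, while the upper bound controls it by $f_q(\log\rho)-2[q/2]\log\rho + I_q(2[q/2])$, so it remains to check this vanishes as $\rho\to\infty$. Writing $f_q(\log\rho)-2[q/2]\log\rho=\log(c_{[q/2]}+\sum_{j<[q/2]}c_j\rho^{2(j-[q/2])})\to\log c_{[q/2]}$, where $c_j=\frac{q!}{(2j)!(q-2j)!}(2j-1)!!$ are the coefficients of $f_q$, and identifying the boundary value $I_q(2[q/2])=\lim_{\theta\to\infty}(2[q/2]\theta-f_q(\theta))=-\log c_{[q/2]}$, the two logarithms cancel and the limit is $0$.

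I expect the main subtlety to lie in part (iii): the value $I_q(2[q/2])$ sits at the right endpoint of the effective domain of $I_q$, where the supremum defining the Legendre transform is attained only in the limit $\theta\to\infty$ rather than at an interior critical point. Establishing $I_q(2[q/2])=-\log c_{[q/2]}$ carefully, via monotonicity of $2[q/2]\theta-f_q(\theta)$ in $\theta$ so that the supremum equals the limit, and matching it against the leading large-$\rho$ asymptotics of $f_q$, is the step that makes the two bounds coincide asymptotically; the remaining computations are routine.
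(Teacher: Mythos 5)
Your proposal is correct and follows essentially the same route as the paper's proof: linear test functions for the lower bounds, the uniform bound $|g(1)-g(x)|\le 2[q/2](1-x)$ for the $\beta$-term in the upper bound, and in part (iii) the identification $I_q(2[q/2])=-\log c_{[q/2]}$ via the limit $\theta\to\infty$ matched against the large-$\rho$ expansion of $f_q(\log\rho)$. The only cosmetic differences are that you replace the paper's Jensen-inequality step in part (i) by a pointwise supremum over $g'$ (equivalent, since $I_q$ is evaluated under an integral and a constant maximizer is admissible), and you make explicit the monotonicity of $\theta\mapsto 2[q/2]\theta-f_q(\theta)$, which the paper leaves implicit.
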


\begin{proof}
The proof will be given in Section \ref{AppendixSection}.
\end{proof}

\subsection{Variational Problem}

Consider the variational problem appearing in Theorem~\ref{EulerLogEulerThm}. 
This can be expressed equivalently in terms of the functional
\begin{equation}\label{Lambdadef}
 \Lambda[f] \equiv
\log\rho \int_0^1  f(x) dx+ 
\beta \int_0^1  \left(\int_x^1 f(y)dy\right)^2 dx
- \int_0^1 I_q(f(x)) dx
\end{equation}
defined in terms of a function $f:[0,1]\to [0,2[q/2]]$
subject to the constraints
\begin{equation}
0 \leq f(x) \leq 2[q/2]\,.
\end{equation}
The function $f(x)$ is related to the function $g(x)$ appearing in  
Theorem \ref{EulerLogEulerThm} as $f(x) = g'(x)$.
The rate function $I_q(x)$ is given by
\begin{equation}
I_q(x) \equiv \mbox{sup}_{\theta\in \mathbb{R}} \{\theta x - f_q(\theta)\},
\end{equation}
where $f_q(\theta)$ is given by Equation~(\ref{IqFormula}).

The variational problem for $\Lambda[f]$ gives an integral equation
\begin{equation}\label{EulerLagrange}
\frac{\delta \Lambda[f]}{\delta f} = 
\log\rho + 2\beta  \int_0^1 K(y,z) f(z) dz - I'_q(f(y)) = 0\,,
\end{equation}
where the kernel $K(y,z)$ is $K(y,z) = \min\{y,z\}$.
This can be 
transformed into an ordinary differential equation by taking successive
derivatives with respect to $y$. 
The equation (\ref{EulerLagrange}) is written as
\begin{equation}\label{d0}
\log\rho + 2\beta \int_0^y z f(z) dz + 2\beta y \int_y^1 f(z) dz - I'_q(f(y)) = 0\,.
\end{equation}
Take one derivative with respect to $y$
\begin{equation}\label{d1}
2\beta \int_y^1 f(z) dz - \frac{d}{dy}I'_q(f(y)) = 0\,.
\end{equation}
Taking another derivative gives
\begin{equation}\label{LE}
2\beta f(y) +  \frac{d^2}{dy^2}I'_q(f(y)) = 0\,.
\end{equation}
The function $f(y)$ is given by the solution of this ordinary
differential equation with boundary conditions 
\begin{equation}\label{BC}
\log\rho = I'_q(f(0))\,, \qquad f'(1) = 0\,.
\end{equation}
The first condition follows by taking $y=0$ in (\ref{d0}), and  the second
condition is obtained by taking $y=1$ in (\ref{d1}).

We will prove next that the equation (\ref{LE}) can be formulated in 
such a way that it only requires the cumulant function $f_q(x)$ but not its
Legendre-Fenchel transform $I_q(x)$. Introduce a new
unknown function $h(y)$ defined by 
\begin{equation}
h(y) = I'_q(f(y))\,.
\end{equation}
This is inverted as
\begin{equation}\label{fvsh2}
f(y) = f_q(h(y)) \,.
\end{equation}
The proof of this relation follows from the observation that $I'_q(x) = \theta_*(x)$
where $\theta_*(x)$ is the value of $\theta$ which achieves the supremum in the
definition of the rate function $I_q(x)$. 

In conclusion, the equation satisfied by the function $h(y)$ 
is given by the following result.
\begin{proposition}\label{hProp}
The function $h(y)$ satisfies the second order differential equation
\begin{equation}\label{ODEh}
h''(y) = - 2\beta f_q(h(y)) = - V'(h(y))
\end{equation}
where we defined the potential function $V(h)$ as
\begin{equation}
V(h) = 2\beta f_q(h)\,.
\end{equation}
The boundary conditions for the equation (\ref{ODEh}) are
\begin{equation}\label{BCh}
h(0) = \log\rho\,,\qquad h'(1) = 0\,.
\end{equation}
\end{proposition}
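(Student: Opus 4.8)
The plan is to derive the second-order equation (\ref{ODEh}) directly from the Euler--Lagrange relation (\ref{LE}) by means of the substitution $h(y)=I'_q(f(y))$, together with the convex-duality identity pairing $f_q$ and $I_q$. First I would rewrite (\ref{LE}). By the very definition of $h$, the second term $\frac{d^2}{dy^2}I'_q(f(y))$ in (\ref{LE}) is exactly $h''(y)$, so (\ref{LE}) reads $h''(y)=-2\beta f(y)$. It then remains only to express $f(y)$ through $h(y)$.

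The key step is the relation (\ref{fvsh2}). I would establish it from the remark recorded just before (\ref{fvsh2}): for each $x$ in the interior of $[0,2[q/2]]$ the supremum defining $I_q(x)$ is attained at the unique $\theta_*(x)$ solving $f'_q(\theta_*(x))=x$, and by the envelope theorem $I'_q(x)=\theta_*(x)$. Hence $I'_q=(f'_q)^{-1}$, so applying $I'_q$ to $f(y)$ and inverting gives $f(y)=f'_q(h(y))$. Substituting this into $h''(y)=-2\beta f(y)$ yields $h''(y)=-2\beta f'_q(h(y))$, which is precisely the asserted equation $h''(y)=-V'(h(y))$ once $V$ is defined by $V(h)=2\beta f_q(h)$, since then $V'(h)=2\beta f'_q(h)$. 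This proves (\ref{ODEh}).

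The boundary conditions (\ref{BCh}) I would read off from (\ref{d0}) and (\ref{d1}), exactly as indicated after (\ref{BC}). Evaluating (\ref{d0}) at $y=0$ annihilates both integral terms and leaves $\log\rho=I'_q(f(0))=h(0)$. Evaluating (\ref{d1}) at $y=1$ gives $h'(1)=2\beta\int_1^1 f(z)\,dz=0$; equivalently, differentiating $h=I'_q\circ f$ produces $h'(y)=I''_q(f(y))f'(y)$, so the Neumann condition $f'(1)=0$ from (\ref{BC}) forces $h'(1)=0$.

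The main obstacle is the rigorous justification of the inversion $I'_q=(f'_q)^{-1}$ underlying (\ref{fvsh2}). This requires that $f_q$ be smooth and strictly convex with $f'_q$ a diffeomorphism onto the open interval $(0,2[q/2])$. For $q\geq 2$ this holds because $f_q$ is, up to an additive constant, the logarithmic moment generating function of the non-degenerate lattice variable $Y\in\{0,2,\dots,2[q/2]\}$ of (\ref{IqFormula}); hence $f_q\in C^\infty$, $f''_q>0$, and $f'_q$ maps $\mathbb{R}$ bijectively onto the interior of the convex hull of the support. One must also check that the admissible values $f(y)$ remain in this interior, since the endpoints $x=0$ and $x=2[q/2]$ correspond to $\theta_*\to\mp\infty$, and that $f$ is regular enough for the two successive differentiations carrying (\ref{EulerLagrange}) into (\ref{LE}) to be legitimate. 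These convexity and regularity matters, rather than the algebra, are where the care is needed.
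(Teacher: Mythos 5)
Your proof is correct and takes essentially the same route as the paper, whose one-line proof simply combines (\ref{LE}) with the duality relation (\ref{fvsh2}) and reads the boundary conditions off (\ref{d0}) and (\ref{d1}); you merely make explicit the convex-duality and regularity facts (smoothness and strict convexity of $f_q$, $f_q'$ a bijection onto $(0,2[q/2])$) that the paper leaves implicit. One point worth recording: your version is the internally consistent one, since the relation must read $f(y)=f_q'(h(y))$ and hence $h''(y)=-2\beta f_q'(h(y))=-V'(h(y))$, so the missing prime in (\ref{fvsh2}) and in the first equality of (\ref{ODEh}) is a typo in the paper --- as confirmed by the conserved quantity $E=\tfrac12[h'(y)]^2+V(h(y))$ invoked immediately afterward, which is constant along solutions only of $h''=-V'(h)$.
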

\begin{proof}
This follows directly from combining Equations (\ref{LE}) and (\ref{fvsh2}).
\end{proof}

This reduces  the variational problem to that of finding the solution of an
ordinary differential equation.
The equation (\ref{ODEh}) with boundary conditions (\ref{BCh}) is solved
straightforwardly. We start by noting that the quantity
\begin{equation}
E = \frac12 [h'(y)]^2 + V(h(y)) = V(h(1))
\end{equation}
is a constant of motion.
This can be used to express the derivative $h'(y)$ in terms of $h(y)$ as
\begin{equation}
\frac{dy}{dh} = \frac{1}{\sqrt{2(V(h(1)) - V(h(y))}} = 
\frac{1}{2\sqrt{\beta}} \frac{1}{\sqrt{f_q(h(1)) - f_q(h)}}
\end{equation}
so integrating from $h(0)$ to $h(y)$ we obtain
\begin{equation}\label{yvsh}
y = \frac{1}{2\sqrt{\beta}}  \int_{h(0)=\log\rho}^{h(y)}
\frac{dx}{\sqrt{f_q(h(1)) - f_q(x)}} \,.
\end{equation}

Taking $y=1$ in this relation gives an equation for $h(1)$ 
\begin{equation}\label{h1eq0}
F_q(h(1);\rho) = 2\sqrt{\beta}
\end{equation}
where 
\begin{equation}
F_q(a;\rho) = \int_{\log\rho}^a \frac{dx}{\sqrt{f_q(a) - f_q(x)}}\,.
\end{equation}
Once $h(1)$ is known, the function $h(y)$ can be found by substituting 
its value into (\ref{yvsh}) and solving for $h(y)$.

Another important simplification is that $\Lambda[h]$ can 
be expressed only in terms of $h(1)$. This reduces the variational 
problem to that of finding the extremum of a real function of one 
variable. This is given by the following result
\begin{proposition}\label{prop:LambdaSimple}
The functional $\Lambda[h]$ can be expressed as
\begin{equation}\label{Lambdasimple}
\Lambda[f] = f_q(h(1)) - \frac{1}{\sqrt{\beta}} 
\int_{\log\rho}^{h(1)} \sqrt{f_q(h(1)) - f_q(x)} dx\,.
\end{equation}
\end{proposition}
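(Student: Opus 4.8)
The plan is to evaluate the functional $\Lambda[f]$ in (\ref{Lambdadef}) on the extremal $f$ characterized by Proposition~\ref{hProp}, reducing each of its three integrals to expressions in $h$ alone, and then to finish with a single change of variable from $y$ to $h$. I would first record three pointwise identities valid along the extremal. From (\ref{d1}) the inner integral collapses, $\int_x^1 f(y)\,dy = \frac{1}{2\beta}h'(x)$, so the middle term of (\ref{Lambdadef}) becomes $\beta\int_0^1\bigl(\tfrac{1}{2\beta}h'(x)\bigr)^2 dx = \frac{1}{4\beta}\int_0^1 (h'(x))^2\,dx$. From (\ref{LE}) we have $f(x) = -\frac{1}{2\beta}h''(x)$. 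Finally, since $h(x)=I_q'(f(x))$ is precisely the maximizer $\theta_*(f(x))$ in the Legendre problem defining $I_q$, duality gives the last integrand as $I_q(f(x)) = h(x)f(x) - f_q(h(x))$.

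Substituting these into (\ref{Lambdadef}), the first term $\log\rho\int_0^1 f\,dx$ and the $-\int_0^1 hf\,dx$ piece arising from $-\int_0^1 I_q(f)\,dx$ combine into $\int_0^1(\log\rho - h(x))f(x)\,dx$. Writing $f=-h''/(2\beta)$ and integrating by parts, the boundary term $\bigl[(\log\rho - h)(-h'/2\beta)\bigr]_0^1$ vanishes at both endpoints by the boundary conditions (\ref{BCh})—at $y=1$ because $h'(1)=0$, and at $y=0$ because $h(0)=\log\rho$—leaving $\int_0^1(\log\rho - h)f\,dx = -\frac{1}{2\beta}\int_0^1 (h')^2\,dx$. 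Collecting this with the middle term and the residual $+\int_0^1 f_q(h)\,dx$ from the Legendre step yields the intermediate form $\Lambda[f] = -\frac{1}{4\beta}\int_0^1 (h'(x))^2\,dx + \int_0^1 f_q(h(x))\,dx$.

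At this stage I would invoke the constant of motion $\frac12(h')^2 + 2\beta f_q(h) = 2\beta f_q(h(1))$, which follows from $h'(1)=0$ and gives $(h'(x))^2 = 4\beta\bigl(f_q(h(1)) - f_q(h(x))\bigr)$. Substituting and writing $\int_0^1 f_q(h)\,dx = f_q(h(1)) - \int_0^1(f_q(h(1))-f_q(h))\,dx$ turns $\Lambda[f]$ into $f_q(h(1)) - 2\int_0^1\bigl(f_q(h(1))-f_q(h(x))\bigr)\,dx$. The final step is the change of variable $x\mapsto h$ along the extremal, using $dx = \frac{dh}{2\sqrt{\beta}\sqrt{f_q(h(1))-f_q(h)}}$ from (\ref{yvsh}), which carries $x\in[0,1]$ to $h\in[\log\rho,h(1)]$ and converts $\int_0^1(f_q(h(1))-f_q(h))\,dx$ into $\frac{1}{2\sqrt{\beta}}\int_{\log\rho}^{h(1)}\sqrt{f_q(h(1))-f_q(x)}\,dx$, producing exactly (\ref{Lambdasimple}).

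Everything past the substitutions is routine algebra; the one point requiring genuine care is the legitimacy of the change of variable, which needs $h$ to be monotone on $[0,1]$ so that $x\mapsto h$ is a valid reparametrization. This monotonicity is automatic: since $f\ge 0$ we have $h''=-2\beta f\le 0$, so $h'$ is nonincreasing, and with $h'(1)=0$ it stays nonnegative, making $h$ nondecreasing from $\log\rho$ to $h(1)$; moreover $f_q$ is increasing, so $f_q(h(1))-f_q(h(x))\ge 0$ and the square-root singularity at $x=1$ is integrable, guaranteeing convergence of all integrals. I expect this endpoint and monotonicity verification, rather than any of the integral manipulations, to be the only delicate part of the argument.
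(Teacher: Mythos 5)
Your proof is correct and follows essentially the same route as the paper's: both evaluate $\Lambda$ along the extremal by combining the Euler--Lagrange relations (\ref{d1}), (\ref{LE}) with the Legendre duality $I_q(f)=hf-f_q(h)$, invoke the conserved energy $\frac12 (h')^2+V(h)=V(h(1))$, and finish with the change of variable $x\mapsto h$ using (\ref{yvsh}); the paper merely organizes the algebra differently, splitting (\ref{Lam1}) into the three integrals $I_0$, $I_1$, $I_2$ rather than integrating by parts on $\int_0^1(\log\rho-h)f\,dx$ as you do. Your explicit verification that $h$ is nondecreasing (so the change of variable is legitimate and the square-root singularity integrable) is a point the paper's proof leaves implicit.
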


\begin{proof}
The proof is given in Section \ref{AppendixSection}.
\end{proof}

\begin{remark}
The variational problem for the functional $\Lambda[f]$ defined in
(\ref{Lambdadef}) generalizes the variational problem considered in \cite{LD} to 
a wider class of rate functions $I_q(x)$. The problem considered
in \cite{LD} corresponds to $f_q(x) = \log(1+e^x)$. 
The simpler result (\ref{Lambdasimple}) agrees with the result for
the functional in Proposition 7 of \cite{LD}, upon substituting
$f_q(x) = \log(1+e^x)$ into (\ref{Lambdasimple}).
\end{remark}

\subsection{Numerical Study and Analytical Approximations}

The Lyapunov exponent $\lambda(\rho,\beta;q)$ given by the solution of the
variational problem in Theorem~\ref{EulerLogEulerThm} displays the phenomenon 
of phase transition. This is manifested as a discontinuity of the partial
derivatives $\partial\lambda/\partial\rho$ and $\partial\lambda/\partial\beta$
at points $\beta_c(\rho)$ along a curve in the $(\rho,\beta)$ plane.

This phenomenon was studied in \cite{LD} for the case of the moment generating
function $f_q(x) = \log(1+e^x)$. It can appear when the equation (\ref{h1eq0})
has multiple solutions for $h(1)$, and the optimal value of $h(1)$ switches
between two of these solutions. For sufficiently small values of $\rho < \rho_c$
below a critical value $\rho_c$ the equation (\ref{h1eq0}) has multiple solutions,
while for $\rho > \rho_c$ the solution is unique.

The properties of the phase transition have
been studied in \cite{LD} using both numerical and analytical methods. 
The existence of a phase transition for sufficiently
small $\rho$ and its absence for sufficiently large $\rho$ have been proved
rigorously in Section~7 of \cite{LD}. These proofs can be extended immediately
to the case of the $q=2,3$ moments, for which the corresponding moment
generating functions $f_2(x) = \log(1+e^{2x})$ and $f_3(x) = \log(1+3 e^{2x})$
have a similar functional dependence to the function $f(x) = \log(1+e^x)$
considered in \cite{LD}.
For the higher moments $q>3$ the properties of the phase transition can be
studied numerically.

We present in this section a numerical study of the Lyapunov exponent and its
phase transition for the first few positive integer moments $q=2,\ldots,7$. 

\subsubsection{The second moment ($q=2$)}
\label{sec:q2}

The numerical solution of the variational problem can be simplified by taking 
as independent variable $d = \int_0^1 dx f(x) \in (0,2)$ instead of $h(1)$. 
This is related to $h(1)$ as shown in (\ref{I0def}).
Expressed in terms of this variable, the Lyapunov exponent of the $q=2$ moment 
is obtained by taking the supremum of the following functional over $d\in [0,2]$
\begin{align}
\Lambda_2(d) &= \beta d^2 + \log(1+\rho^2)
\\
& 
\qquad\qquad  - \beta d^3 (1+\rho^2)
\int_0^1 \frac{y^2 dy}{1+\rho^2 - e^{\beta d^2 (y^2-1)}}\,. \nonumber
\end{align}

We computed the Lyapunov exponent of the second moment $\lambda(\rho,\beta;2)$
using this relation. We show in Figure \ref{Fig:lambda2} plots of $\lambda(\rho,\beta;2)$
vs $\beta$ for several values of $\rho$. These results show that for sufficiently
small $\rho < \rho_c^{(2)}$, below a critical value $\rho_c^{(2)}=0.348$,
the Lyapunov exponent has discontinuous partial derivatives 
$\frac{\partial\lambda(\rho,\beta;2)}{\partial\beta}$ and
$\frac{\partial\lambda(\rho,\beta;2)}{\partial\rho}$
at a point $\beta_{\rm cr}^{(2)}(\rho)$. This corresponds to a phase transition
of the Lyapunov exponent.

The phase transition appears when the supremum over $d\in [0,2]$
of $\Lambda_2(d)$ switches between two different values of $d$.
We show in Figure~\ref{Fig:ptq2} the phase transition curve 
$\beta_{\rm cr}^{(2)}(\rho)$. 
This curve ends at a critical point with coordinates 
$(\rho_c^{(2)},\beta_c^{(2)}) = (0.348,0.787)$.

An analytical approximation for the function $\Lambda_2(d)$
can be obtained in the $\beta \gg 1$ limit.
For $\beta d^2 \gg 1$ the integral is approximated by Lemma 28 in \cite{LD} as
\begin{equation}
\int_0^1 \frac{y^2 dy}{1+\rho^2 - e^{\beta d^2 (y^2-1)}} = 
\frac{1}{1+\rho^2} \left\{
\frac13 - \frac{1}{2\beta d^2} \log\left( \frac{\rho^2}{1+\rho^2} \right) + O(\beta^{-2})
\right\}.
\end{equation}
This gives the approximation valid in the region $\beta d^2 \gg 1$.
\begin{equation}\label{q2cubic}
\Lambda_2(d) = - \frac13 \beta d^3 + \beta d^2 + \frac12 d\log\frac{\rho^2}{1+\rho^2}
+ \log(1+\rho^2) .
\end{equation}

\begin{lemma}\label{lemma:cubic}
The cubic polynomial in $d$ in (\ref{q2cubic}) reaches its supremum 
at $d_{*1}=0$ or 
$d_{*2}=1+\sqrt{1+\frac{1}{2\beta}\log\frac{\rho^2}{1+\rho^2} }$, 
according to the following conditions
\begin{equation}
\sup_d \Lambda_2(d) = 
\begin{cases}
\Lambda_2(d_{*2}) & \mbox{ for }
\beta > - \frac23 \log \frac{\rho^2}{1+\rho^2}\, , \\
\Lambda_2(d_{*1}) & \mbox{ for }
\beta < - \frac23 \log \frac{\rho^2}{1+\rho^2}\,.
\end{cases}
\end{equation}
\end{lemma}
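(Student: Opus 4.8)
The plan is to treat the right-hand side of (\ref{q2cubic}) as an elementary single-variable optimization of a cubic on the admissible interval $d\in[0,2]$: I would locate its stationary points, decide which is a genuine local maximum, and then set that maximum against the competing boundary value at $d=0$. Writing the linear coefficient as $c:=\frac12\log\frac{\rho^2}{1+\rho^2}$, I note that $c<0$ since $\frac{\rho^2}{1+\rho^2}<1$; the cubic is $\Lambda_2(d)=-\frac13\beta d^3+\beta d^2+cd+\log(1+\rho^2)$, with constant part $\Lambda_2(0)=\log(1+\rho^2)$.

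First I would differentiate, obtaining $\Lambda_2'(d)=-\beta d^2+2\beta d+c=-\beta(d-1)^2+(\beta+c)$. This vanishes at real points exactly when $\beta+c\ge0$, namely at $d_\pm=1\pm\sqrt{1+c/\beta}$; in particular $d_{*2}=d_+$. Setting $s:=\sqrt{1+c/\beta}\in(0,1)$, the negative leading coefficient $-\frac13\beta$ forces $\Lambda_2(d)\to-\infty$ as $d\to+\infty$, and $\Lambda_2''(d)=2\beta(1-d)$ shows $d_+$ is a local maximum and $d_-$ a local minimum. Since $s<1$ gives $d_+=1+s\in(1,2)$ (interior), while $\Lambda_2'(2)=c<0$ rules out the right endpoint, the supremum on $[0,2]$ equals $\max\{\Lambda_2(0),\Lambda_2(d_{*2})\}$; the degenerate case $\beta+c<0$ (no real stationary point) makes $\Lambda_2$ strictly decreasing, so there the maximum is trivially at $d_{*1}=0$.

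The crux is then to determine the sign of the difference $\Lambda_2(d_{*2})-\Lambda_2(0)$. The efficient route is to use the stationarity relation $d_+^2=2d_+ + c/\beta$ to reduce the degree: substituting it (and $d_+^3=d_+\cdot d_+^2$) into the cubic collapses $\Lambda_2(d_+)-\log(1+\rho^2)$ to $\frac23(\beta+c)d_+ +\frac13 c$. Expressing $\beta+c=\beta s^2$, $c=\beta(s^2-1)$ and $d_+=1+s$, this becomes $\frac{\beta}{3}(2s^3+3s^2-1)$, which factors cleanly as
\begin{equation*}
\Lambda_2(d_{*2})-\Lambda_2(0)=\frac{\beta}{3}(s+1)^2(2s-1).
\end{equation*}
Since $\beta>0$ and $(s+1)^2>0$, the sign is that of $2s-1$, so the supremum is attained at $d_{*2}$ precisely when $s>\frac12$, i.e. $1+c/\beta>\frac14$, i.e. $\beta>-\frac43 c=-\frac23\log\frac{\rho^2}{1+\rho^2}$, and at $d_{*1}=0$ otherwise; this threshold matches the claimed one, and the degenerate regime $\beta+c<0$ lies safely below it since $-c<-\frac43 c$.

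The only step requiring genuine care is this algebraic reduction and the factorization $2s^3+3s^2-1=(s+1)^2(2s-1)$ — everything else is routine calculus — so I expect that simplification, together with keeping track of the sign of $c$, to be the main obstacle.
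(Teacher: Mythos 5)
Your proposal is correct and follows essentially the same route as the paper's proof: the same quadratic analysis of $\Lambda_2'(d)$, the same use of the stationarity relation to collapse $\Lambda_2(d_{*2})-\Lambda_2(0)$ to a linear expression in $d_{*2}$, and the same factorization $2R^3+3R^2-1=(R+1)^2(2R-1)$ yielding the threshold $\beta=-\frac23\log\frac{\rho^2}{1+\rho^2}$. Your treatment is in fact slightly tidier, since you explicitly rule out the endpoint $d=2$ and verify that the degenerate regime $\beta+c<0$ is consistent with the stated threshold, points the paper passes over.
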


\begin{proof}
The derivative $\Lambda'_2(d)$ is a quadratic polynomial in $d$
\begin{equation}
\Lambda'_2(d) = - \beta d^2 + 2\beta d + \frac12 \log\frac{\rho^2}{1+\rho^2} .
\end{equation}
We are interested in the region $\rho < \rho_c^{(2)}<1$, so
we have $\Lambda'_2(0)=0$. For $\beta < - \frac{1}{2}
\log\frac{\rho^2}{1+\rho^2}$ we have in fact the stronger result that 
$\Lambda'_2(d)<0$ for all $d\in [0,2]$. This implies that
$\Lambda_2(d)$ is decreasing on $d=[0,2]$ and thus its supremum is reached
at $d_{*1}=0$.

For $\beta \geq - \frac{1}{2}
\log\frac{\rho^2}{1+\rho^2}$, the derivative $\Lambda'_2(d)$
becomes positive in a region $(d_1,d_2)$ centered on $d=1$, where
$d_{1,2}$ are the roots of quadratic equation $\Lambda'_2(d)=0$.
$\Lambda_2(d)$ has a local maximum at the largest root $d_2$.
This defines $d_{*2} = d_2$. (The smallest root $d_1$ 
corresponds to a minimum of $\Lambda_2(d)$.) 

Under what conditions is $d_{*2}$ a 
global supremum for $\Lambda_2(d)$? This condition can be written as
\begin{align}
\Lambda_2(d_{*2}) &= \frac23 \beta 
\left(1 + \frac{1}{2\beta} \log\frac{\rho^2}{1+\rho^2}\right) d_{*2} 
+ \frac16 \log\frac{\rho^2}{1+\rho^2} + \log(1+\rho^2) 
\\
&\geq 
\Lambda_2(d_{*1}) = \log(1+\rho^2)\,.
\nonumber
\end{align}
We used the fact that $d_{*2}$ is a solution of $\Lambda'_2(d)=0$ to eliminate
cubic and quadratic terms in $d_{*2}$ on the left side. 
This inequality can be further written as
\begin{equation}
2 R^3 + 3 R^2 -1 = (R+1)^2(2R-1) \geq 0
\end{equation}
where $R  = \sqrt{1+\frac{1}{2\beta}\log\frac{\rho^2}{1+\rho^2}}$. This
is satisfied for $R\geq \frac12$. Thus we conclude that $d_{*2}$
is a global supremum of $\Lambda_2(d)$
provided that $\beta \geq -\frac23 \log\frac{\rho^2}{1+\rho^2}$, and otherwise
the global supremum is realized at $d_{*1}=0$.
\end{proof}

\begin{figure}[t]
\centering
\includegraphics[width=4.0in]{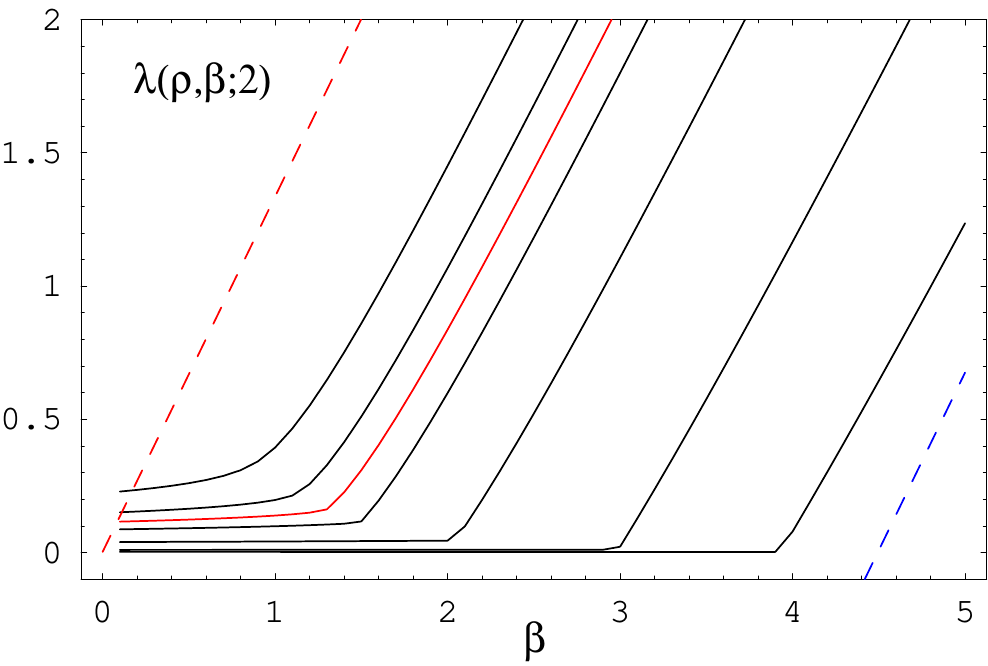}
\caption{Plots of the Lyapunov exponent $\lambda(\rho,\beta;2)$ of the $q=2$
moment of the asset price in  the Euler-Log-Euler scheme vs $\beta$ at fixed
$\rho = 0.05,0.1,0.2,0.3,0.348,0.4,0.5$ (solid curves, from bottom to top).
The solid red curve corresponds to $\rho_c^{(2)}= 0.348$.
The upper and lower bounds in (\ref{bound1}) for $\rho = 0.05$
are shown as the dashed blue and red curves, respectively.}
\label{Fig:lambda2}
\end{figure}

The $\beta \gg 1$ region corresponds to $\rho \ll 1$, which means that as 
$\rho \to 0$, the phase transition curve of the Lyapunov exponent of the $q=2$ 
moment is given approximatively by
\begin{equation}\label{q2app}
\lim_{\rho \to 0}\beta_{\rm cr}^{(2)}(\rho) = - \frac23 \log \frac{\rho^2}{1+\rho^2}\,.
\end{equation}
Along the phase transition curve, the optimizer takes the values
$d_{*1}=0, d_{*2}=\frac32$.

\subsubsection{The third moment ($q=3$)}

Proceeding in analogy with the $q=2$ moment, we take as independent
variable $d = \int_0^1 dx f(x) \in (0,2)$ instead of $h(1)$.
The Lyapunov exponent of the $q=3$ moment is given by the supremum over $d$
of the functional
\begin{align}\label{Lambda3}
\Lambda_3(d) &= \beta d^2 + \log(1+3\rho^2)\\
&\qquad\qquad  -  \beta d^3 (1+ 3\rho^2)
\int_0^1 \frac{y^2 dy}{1+3\rho^2 - e^{\beta d^2 (y^2-1)}}\,. \nonumber
\end{align}

We present in Fig.~\ref{Fig:lambda3} plots of the Lyapunov exponent of the 
third moment
$\lambda(\rho,\beta;3)$ vs $\beta$ for several values of $\rho$. The phase
transition curve $\beta_{\rm cr}^{(3)}(\rho)$ giving the points of 
discontinuity of the partial derivatives of $\lambda(\rho,\beta;3)$ 
is shown in Figure~\ref{Fig:ptq2}.

The asymptotic expression of the phase transition curve for $\rho \to 0$ can
be found in closed form, using a similar approach as for the $q=2$ case.
In the $\beta \gg 1$ limit the integral in (\ref{Lambda3}) is approximated 
as \cite{LD}
\begin{equation}
\int_0^1 \frac{y^2 dy}{1+3\rho^2 - e^{\beta d^2 (y^2-1)}} 
= \frac{1}{1+3\rho^2} \left\{
\frac13 - \frac{1}{2\beta d^2} \log\Big( \frac{3\rho^2}{1+3\rho^2} \Big) + o(\beta^{-2})
\right\}.
\end{equation}
This gives the following approximation for the functional $\Lambda_3(d)$, 
which is valid in the region $\beta d^2 \gg 1$.
\begin{equation}
\Lambda_3(d) = - \frac13 \beta d^3 + \beta d^2 + 
\frac12 d\log\frac{3\rho^2}{1+3\rho^2}
+ \log(1+3\rho^2) \,. 
\end{equation}

\begin{figure}[t]
\centering
\includegraphics[width=4.0in]{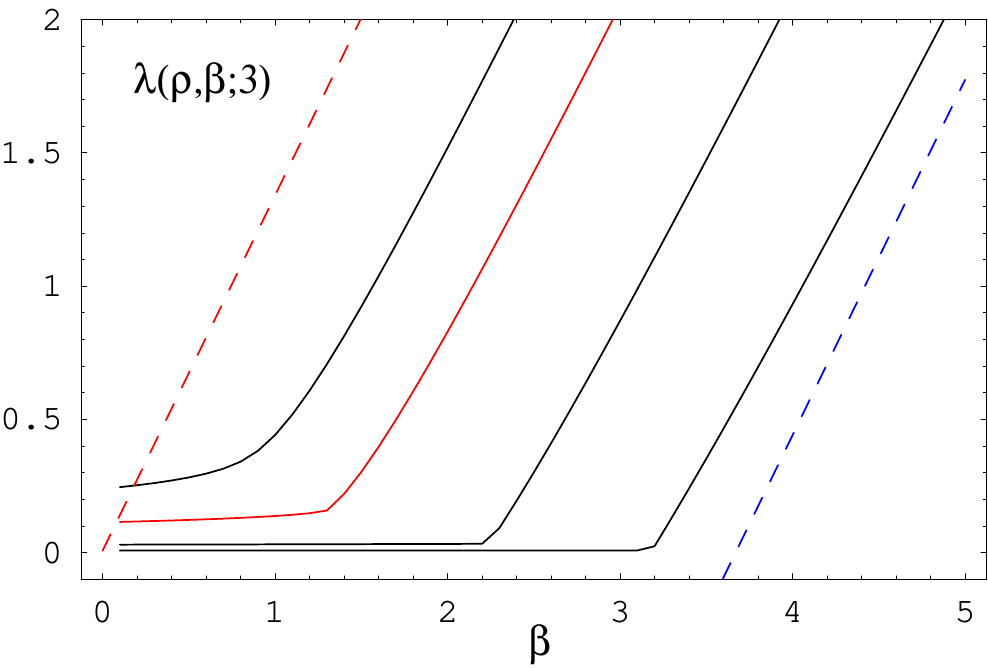}
\caption{Plots of the Lyapunov exponent $\lambda(\rho,\beta;3)$ of the $q=3$
moment of the asset price $S_n$ in the Euler-Log-Euler scheme vs $\beta$ at fixed
$\rho = 0.05,0.1,0.201,0.3$ (solid curves, from bottom to top).
The red curve corresponds to $\rho_c^{(3)}=0.201$.
The upper and lower bounds in (\ref{bound1}) for $\rho = 0.05$
are shown as the dashed red and blue curves, respectively.}
\label{Fig:lambda3}
\end{figure}

A study of this cubic polynomial in $d$ similar to that presented above
for $q=2$ in Lemma~\ref{lemma:cubic} shows that it reaches its supremum at 
$d_{*1}=0$ or 
$d_{*2}=1+\sqrt{1+\frac{1}{2\beta}\log\left(\frac{3\rho^2}{1+3\rho^2}\right)}$ 
according to the following conditions
\begin{equation}
\sup_d \Lambda_3(d) = 
\begin{cases}
\Lambda_3(d_{*2}) & \text{ for } 
\beta > - \frac23 \log \left(\frac{3\rho^2}{1+3\rho^2}\right) \, ,
\\
\Lambda_3(d_{*1}) & \text{ for } 
\beta < - \frac23 \log \left(\frac{3\rho^2}{1+3\rho^2}\right) \,.
\end{cases}
\end{equation}
This means that in the $\beta \gg 1 $ limit, or equivalently $\rho \to 0$, 
the phase transition curve of the Lyapunov exponent of the $q=3$ moment is 
given approximatively by
\begin{equation}\label{q3app}
\lim_{\rho \to 0}\beta_{\rm cr}^{(3)}(\rho) = 
- \frac23 \log \frac{3\rho^2}{1+3\rho^2}\,.
\end{equation}

\begin{table}[ht]
\caption{Critical parameters for the first few moments 
$(\rho_c^{(q)},\beta_c^{(q)})$. }
\centering 
\begin{tabular}{c |c  c} 
\hline\hline 
$q$ & $\rho_c^{(q)}$ & $\beta_c^{(q)}$ \\
\hline 
2 & 0.348 & 0.787 \\ 
3 & 0.201 & 0.787 \\
4 & 0.187 & 1.163  \\
5 & 0.154 & 1.368  \\
6 & 0.140 & 1.656  \\
7 & 0.127 & 1.913   \\
\hline 
\end{tabular}
\label{table:cr} 
\end{table}

\subsubsection{Higher Moments ($q \geq 4$)}

The phase transition of the Lyapunov exponents of the higher positive
integer moments $q \geq 4$ can be studied numerically using the solution 
of the variational problem in Proposition~\ref{prop:LambdaSimple}. 
We show in Figure~\ref{Fig:ptq2}
the phase transition curves for the moments $q=4,5,6,7$, and the corresponding
critical parameters $(\rho_c^{(q)}, \beta_c^{(q)})$ are listed in Table~\ref{table:cr}.

\subsection{Mean-field Approximation}

A simple lower bound for the Lyapunov exponent is obtained by using a constant
Ansatz for the function $f(x)=a$.  This gives a lower bound for the supremum of 
the functional $\Lambda[f]$. This assumption is equivalent to a mean-field
approximation, and it has also a phase transition. We discuss here the properties
of the phase transition in the mean-field approximation.

The functional $\Lambda_q[a]$ is given in this approximation by
\begin{equation}
\Lambda_q^{\rm MF}[a] = \log\rho a + \frac13\beta a^2 - I_q(a)\,.
\end{equation}

For $q=2$ and $q=3$ the variational problem for the mean-field approximation
leads to the Curie-Weiss theory \cite{Ellis}. 
This follows from the observation that we have
$I_2(a) = I_0(a/2)$ and $I_3(a) = I_0(a/2) - \frac12 a \log 3$, where 
$I_0(x)  = x\log x + (1-x)\log(1-x)$ is the rate function for a Bernoulli 
random variable $X$ taking values $X:(0,1)$ with probabilities $(\frac12,\frac12)$.

Upon redefinition $a/2 \to a$ and appropriate redefinition of $\rho$, the
solution of the variational problem reduces to the well-known Curie-Weiss mean-field
theory, which leads to the van der Waals equation of state for a lattice gas
with uniform interaction energy, and describes also the exact solution of an 
edge and $2-$star model in the exponential random graph models \cite{AZ,CD,RY}. 

A brief summary of the main predictions of the Curie-Weiss model can be found
for example in \cite{LD}, Proposition 9. In particular, 
the phase transition curve can be found in closed form. For 
$q=2$ this is given by
\begin{equation}\label{vdWq2}
\beta_{\rm cr}^{\rm (2) MF}(\rho) = -\frac32\log\rho
\end{equation}
and the critical point has coordinates
\begin{equation}
\rho_c^{\rm (2) MF} = e^{-1} = 0.368\,,\qquad \beta_c^{\rm (2) MF} = \frac32\,.
\end{equation}

A similar analysis for $q=3$  gives the phase transition curve
\begin{equation}\label{vdWq3}
\beta_{\rm cr}^{\rm (3) MF}(\rho) = -\frac34\log(3\rho^2)
\end{equation}
and the critical point 
\begin{equation}
\rho_c^{\rm (3) MF} = \frac{1}{\sqrt3 e} = 0.212\,,\qquad 
\beta_c^{\rm (3) MF} = \frac32\,.
\end{equation}

The numerical values of these parameters are very close to the exact 
critical parameters $(\rho_c^{(q)}, \beta_c^{(q)})$ in Table~\ref{table:cr}.
We also show the phase transition curves $\beta_{\rm cr}^{(q)}$ with $q=2,3$
in the mean-field approximation in Figure~\ref{Fig:ptq2} as the dashed blue 
curves. We note that they are close to the exact phase transition curve. 
We conclude that the mean-field approximation gives a simple qualitative 
description and represents a sufficiently accurate approximation for the
phase transition curve for the $q=2,3$ moments.

\begin{figure}[t]
\centering
\includegraphics[width=4.0in]{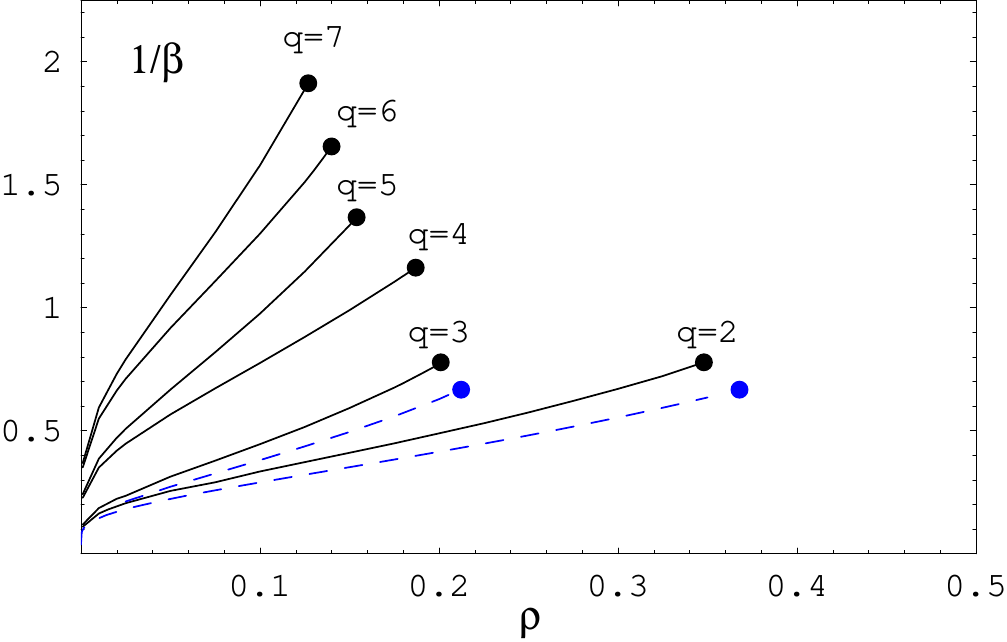}
\caption{The phase transition curves for the Lyapunov exponents of the 
$q=2,3,4,5,6,7$ moments $\mathbb{E}[(S_n)^q)]$ in the Euler-Log-Euler
discretization (black curves) in coordinates $(\rho,1/\beta)$. 
The dashed blue curves show the 
phase transition curves for $q=2,3$ in the mean-field approximation.}
\label{Fig:ptq2}
\end{figure}

\subsection{Almost Sure Limit and Fluctuations}

We have already studied the Lyapunov exponent 
$\lim_{n\rightarrow\infty}\frac{1}{n}\log\mathbb{E}[(S_{n})^{q}]$
for any positive integer $q$. In this section, we investigate the almost sure 
asymptotic behavior of $S_{n}$. Recall that
\begin{equation}
S_{n}=S_{0}\prod_{i=0}^{n-1}(1+\sigma_{0}\sqrt{\tau}e^{\omega Z_{i}-\frac{1}{2}\omega^{2}t_{i}}\varepsilon_{i}).
\end{equation}
Thus, $S_{n}$ may become negative. Hence we study the asymptotic behavior of $|S_{n}|$ instead. We have the following result.

\begin{proposition}\label{ASIThm}
The asset price $S_n$ in the Euler-Log-Euler time discretization has the
limit
\begin{equation}
\lim_{n\rightarrow\infty}\frac{1}{n}\log|S_{n}| =
\mathbb{E}\left[\log\left|1+\rho\varepsilon_{1}\right|\right]
\qquad\text{a.s.}
\end{equation}
\end{proposition}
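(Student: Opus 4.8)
The plan is to reduce the statement to a law of large numbers for the i.i.d.\ increments $\varepsilon_i$, after showing that the stochastic-volatility factor $e^{\omega Z_i - \frac12\omega^2 t_i}$ collapses to $1$ under the scaling of Assumption~\ref{assump5}. Writing $\rho = \sigma_0\sqrt\tau$ and $a_i := \omega Z_i - \frac12\omega^2 t_i$, we have
\[
\frac1n\log|S_n| = \frac1n\log|S_0| + \frac1n\sum_{i=0}^{n-1}\log\bigl|1+\rho e^{a_i}\varepsilon_i\bigr|,
\]
and the first term is negligible. I would realize all models on one probability space by fixing i.i.d.\ sequences $(\varepsilon_i)$ and $(V_i)$ (independent of each other), with $Z_{i+1}-Z_i = \sqrt\tau V_i$; then $\omega\sqrt\tau = \sqrt{2\beta}/n$ gives $a_i = \frac{\sqrt{2\beta}}{n}\Sigma_i - \frac{\beta i}{n^2}$ with $\Sigma_i = \sum_{k<i}V_k$. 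The first step is to prove that $A_n := \max_{0\le i\le n-1}|a_i|\to 0$ almost surely; this follows from the strong law $\Sigma_i/i\to 0$ (which forces $\max_{i\le n}|\Sigma_i| = o(n)$) together with the deterministic bound $\frac{\beta i}{n^2}\le\beta/n$.

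With $A_n\to 0$ in hand, the target reduces to two independent pieces. By the ordinary strong law of large numbers, $\frac1n\sum_{i=0}^{n-1}\log|1+\rho\varepsilon_i|\to\mathbb{E}[\log|1+\rho\varepsilon_1|]$ almost surely; here one first checks that this expectation is finite, since near $\varepsilon = -1/\rho$ the logarithmic singularity is integrable against the Gaussian density and the tail grows only like $\log|\varepsilon|$. It then remains to show that the discrepancy
\[
D_n := \frac1n\sum_{i=0}^{n-1}\Bigl(\log|1+\rho e^{a_i}\varepsilon_i| - \log|1+\rho\varepsilon_i|\Bigr)
\]
tends to $0$ almost surely. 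I would do this by conditioning on $\mathcal{F}_V := \sigma(V_k : k\ge 0)$, under which the $a_i$ become constants while the $\varepsilon_i$ remain i.i.d.\ $N(0,1)$ and independent of the $a_i$. Splitting $D_n$ into its conditional mean and its conditional fluctuation, the mean part is $\frac1n\sum_i\bigl(m(a_i) - m(0)\bigr)$ with $m(a) := \mathbb{E}[\log|1+\rho e^a\varepsilon_1|]$; since $a\mapsto m(a)$ is Lipschitz near $0$, this is bounded by $C A_n\to 0$.

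The main technical estimate, and the step I expect to be the real obstacle, is the control of $\mathbb{E}[\psi_a(\varepsilon_1)^2]$, where $\psi_a(\varepsilon) := \log|1+\rho e^a\varepsilon| - \log|1+\rho\varepsilon|$, uniformly for small $a$. The difficulty is that each summand is unbounded: $\psi_a$ has logarithmic singularities at the two nearby points $\varepsilon = -1/\rho$ and $\varepsilon = -e^{-a}/\rho$. Writing $\varepsilon = -1/\rho + t$ one finds $\psi_a(\varepsilon)\approx\log|1 - a/(\rho t)|$ near the singularity, and the substitution $u = \rho t / a$ reveals $\mathbb{E}[\psi_a^2] = O(|a|)$ because $\int_{\mathbb{R}}\log^2|1 - 1/u|\,du < \infty$; the complementary region, bounded away from both singularities, contributes only $O(a^2)$ by smoothness. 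Granting $\mathbb{E}[\psi_a(\varepsilon_1)^2]\le C|a|$ for $|a|\le 1$, the conditional variance of the fluctuation part is at most $\frac1{n^2}\sum_{i} C|a_i|\le C A_n/n$, so $\mathbb{E}[(D_n^{\mathrm{fluc}})^2]\le \frac{C}{n}\mathbb{E}[A_n]=O(n^{-3/2})$, using $\mathbb{E}[A_n]=O(n^{-1/2})$ from Doob's $L^2$ maximal inequality for $\Sigma$. This bound is summable in $n$, so Markov's inequality and the Borel--Cantelli lemma give $D_n^{\mathrm{fluc}}\to 0$ almost surely. Combining the three pieces yields the claimed almost sure limit. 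The only delicate point throughout is the integrable-singularity analysis underlying the $L^2$ and Lipschitz estimates; everything else is a standard law-of-large-numbers and second-moment Borel--Cantelli argument.
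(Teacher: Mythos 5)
Your proof is correct in substance and reaches the stated limit, but it handles the key technical step by a genuinely different route than the paper. Both arguments share the same skeleton: realize all $n$ on one probability space, show that the volatility factor degenerates (your $A_n=\max_{i\le n-1}|a_i|\to 0$ a.s.\ is the paper's ``$\sigma_n\to\sigma_0$ a.s.'', proved the same way from $\Sigma_i/i\to 0$ and the deterministic drift bound), check $\mathbb{E}\bigl|\log|1+\rho\varepsilon_1|\bigr|<\infty$, and apply the strong law to $\frac1n\sum_i\log|1+\rho\varepsilon_i|$. The divergence is in controlling the discrepancy $D_n$. The paper works pathwise: once $|\sigma_i-\sigma_0|\le\epsilon$ for $i\ge N$, it dominates each term via
\begin{equation*}
\Bigl|\log\Bigl|1+\tfrac{(\sigma_i-\sigma_0)\sqrt{\tau}\,\varepsilon_i}{1+\sigma_0\sqrt{\tau}\,\varepsilon_i}\Bigr|\Bigr|
\le \log\Bigl(1+\epsilon\Bigl|\tfrac{\sqrt{\tau}\,\varepsilon_i}{1+\sigma_0\sqrt{\tau}\,\varepsilon_i}\Bigr|\Bigr),
\end{equation*}
applies the strong law to the dominating i.i.d.\ sequence, and lets $\epsilon\to 0$ by monotone convergence. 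You instead condition on the volatility path and split $D_n$ into a conditional mean (Lipschitz continuity of $m$) and a conditional fluctuation (the $L^2$ estimate $\mathbb{E}[\psi_a^2]=O(|a|)$ plus Chebyshev and Borel--Cantelli). Your route is longer, but it confronts head-on the point that the paper's domination step passes over: the elementary inequality $|\log|1+y||\le\log(1+|y|)$ invoked by the paper fails when $|1+y|$ is small, which here happens exactly when $\varepsilon_i$ is near the singularity $-1/(\sigma_i\sqrt{\tau})$ of the shifted logarithm. Your $O(|a|)$ bound, obtained by integrating $\log^2$ through the two nearby singularities, is precisely the estimate needed to control that region rigorously, so the extra technical work buys a tighter argument where the paper's is loosest.

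One small gap to patch: the bound $\mathbb{E}[\psi_a(\varepsilon_1)^2]\le C|a|$ is true only for bounded $|a|$ (for large positive $a$ the left side grows like $a^2$), yet you apply it to every $a_i$ when you pass to $\mathbb{E}[(D_n^{\mathrm{fluc}})^2]\le\frac{C}{n}\mathbb{E}[A_n]$; the event $\{A_n>1\}$ has positive probability for every $n$. Two one-line repairs: either establish the global bound $\mathbb{E}[\psi_a^2]\le C(|a|+a^2)$, which together with $\mathbb{E}[a_i^2]=O(1/n)$ and $\mathbb{E}|a_i|=O(n^{-1/2})$ still yields $\mathbb{E}[(D_n^{\mathrm{fluc}})^2]=O(n^{-3/2})$, summable; or work with $\tilde D_n:=D_n^{\mathrm{fluc}}\mathbf{1}_{\{A_n\le 1\}}$ (the indicator is $\mathcal{F}_V$-measurable, so your conditional variance bound applies on that event), conclude $\tilde D_n\to 0$ a.s.\ by Borel--Cantelli, and note that a.s.\ $D_n^{\mathrm{fluc}}=\tilde D_n$ for all large $n$ because $A_n\to 0$ a.s. With either patch the proof is complete.
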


\begin{proof}
The proof will be given in Section \ref{AppendixSection}.
\end{proof}

The dependent on $\rho$ of the expectation appearing  in this result is 
shown in Figure~\ref{Fig:expint}.

The almost sure limit is governed by the typical events, i.e., law of large numbers
and the Lyapunov exponent is governed by the rare events, i.e., large deviations. 
Since the law of large numbers for $\frac{1}{n}\log|S_{n}|$ has been established in Proposition \ref{ASIThm},
it is reasonable to study the fluctuations around the almost sure limit, i.e., central limit theorem. 
We have the following result.

\begin{proposition}\label{CLTII}
For the Euler-Log-Euler time discretization we have the following limit
\begin{align}
&\frac{\log|S_{n}|-\mathbb{E}[\log|1+\rho\varepsilon_{1}|]n}{\sqrt{n}}
\\
&\qquad\qquad\qquad\qquad\rightarrow 
N\left(0,\frac{2}{3}(H'(0))^{2}\rho^{2}\beta
+\text{Var}[\log|1+\rho\varepsilon_{1}|]\right),
\nonumber
\end{align}
in distribution as $n\rightarrow\infty$, where $H(x):=\mathbb{E}[\log|1+\frac{\varepsilon_{1}x}{1+\rho\varepsilon_{1}}|]$.
\end{proposition}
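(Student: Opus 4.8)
The plan is to write $\log|S_n| = \log|S_0| + \sum_{i=0}^{n-1}\log|1+\rho e^{\xi_i}\varepsilon_i|$, with $\xi_i := \omega Z_{t_i} - \frac12\omega^2 t_i$, and to separate the two independent sources of randomness—the innovations $\varepsilon_i$ and the volatility path $Z$—by conditioning on $\mathcal{F}^Z := \sigma(Z_s : s\geq 0)$. Setting $\phi(\xi) := \mathbb{E}[\log|1+\rho e^\xi\varepsilon_1|]$, I decompose the centered sum as $A_n + B_n$, where
\[
A_n := \sum_{i=0}^{n-1}\bigl(\log|1+\rho e^{\xi_i}\varepsilon_i| - \phi(\xi_i)\bigr),
\qquad
B_n := \sum_{i=0}^{n-1}\bigl(\phi(\xi_i) - \phi(0)\bigr),
\]
and note that $\phi(0) = \mathbb{E}[\log|1+\rho\varepsilon_1|]$ is precisely the almost sure rate from Proposition~\ref{ASIThm}. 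Here $A_n$ is, conditionally on $\mathcal{F}^Z$, a sum of independent mean-zero terms, while $B_n$ is $\mathcal{F}^Z$-measurable; hence $\mathbb{E}[A_nB_n]=0$ and the two contributions are orthogonal. I expect them to be asymptotically independent, so that the two variances in the statement simply add.

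For $B_n$ the key exact identity is the factorization $1+\rho e^\xi\varepsilon = (1+\rho\varepsilon)(1 + \frac{\varepsilon\rho(e^\xi-1)}{1+\rho\varepsilon})$, which gives $\phi(\xi)-\phi(0) = H(\rho(e^\xi-1))$ with $H$ as in the statement. Since $H(0)=0$ and $H'(0)=\mathbb{E}[\varepsilon_1/(1+\rho\varepsilon_1)]$, a Taylor expansion yields $\phi(\xi_i)-\phi(0) = H'(0)\rho\,\omega Z_{t_i} + (\text{lower order})$, so that $B_n$ equals, to leading order, the Gaussian linear functional $H'(0)\rho\,\omega\sum_i Z_{t_i}$. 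A direct covariance computation gives $\omega^2\,\mathrm{Var}(\sum_{i=0}^{n-1}Z_{t_i}) = \omega^2\tau\sum_{i,j}\min(i,j)\sim\frac{2\beta n}{3}$, using $\omega^2\tau = 2\beta/n^2$ and $\sum_{i,j=0}^{n-1}\min(i,j)\sim n^3/3$; dividing by $\sqrt n$ produces exactly $N\bigl(0,\frac23(H'(0))^2\rho^2\beta\bigr)$. I then must show the remaining terms are negligible at order $\sqrt n$: the deterministic drift contributes $\sum_i\frac12\omega^2 t_i = O(1)$, the quadratic Taylor remainder has $\mathbb{E}\sum_i\xi_i^2 = O(1)$, and the cubic remainder is $O(n^{-1/2})$, while uniform smallness $\max_i|\xi_i|\to 0$ in probability (from $\omega^2 t_n = 2\beta/n\to 0$) controls the error of the expansion.

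For $A_n$, conditionally on $\mathcal{F}^Z$ I apply a Lindeberg--Lyapunov central limit theorem for triangular arrays. The conditional variance is $\sum_i\psi(\xi_i)$ with $\psi(\xi):=\mathrm{Var}(\log|1+\rho e^\xi\varepsilon_1|)$; since $\max_i|\xi_i|\to 0$ and $\psi$ is continuous and bounded near $0$, one gets $\frac1n\sum_i\psi(\xi_i)\to\psi(0)=\mathrm{Var}[\log|1+\rho\varepsilon_1|]$ in probability, and the Lyapunov ratio vanishes because all moments of $\log|1+\rho e^\xi\varepsilon_1|$ are finite and uniformly bounded for $\xi$ near $0$ (the logarithmic singularity is integrable to every power against the Gaussian density). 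This gives $\mathbb{E}[e^{i\theta A_n/\sqrt n}\mid\mathcal{F}^Z]\to e^{-\frac12\theta^2\psi(0)}$ in probability. Combining through $\mathbb{E}[e^{i\theta(A_n+B_n)/\sqrt n}] = \mathbb{E}[e^{i\theta B_n/\sqrt n}\,\mathbb{E}[e^{i\theta A_n/\sqrt n}\mid\mathcal{F}^Z]]$, pulling out the deterministic conditional limit, and using the Gaussian limit of $B_n/\sqrt n$ yields the product of the two characteristic functions, hence the stated joint limit with added variances.

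The main obstacle will be the rigorous combination in the final step: justifying that the conditional characteristic function of $A_n/\sqrt n$ converges to the deterministic $e^{-\frac12\theta^2\psi(0)}$ strongly enough—e.g. in $L^1(\mathcal{F}^Z)$, using $\max_i|\xi_i|\to 0$ and uniform integrability of the summands—to be extracted from the expectation and multiplied against the independently converging factor $\mathbb{E}[e^{i\theta B_n/\sqrt n}]$. A secondary technical point is the careful bookkeeping of the Taylor remainders in $B_n$, ensuring that the nonlinearity of $H$ and of $e^\xi-1$, together with the correlation structure of $\{\xi_i\}$ inherited from the single Brownian path $Z$, contributes nothing at the $\sqrt n$ scale.
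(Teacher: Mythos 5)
Your proposal is correct and is essentially the paper's own argument: the paper likewise conditions on the volatility path, uses the same factorization identity to write the conditional-mean fluctuation as $H(\rho(e^{\xi_i}-1))\approx H'(0)\rho\,\omega Z_{t_i}$, evaluates the same Gaussian sum (variance $\sim n^3/3$, yielding $\tfrac{2}{3}(H'(0))^{2}\rho^{2}\beta$), and obtains $\mathrm{Var}[\log|1+\rho\varepsilon_{1}|]$ from the conditional variance term. Your explicit $A_n+B_n$ split with a conditional Lindeberg--Lyapunov CLT is simply a cleaner packaging of the paper's second-order Taylor expansion of $\log\mathbb{E}_{V}\bigl[e^{\frac{i\theta}{\sqrt{n}}\log|1+\sigma_{i}\varepsilon_{1}|}\bigr]$ inside the characteristic function, followed by the same dominated-convergence combination of the two factors.
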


\begin{proof}
The proof will be given in Section \ref{AppendixSection}.
\end{proof}

\section{Log-Euler-Log-Euler Scheme}\label{LogEulerLogEulerSection}

\subsection{Lyapunov Exponents of the Moments}

We would like to compute the moments of the asset price in the 
Log-Euler Log-Euler time discretization.
Let us recall that this scheme is defined by the stochastic recursion
\begin{align}
&S_{i+1}=S_{i} 
\exp\left(\sigma_i (W_{i+1} - W_i) - \frac12 \sigma_i^2 \tau \right)
\\
&\sigma_{i+1}=\sigma_{i}\exp\left(\omega(Z_{i+1}-Z_{i})-\frac{1}{2}\omega^{2}\tau\right).
\nonumber
\end{align}
Therefore, 
\begin{equation}
\mathbb{E}[(S_{n})^{q}]=S_{0}^{q}\mathbb{E}\left[e^{\frac{1}{2}q(q-1)\sum_{k=0}^{n-1}\sigma_{k}^{2}\tau}\right],
\end{equation}
where $\sigma_{k}=\sigma_{0}e^{\omega Z_{k}-\frac{1}{2}\omega^{2}t_{k}}$ and
$\rho=\sigma_{0}\sqrt{\tau}$ and $\beta=\frac{1}{2}\omega^{2}n^{2}\tau$ are positive constants. 

Unlike the Euler-log-Euler scheme where we studied non-negative integer 
moments, here we can study any $q$th moments, where $q\in\mathbb{R}$.

\begin{theorem}\label{LogEulerLogEulerThm}
For any $q<0$ or $q>1$, $\mathbb{E}[(S_{n})^{q}]=\infty$.

i) For any $0\leq q\leq 1$,
\begin{align}
\lambda(\rho,\beta;q)
&=\lim_{n\rightarrow\infty}\frac{1}{n}\log\mathbb{E}[(S_{n})^{q}]
\\
&=\sup_{g\in\mathcal{AC}_{0}[0,1]}\left\{\frac{\rho^{2}q(q-1)}{2}\int_{0}^{1}e^{2\sqrt{2\beta}g(x)}dx
-\frac{1}{2}\int_{0}^{1}(g'(x))^{2}dx\right\},
\nonumber
\end{align}
where $\mathcal{AC}_{0}[0,1]$ is the space of absolutely continuous functions $f:[0,1]\rightarrow\mathbb{R}$
such that $f(0)=0$.

ii) The solution of the variational problem in the previous point is given explicitly by
\begin{equation}
\lambda(\rho,\beta;q)=\lambda\left(\frac{1}{2}\rho^{2}q(1-q),2\sqrt{2\beta}\right),
\end{equation}
where the function $\lambda(a,b)$ is given in explicit form in 
Section \ref{VarSection}, see Proposition \ref{prop:lamab}.
\end{theorem}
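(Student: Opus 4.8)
The plan is to treat the two parts separately, with the divergence statement handled by a direct tail estimate and the variational formula obtained from a sample-path large deviations principle. Throughout I use the representation $\mathbb{E}[(S_{n})^{q}]=S_{0}^{q}\mathbb{E}[e^{\frac12 q(q-1)\sum_{k=0}^{n-1}\sigma_{k}^{2}\tau}]$ with $\sigma_{k}^{2}\tau=\rho^{2}e^{2\omega Z_{k}-\omega^{2}t_{k}}$, which follows by first conditioning on the path of $\sigma$ and performing the Gaussian integration over the $W$-increments, and is valid for all real $q$ since $S_{n}>0$ in this scheme.

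For $q<0$ or $q>1$ we have $q(q-1)>0$. Since every summand $\sigma_{k}^{2}\tau\ge0$, I would bound the full expectation below by retaining a single term, say $k=1$: with $\sigma_{1}^{2}\tau=\rho^{2}e^{2\omega Z_{1}-\omega^{2}\tau}$ and $Z_{1}\sim N(0,\tau)$ nondegenerate, one gets $\mathbb{E}[(S_{n})^{q}]\ge S_{0}^{q}\,\mathbb{E}[\exp(C e^{2\omega Z_{1}})]$ with $C=\frac12 q(q-1)\rho^{2}e^{-\omega^{2}\tau}>0$. The integrand behaves like $\exp(Ce^{2\omega z}-z^{2}/(2\tau))$ as $z\to\infty$, which is not integrable because the doubly-exponential growth dominates the Gaussian decay; hence $\mathbb{E}[(S_{n})^{q}]=\infty$ for every $n\ge2$. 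This also explains why $0\le q\le1$ is the only regime with a finite Lyapunov exponent: there $\frac12 q(q-1)\le0$, so the relevant functional is bounded above and the Varadhan tail condition below holds automatically.

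For part (i), set $c:=\frac12 q(q-1)\le0$ and rewrite the exponent in a form amenable to Mogulskii's theorem. Using $\omega\sqrt\tau=\sqrt{2\beta}/n$ and $Z_{k}=\sqrt\tau\sum_{j=0}^{k-1}V_{j}$, one has $2\omega Z_{k}=2\sqrt{2\beta}\,Z_{n}(k/n)$, where $Z_{n}(x):=\frac1n\sum_{j=1}^{\lfloor nx\rfloor}V_{j}$ is the Mogulskii-scaled random walk. The drift satisfies $0\le\omega^{2}t_{k}\le 2\beta/n\to0$ uniformly in $k$, and the Riemann sum $\frac1n\sum_{k}e^{2\sqrt{2\beta}Z_{n}(k/n)}$ differs from $\int_{0}^{1}e^{2\sqrt{2\beta}Z_{n}(x)}dx$ only by a discretization error; I would show that both approximations perturb $\frac1n\log\mathbb{E}[(S_{n})^{q}]$ by $o(1)$, so that $\frac1n\log\mathbb{E}[(S_{n})^{q}]=\frac1n\log\mathbb{E}[e^{nF(Z_{n})}]+o(1)$ with the continuous functional $F(\phi)=c\rho^{2}\int_{0}^{1}e^{2\sqrt{2\beta}\phi(x)}dx$. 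By Mogulskii's theorem, $\{Z_{n}\}$ satisfies an LDP at speed $n$ with good rate function $I(\phi)=\frac12\int_{0}^{1}(\phi'(x))^{2}dx$ on $\mathcal{AC}_{0}[0,1]$ (and $+\infty$ otherwise). Since $c\le0$ forces $F\le0$, we have $e^{\gamma nF(Z_{n})}\le1$ for every $\gamma>0$, so Varadhan's lemma applies with no further moment hypotheses and yields $\lim_{n}\frac1n\log\mathbb{E}[e^{nF(Z_{n})}]=\sup_{\phi}\{F(\phi)-I(\phi)\}$, which is exactly the claimed variational formula after renaming $\phi=g$.

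The main obstacle is making this reduction to $F(Z_{n})$ rigorous, namely controlling the drift and Riemann-sum errors uniformly and establishing continuity of $\phi\mapsto\int_{0}^{1}e^{2\sqrt{2\beta}\phi}dx$ in the topology in which the sample-path LDP is stated; here the fact that a finite value of $I$ forces $\phi\in H^{1}\subset C[0,1]$, hence bounded, is what makes the exponential functional continuous, while the one-sided bound $F\le0$ is what disposes of the Varadhan tail condition. Finally, for part (ii) there is nothing further to optimize by hand: writing $c\rho^{2}=-\frac12\rho^{2}q(1-q)$ and $b=2\sqrt{2\beta}$, the supremum defining $\lambda(\rho,\beta;q)$ coincides verbatim with the general one-parameter variational problem whose explicit solution is recorded in Proposition~\ref{prop:lamab}, giving $\lambda(\rho,\beta;q)=\lambda(\tfrac12\rho^{2}q(1-q),2\sqrt{2\beta})$.
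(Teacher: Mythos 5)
Your proposal is correct and follows essentially the same route as the paper: the same Gaussian-integration representation $\mathbb{E}[(S_n)^q]=S_0^q\,\mathbb{E}[e^{\frac12 q(q-1)\sum_k \sigma_k^2\tau}]$, the same log-normal tail argument for $q\notin[0,1]$, the same Mogulskii-plus-Varadhan argument in $L^\infty[0,1]$ (with the sign of $q(q-1)$ disposing of Varadhan's tail condition), and the same identification with Proposition~\ref{prop:lamab}. The only cosmetic differences are that the paper notes the Riemann sum equals the integral of the step-function path exactly (so your ``discretization error'' is in fact zero), handles the drift by sandwiching between coefficients $\frac{\rho^2q(q-1)}{2}$ and $\frac{\rho^2q(q-1)}{2}e^{-2\beta/n}$, and proves continuity of $g\mapsto\int_0^1 e^{2\sqrt{2\beta}g}\,dx$ directly on $L^\infty[0,1]$ rather than via the embedding of the finite-rate-function set.
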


\begin{proof}
The proof will be given in Section \ref{AppendixSection}.
\end{proof}

\begin{figure}[t]
\centering
\includegraphics[width=4.0in]{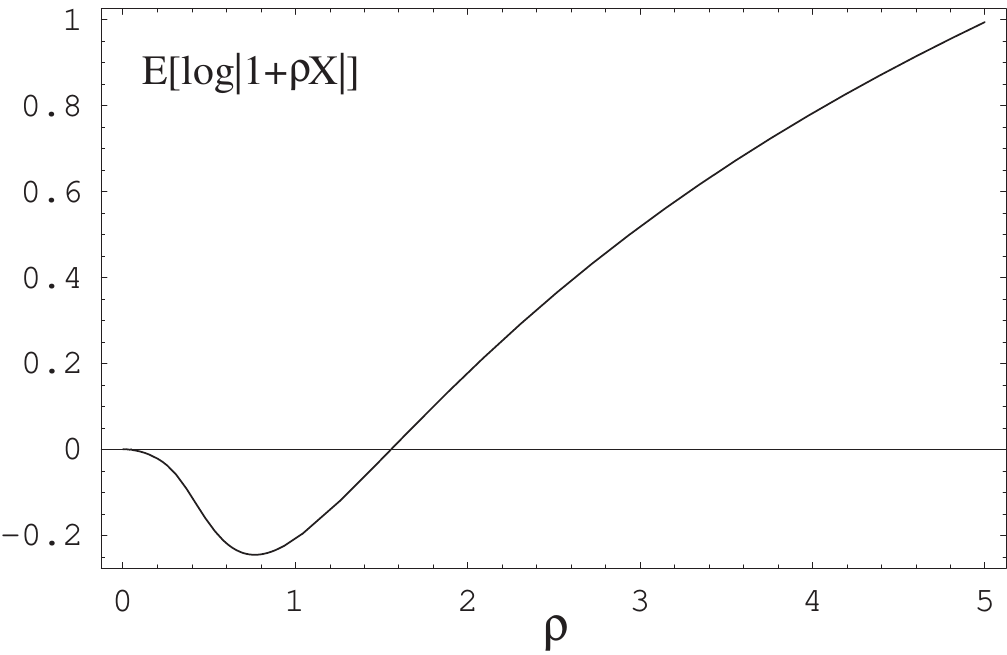}
\caption{Plot of the expectation $\mathbb{E}[\log|1+\rho \varepsilon|]$ vs. 
$\rho$ appearing in the Proposition \ref{ASIThm}, giving the law of large numbers 
for the Euler-Log-Euler discretization. This expectation is negative for
$\rho < 1.556$ and positive for $\rho > 1.556$.}
\label{Fig:expint}
\end{figure}

\subsection{Variational Problem}\label{VarSection}

We would like to solve the variational problem
\begin{equation}\label{lamdef}
\lambda(a,b) \equiv \mbox{sup}_g \left(
-a \int_0^1 e^{b g(x)} dx - \frac12 \int_0^1 (g'(x))^2 dx\right)
\end{equation}
with $a, b>0$ positive real constants. The function $g(x)$ satisfies the
boundary condition $g(0)=0$. 

\begin{proposition}\label{prop:lamab}
The solution of the variational problem (\ref{lamdef}) exists for any positive
real numbers $a,b$, and is given by 
\begin{equation}\label{lamabsol}
\lambda(a,b) = a \left( \cos^2 \xi - \frac{1}{\xi} \sin (2\xi) \right),
\end{equation}
where $\xi\in (0,\frac{\pi}{2})$ is the solution of the equation
\begin{equation}
2\xi^2 = ab^2 \cos^2 \xi\,.
\end{equation}
\end{proposition}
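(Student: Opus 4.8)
The plan is to combine the direct method of the calculus of variations with an explicit solution of the Euler--Lagrange equation, the whole argument being organized around one structural observation: the functional
\[
\Lambda[g] = -a\int_0^1 e^{bg(x)}\,dx - \tfrac12\int_0^1 (g'(x))^2\,dx
\]
is \emph{concave} in $g$. Indeed, the Dirichlet term is concave (negative of a convex quadratic), and since $a>0$ and $t\mapsto e^{bt}$ is convex, $-a\,e^{bg}$ is concave as well. Concavity is what makes the problem rigid: any critical point is automatically a global maximizer, and because the dependence on $g'$ is strictly concave the maximizer is unique once $g(0)=0$ is imposed. Thus it suffices to produce one admissible critical point and evaluate $\Lambda$ on it.

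First I would establish that the supremum is attained, by the direct method. Since $\sup\Lambda \ge \Lambda[0] = -a$, a maximizing sequence $g_n$ may be taken with $\Lambda[g_n]\ge -a$; as both terms of $\Lambda$ are nonpositive this gives $-\tfrac12\|g_n'\|_{L^2}^2 \ge \Lambda[g_n]\ge -a$, so $\|g_n'\|_{L^2}^2\le 2a$. Together with $g_n(0)=0$ this bounds $g_n$ in $H^1(0,1)$, and after passing to a subsequence $g_n\rightharpoonup g$ weakly in $H^1$. The compact embedding $H^1(0,1)\hookrightarrow C[0,1]$ gives uniform convergence, so $\int_0^1 e^{bg_n}\to\int_0^1 e^{bg}$, while $\int_0^1(g')^2$ is weakly lower semicontinuous. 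Hence $\Lambda[g]\ge\limsup_n\Lambda[g_n]=\sup\Lambda$, so $g$ is a maximizer.

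Next I would write down the first variation. With $g(0)=0$ fixed and $g(1)$ free, the Euler--Lagrange equation is $g''=ab\,e^{bg}$ on $(0,1)$ together with the natural boundary condition $g'(1)=0$. Multiplying by $g'$ and integrating yields the first integral $\tfrac12(g')^2 = a\bigl(e^{bg}-e^{bg(1)}\bigr)$; since $g''>0$ and $g'(1)=0$, the optimizer is convex and decreasing. I would then solve this Liouville-type equation through the ansatz $e^{bg(x)} = A\sec^2\bigl(\omega(x-1)\bigr)$, which satisfies $g''=ab\,e^{bg}$ exactly when $A=2\omega^2/(ab^2)$, while the condition $g(0)=0$ forces $2\omega^2 = ab^2\cos^2\omega$. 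Setting $\xi:=\omega$ reproduces the transcendental equation of the statement; existence and uniqueness of a root $\xi\in(0,\tfrac{\pi}{2})$ follow from the intermediate value theorem applied to $\Phi(\xi)=2\xi^2-ab^2\cos^2\xi$ (with $\Phi(0)=-ab^2<0$, $\Phi(\tfrac\pi2)=\tfrac{\pi^2}{2}>0$) and the monotonicity $\Phi'(\xi)=4\xi+ab^2\sin(2\xi)>0$. Note that $\xi(x-1)\in(-\tfrac\pi2,0]$ on $[0,1]$, so the secant stays finite.

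Finally I would substitute the explicit optimizer $e^{bg}=\cos^2\xi\,\sec^2(\xi(x-1))$ back into $\Lambda$, reducing everything to the elementary integrals $\int_0^1\sec^2(\xi(x-1))\,dx=\tfrac1\xi\tan\xi$ and $\int_0^1\tan^2(\xi(x-1))\,dx=\tfrac1\xi\tan\xi-1$, and using $2\xi^2=ab^2\cos^2\xi$ to eliminate $b$. Both the $\int e^{bg}$ and the $\int(g')^2$ contributions then collapse into multiples of $a\sin(2\xi)/\xi$ and $a\cos^2\xi$, producing $\lambda(a,b)=a\bigl(\cos^2\xi-\tfrac1\xi\sin(2\xi)\bigr)$. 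I expect the main obstacle to be neither the ODE (the secant ansatz handles it) nor the final integral bookkeeping, but rather the clean justification that the supremum is attained and coincides with the concave critical point; the concavity together with the $H^1$ coercivity bound above is precisely what closes that gap.
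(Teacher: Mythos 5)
Your proposal is correct, and its computational core coincides exactly with the paper's: the Euler--Lagrange equation $g''=ab\,e^{bg}$ with the transversality (natural) boundary condition $g'(1)=0$, the explicit solution $e^{bg(x)}=\cos^2\xi\,\sec^2(\xi(x-1))$ forced by $g(0)=0$ into the transcendental equation $2\xi^2=ab^2\cos^2\xi$, and the back-substitution via $\int_0^1\sec^2(\xi(x-1))\,dx=\tan\xi/\xi$ and $\int_0^1\tan^2(\xi(x-1))\,dx=\tan\xi/\xi-1$, which indeed collapses to $a\left(\cos^2\xi-\frac{1}{\xi}\sin(2\xi)\right)$. The genuine difference lies in what surrounds that core. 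The paper's proof is an explicit sketch that outsources rigor to Corollary 5 of \cite{PZAsian}: it never proves that the supremum is attained, nor that the critical point realizes it. You close exactly this gap in a self-contained way: concavity of the functional (so any critical point is a global maximizer, unique by strict concavity in $g'$ together with $g(0)=0$), plus the direct method (the coercivity bound $\|g_n'\|_{L^2}^2\le 2a$ from the two nonpositive terms, weak $H^1$ compactness, the compact embedding $H^1(0,1)\hookrightarrow C[0,1]$ to pass to the limit in the exponential term, and weak lower semicontinuity of the Dirichlet term). Your intermediate-value/monotonicity argument for existence and uniqueness of $\xi\in(0,\tfrac{\pi}{2})$ is likewise only implicit in the paper. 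The one step you could still make explicit is the regularity bootstrap (the weak Euler--Lagrange relation forces $g''=ab\,e^{bg}$ distributionally with continuous right-hand side, hence $g\in C^2$ and the boundary condition $g'(1)=0$ follows classically), but this is routine; overall your argument proves strictly more than the paper's sketch while arriving at the same formula.
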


\begin{proof}
The proof will be given in Section \ref{AppendixSection}.
\end{proof}

The solution (\ref{lamabsol}) has a simple behavior in the limits of very 
small and very large values of $b$:
\begin{enumerate}
\item 
$\lim_{b\to 0^{+}}\lambda(a,b)=-a$;
\item
$\lambda(a,b)=-\frac{2\sqrt{2a}}{b}+\frac{\pi^{2}}{2b^{2}}+o(b^{-3})$, as $b\rightarrow\infty$;
\item
$\lambda(a,b)=-\frac{2\sqrt{2a}}{b}+o(1)$, as $a\rightarrow\infty$;
\item
$\lambda(a,b)=-a+o(1)$, as $a\rightarrow 0^{+}$.
\end{enumerate}

In our context, $a=\frac{1}{2}\rho^{2}q(1-q)$ and $b=2\sqrt{2\beta}$. 
Therefore, we have the following asymptotic results for the Lyapunov exponent 
$\lambda(\rho,\beta;q)$ in Theorem \ref{LogEulerLogEulerThm}
for the large $\rho$ and large $\beta$ limits and small $\rho$ and small 
$\beta$ limits.

\begin{proposition}
The asymptotics for the Lyapunov exponent $\lambda(\rho,\beta;q)$ in the 
Log-Euler-Log-Euler discretization for large and small $\rho$, $\beta$ are 
summarized as follows.
\begin{enumerate}
\item 
$\lim_{\beta\to 0^{+}}\lambda(\rho,\beta;q)=-\frac{1}{2}\rho^{2}q(1-q)$;
\item
$\lambda(\rho,\beta;q)=-\frac{\rho\sqrt{q(1-q)}}{\sqrt{2\beta}}+\frac{\pi^{2}}{16\beta}+o(\beta^{-3/2})$, 
as $\beta\rightarrow\infty$;
\item
$\lambda(\rho,\beta;q)=-\frac{\rho\sqrt{q(1-q)}}{\sqrt{2\beta}}+o(1)$, as $\rho\rightarrow\infty$;
\item
$\lambda(\rho,\beta;q)=-\frac{1}{2}\rho^{2}q(1-q)+o(1)$, as $\rho\rightarrow 0^{+}$.
\end{enumerate}
\end{proposition}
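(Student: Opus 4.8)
The plan is to deduce all four asymptotic statements directly from the explicit identity
\[
\lambda(\rho,\beta;q)=\lambda\!\left(\tfrac{1}{2}\rho^{2}q(1-q),\,2\sqrt{2\beta}\right)
\]
established in Theorem~\ref{LogEulerLogEulerThm}(ii), combined with the four limiting formulas already recorded for the two-parameter function $\lambda(a,b)$ of Proposition~\ref{prop:lamab}. Writing $a:=\tfrac12\rho^{2}q(1-q)$ and $b:=2\sqrt{2\beta}$, the key structural observation is that the two arguments decouple: $a$ is a function of $\rho$ alone (with $q$ fixed in $[0,1]$, so that $a\ge 0$) while $b$ is a function of $\beta$ alone, with $b>0$. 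Consequently each of the four limits in the statement drives exactly one of $a,b$ to its extreme value while leaving the other fixed, so the corresponding asymptotic for $\lambda(a,b)$ applies verbatim.

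First I would treat the two $\beta$-limits, which act through $b=2\sqrt{2\beta}$. The limit $\beta\to 0^{+}$ is precisely $b\to 0^{+}$, and substituting into $\lim_{b\to 0^{+}}\lambda(a,b)=-a$ gives item~(1). For $\beta\to\infty$ we have $b\to\infty$, and I would insert $b^{2}=8\beta$ and $\sqrt{2a}=\rho\sqrt{q(1-q)}$ into the expansion $\lambda(a,b)=-\tfrac{2\sqrt{2a}}{b}+\tfrac{\pi^{2}}{2b^{2}}+o(b^{-3})$: the leading term becomes $-\rho\sqrt{q(1-q)}/\sqrt{2\beta}$, the subleading term becomes $\pi^{2}/(16\beta)$, and the remainder converts as $o(b^{-3})=o(\beta^{-3/2})$, yielding item~(2).

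Next I would treat the two $\rho$-limits, which act through $a=\tfrac12\rho^{2}q(1-q)$. For $\rho\to\infty$ (with $0<q<1$, so that $a\to\infty$) the formula $\lambda(a,b)=-\tfrac{2\sqrt{2a}}{b}+o(1)$ gives item~(3) after the same reduction $2\sqrt{2a}/b=\rho\sqrt{q(1-q)}/\sqrt{2\beta}$; at the endpoints $q=0,1$ both sides vanish identically. For $\rho\to 0^{+}$ we have $a\to 0^{+}$, so $\lambda(a,b)=-a+o(1)$ reproduces item~(4) with $-a=-\tfrac12\rho^{2}q(1-q)$.

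The only point requiring care is the bookkeeping of the constant factors and error terms generated by the substitution $b=2\sqrt{2\beta}$; in particular one must verify that the $o(b^{-3})$ remainder becomes $o(\beta^{-3/2})$ and that $\pi^{2}/(2b^{2})$ produces exactly $\pi^{2}/(16\beta)$. There is no genuine analytic obstacle here: once Theorem~\ref{LogEulerLogEulerThm}(ii) and the asymptotics of $\lambda(a,b)$ are in hand, the proposition is an immediate change of variables in which $\rho$ and $\beta$ each steer a single argument of $\lambda(a,b)$.
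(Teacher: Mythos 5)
Your proposal is correct and is exactly the paper's route: the paper derives this proposition as an immediate consequence of the four listed asymptotics of $\lambda(a,b)$ via the substitution $a=\tfrac12\rho^{2}q(1-q)$, $b=2\sqrt{2\beta}$, which is precisely your change of variables, including the bookkeeping $\pi^{2}/(2b^{2})=\pi^{2}/(16\beta)$ and $o(b^{-3})=o(\beta^{-3/2})$. Your extra remark on the endpoint cases $q=0,1$ is a harmless refinement not spelled out in the paper.
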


Plots of $\lambda(a,b)$ vs $a$ and $b$ are shown in Figure~\ref{Fig:lamexp}. 
The qualitative features of the plots agree with the properties of the 
solution discussed above.

\begin{figure}[t]
\centering
\includegraphics[width=4.0in]{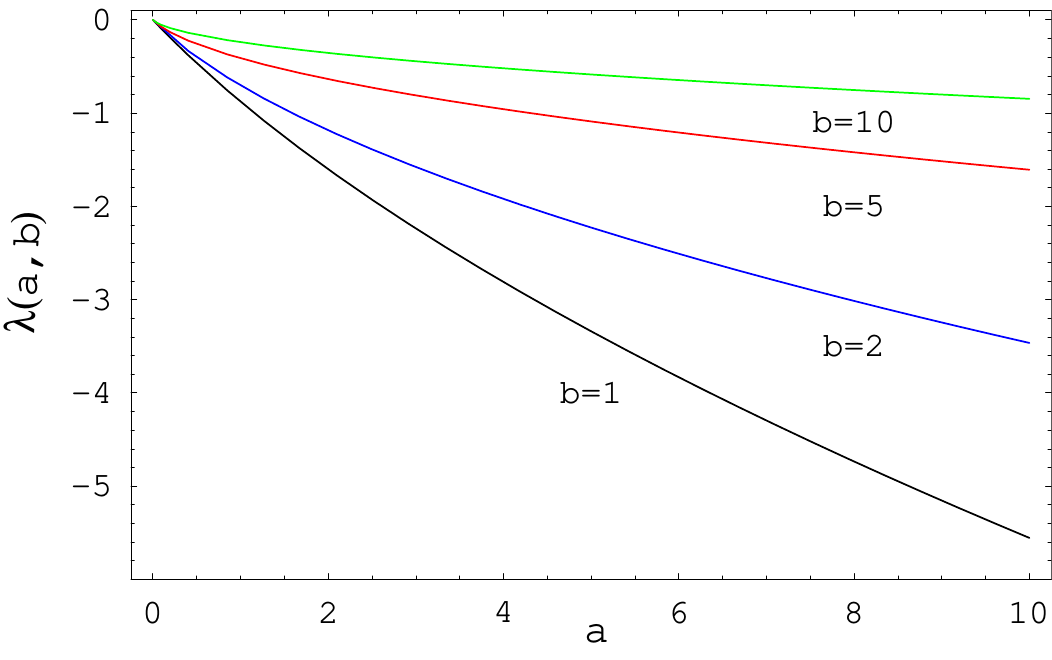}
\includegraphics[width=4.0in]{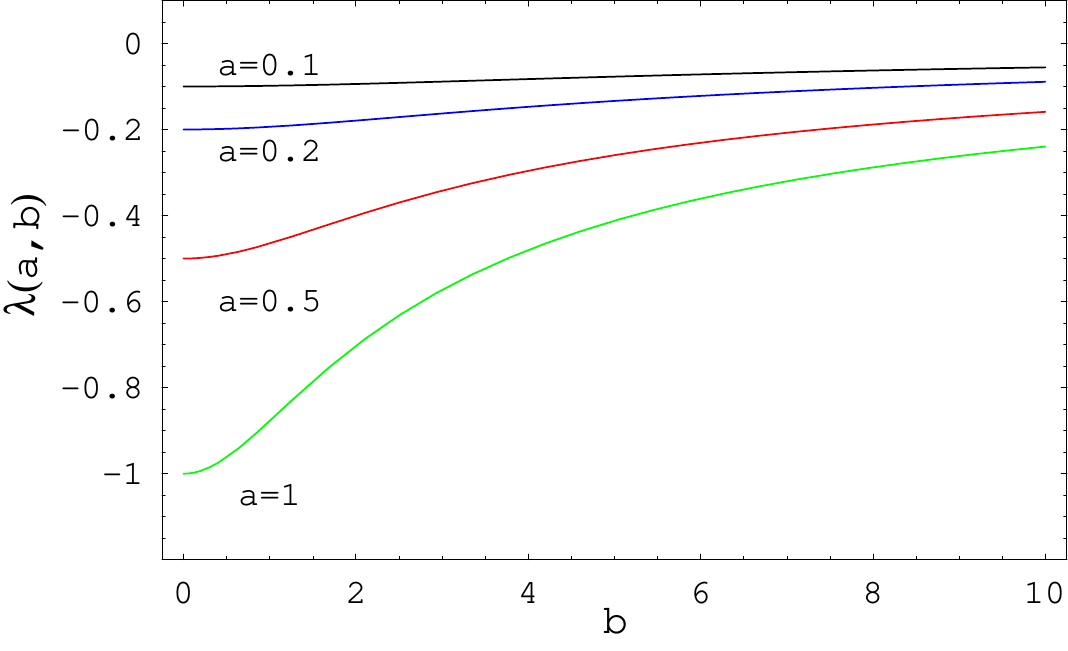}
\caption{Plots of the function $\lambda(a,b)$ given in equation (\ref{lamabsol})
vs $a$ (above) and vs $b$ (below), for several values of $(a,b)$.}
\label{Fig:lamexp}
\end{figure}


\subsection{Almost Sure Limit and Fluctuations}

\begin{proposition}\label{ASIIProp}
In the Log-Euler-Log-Euler discretization, we have the limit
\begin{equation}
\lim_{n\rightarrow\infty}\frac{1}{n}\log S_{n}=-\frac{1}{2}\rho^{2},\qquad\text{a.s.}
\end{equation}
\end{proposition}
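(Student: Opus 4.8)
The plan is to pass to the additive (logarithmic) form of the recursion and to separate a vanishing fluctuation term from a convergent averaged drift term. Taking logarithms in the Log-Euler-Log-Euler scheme (\ref{LogEulerS}) gives, writing $u_i := \omega Z_i - \frac12\omega^2 t_i$ so that $\sigma_i = \sigma_0 e^{u_i}$, and using $\rho = \sigma_0\sqrt\tau$,
\[
\frac1n\log S_n = \frac1n\log S_0 + \frac{\rho}{n}\sum_{i=0}^{n-1} e^{u_i}\varepsilon_i - \frac{\rho^2}{2n}\sum_{i=0}^{n-1} e^{2u_i}.
\]
The first term tends to $0$. I would then show that the second term tends to $0$ and the third tends to $\frac{\rho^2}{2}$ almost surely, which yields the claimed limit $-\frac12\rho^2$.

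The engine of the argument is the uniform smallness of the volatility log-increment, namely $\sup_{0\le i\le n-1}|u_i|\to0$ a.s. Indeed $0\le\omega^2 t_i\le\omega^2 t_n = 2\beta/n\to0$, while $\omega Z_i = \omega\sqrt\tau\sum_{k=0}^{i-1}V_k = \frac{\sqrt{2\beta}}{n}\sum_{k=0}^{i-1}V_k$, so that $\sup_{i\le n}|\omega Z_i|\le\frac{\sqrt{2\beta}}{n}\max_{i\le n}\big|\sum_{k=0}^{i-1}V_k\big|$, and the law of the iterated logarithm for the random walk $\big(\sum_{k<i}V_k\big)_i$ gives $\max_{i\le n}\big|\sum_{k<i}V_k\big| = O(\sqrt{n\log\log n})$ a.s., whence $\sup_{i\le n}|\omega Z_i| = O\big(\sqrt{(\log\log n)/n}\big)\to0$ a.s. This is exactly the estimate already used in the proof of Proposition~\ref{ASIThm}, so I would simply invoke it. It implies that $e^{u_i}\to1$ and $e^{2u_i}\to1$ uniformly in $i$, almost surely.

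From the uniform convergence $e^{2u_i}\to1$ one gets $\frac{\rho^2}{2n}\sum_{i=0}^{n-1} e^{2u_i}\to\frac{\rho^2}{2}$ a.s. For the remaining term I would split
\[
\frac{\rho}{n}\sum_{i=0}^{n-1} e^{u_i}\varepsilon_i = \frac{\rho}{n}\sum_{i=0}^{n-1}\varepsilon_i + \frac{\rho}{n}\sum_{i=0}^{n-1}(e^{u_i}-1)\varepsilon_i.
\]
The first sum tends to $0$ a.s. by the strong law of large numbers for the i.i.d.\ mean-zero sequence $(\varepsilon_i)$, and the second is bounded in absolute value by $\rho\big(\max_{i<n}|e^{u_i}-1|\big)\cdot\frac1n\sum_{i<n}|\varepsilon_i|$, a product of a factor going to $0$ a.s. and a factor converging to $\rho\,\mathbb{E}|\varepsilon_1|$ a.s. (again by the SLLN), hence tending to $0$ a.s. Collecting the three limits yields $\frac1n\log S_n\to -\frac12\rho^2$ a.s.

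The only delicate point is the uniform-smallness lemma, which is where the $n$-dependence of the parameters (the triangular-array structure) and the maximal control of the driving random walk $\sum_{k<i}V_k$ enter; once it is in hand, the rest is the strong law of large numbers together with an elementary domination. Since that lemma is already established in the proof of Proposition~\ref{ASIThm}, the present proof is essentially a short reduction to it followed by the SLLN, and I do not expect any further obstacle.
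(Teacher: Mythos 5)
Your proof is correct and follows essentially the same route as the paper's: decompose $\frac{1}{n}\log S_n$ into the martingale and drift parts, invoke the uniform smallness of $\omega Z_i - \frac12\omega^2 t_i$ (equivalently $\sigma_i \to \sigma_0$) already established in the proof of Proposition~\ref{ASIThm}, and finish with the strong law of large numbers. In fact your treatment of the martingale term, splitting off $\frac{\rho}{n}\sum_i \varepsilon_i$ and dominating the remainder by $\rho\bigl(\max_{i<n}|e^{u_i}-1|\bigr)\cdot\frac{1}{n}\sum_{i<n}|\varepsilon_i|$, makes rigorous a step that the paper passes over with a one-line assertion, so it is, if anything, slightly more careful than the published argument.
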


\begin{proof}
The proof will be given in Section \ref{AppendixSection}.
\end{proof}

Since the law of large numbers for $\frac{1}{n}\log S_{n}$ has been established in Proposition \ref{ASIIProp},
it is reasonable to study the fluctuations around the almost sure limit, i.e., central limit theorem. 
We have the following result.

\begin{proposition}\label{CLTI}
In the Log-Euler-Log-Euler discretization, we have 
\begin{equation}
\frac{\log S_{n}+\frac{1}{2}\rho^{2}n}{\sqrt{n}}\rightarrow 
N\left(0,\rho^{2}+\frac{2}{3}\rho^{4}\beta\right)\,,
\end{equation}
in distribution as $n\rightarrow\infty$.
\end{proposition}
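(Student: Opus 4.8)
The plan is to start from the additive representation of the log-price in this scheme, namely $\log S_n = \log S_0 + \sum_{i=0}^{n-1}\bigl(\sigma_i\sqrt{\tau}\,\varepsilon_i - \tfrac12\sigma_i^2\tau\bigr)$ with $\sigma_i = \sigma_0 e^{\omega Z_i - \frac12\omega^2 t_i}$. Writing $\sigma_i^2\tau = \rho^2 e^{2\omega Z_i - \omega^2 t_i}$ and $\tfrac12\rho^2 n = \tfrac12\sum_{i=0}^{n-1}\rho^2$, subtracting the centering $-\tfrac12\rho^2 n$ and dividing by $\sqrt n$ yields the decomposition
\begin{equation}
\frac{\log S_n + \frac12\rho^2 n}{\sqrt n} = \frac{\log S_0}{\sqrt n} + A_n - C_n, \quad A_n := \frac{\rho}{\sqrt n}\sum_{i=0}^{n-1} e^{\omega Z_i - \frac12\omega^2 t_i}\varepsilon_i, \quad C_n := \frac{\rho^2}{2\sqrt n}\sum_{i=0}^{n-1}\bigl(e^{2\omega Z_i - \omega^2 t_i} - 1\bigr).
\end{equation}
The first term is negligible. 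Because the $\varepsilon_i$ are i.i.d.\ $N(0,1)$ and independent of the $Z$-path, I would condition on $\mathcal F^Z := \sigma(Z_j: j\ge 0)$: then $C_n$ is $\mathcal F^Z$-measurable while $A_n$ is \emph{exactly} a centered Gaussian with conditional variance $V_n := \frac{\rho^2}{n}\sum_{i=0}^{n-1} e^{2\omega Z_i - \omega^2 t_i}$. The conditional characteristic function thus factorizes exactly, giving $\mathbb{E}\bigl[e^{is(A_n - C_n)}\bigr] = \mathbb{E}\bigl[e^{-isC_n - \frac12 s^2 V_n}\bigr]$.

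Next I would establish two ingredients. First, $V_n \to \rho^2$ in probability: using $\omega = \frac1n\sqrt{2\beta/\tau}$ and $Z_i = \sqrt\tau\sum_{k<i}V_k$ one has $\omega Z_i = \frac{\sqrt{2\beta}}{n}\sum_{k<i}V_k$ and $\omega^2 t_i = \frac{2\beta i}{n^2}\le\frac{2\beta}{n}$, so a maximal inequality for the random walk $\sum_{k<i}V_k$ gives $\max_{i\le n}\lvert 2\omega Z_i - \omega^2 t_i\rvert \to 0$ in probability, whence $\frac1n\sum_i e^{2\omega Z_i - \omega^2 t_i}\to 1$. Second, $C_n \Rightarrow N\bigl(0,\tfrac23\rho^4\beta\bigr)$: Taylor-expanding $e^x - 1$, the linear term equals $\frac{\rho^2\sqrt{2\beta}}{n^{3/2}}\sum_{k=0}^{n-2}(n-1-k)V_k$, which, since the $V_k$ are Gaussian, is \emph{exactly} $N(0,\sigma_n^2)$ with $\sigma_n^2 = \frac{2\rho^4\beta}{n^3}\sum_{k=0}^{n-2}(n-1-k)^2 \to \frac23\rho^4\beta$; the deterministic drift $-\frac{\rho^2}{2\sqrt n}\sum_i\omega^2 t_i \sim -\frac{\rho^2\beta}{2\sqrt n}$ and the quadratic contribution $\frac{\rho^2}{\sqrt n}\cdot\frac{2\beta}{n^2}\sum_i\bigl(\sum_{k<i}V_k\bigr)^2 = O_p(n^{-1/2})$ (by Donsker, $\frac1{n^2}\sum_i(\sum_{k<i}V_k)^2$ is $O_p(1)$), together with all higher-order terms, vanish in probability after the $n^{-1/2}$ scaling.

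To conclude, since $V_n\to\rho^2$ (a constant) in probability and $C_n\Rightarrow C_\infty\sim N(0,\tfrac23\rho^4\beta)$, Slutsky's theorem gives the joint convergence $(C_n,V_n)\Rightarrow(C_\infty,\rho^2)$. Applying the bounded continuous map $(c,v)\mapsto e^{-isc-\frac12 s^2 v}$, whose modulus $e^{-\frac12 s^2 v}\le 1$ permits bounded convergence, yields
\begin{equation}
\mathbb{E}\bigl[e^{is(A_n - C_n)}\bigr] \to e^{-\frac12 s^2\rho^2}\,\mathbb{E}\bigl[e^{-isC_\infty}\bigr] = e^{-\frac12 s^2\left(\rho^2 + \frac23\rho^4\beta\right)},
\end{equation}
the characteristic function of $N\bigl(0,\rho^2 + \tfrac23\rho^4\beta\bigr)$, which proves the claim. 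I expect the main obstacle to be the uniform control of the remainder in the expansion of $C_n$ — showing that the quadratic and higher-order contributions, governed by the quadratic functional $\frac1{n^2}\sum_i(\sum_{k<i}V_k)^2$ of the random walk, are genuinely $o_p(1)$ — together with the maximal inequality underlying $V_n\to\rho^2$; everything else is exact because all driving noise is Gaussian, so no Lindeberg-type triangular-array argument is needed.
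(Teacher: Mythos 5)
Your proposal is correct and follows essentially the same route as the paper: condition on the volatility path to exploit the exact Gaussianity of the $W$-noise (reducing the characteristic function to $\mathbb{E}\left[e^{-isC_n-\frac12 s^2 V_n}\right]$), show $V_n\to\rho^2$, extract the exactly Gaussian linear term of $C_n$ with variance tending to $\frac{2}{3}\rho^4\beta$, and kill the Taylor remainder. The paper's proof is the same computation with only bookkeeping differences — it controls the remainder $\xi_n$ via explicit upper and lower bounds whose expectations are $O(n^{-1/2})$, where you invoke Donsker for the quadratic functional, and your final Slutsky-plus-bounded-convergence step is a slightly more explicit version of the paper's ``putting everything together.''
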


\begin{proof}
The proof will be given in Section \ref{AppendixSection}.
\end{proof}

Note that the variance in the central limit theorem in Proposition \ref{CLTI} has two terms, $\rho^{2}$
and $\frac{2}{3}\rho^{4}\beta$. When the volatility is a constant, i.e., $\sigma_{k}=\sigma_{0}$ for any $k$,
by using the identity $\rho=\sigma_{0}\sqrt{\tau}$, the asset price is given by a standard geometric Brownian motion,
\begin{equation}
S_{n}=S_{0}e^{\rho B_{n}-\frac{1}{2}\rho^{2}n},
\end{equation}
where $B_{n}$ is a standard Brownian motion and it is clear that
\begin{equation}
\frac{\log S_{n}+\frac{1}{2}\rho^{2}n}{\sqrt{n}}\rightarrow N(0,\rho^{2}).
\end{equation}
Hence, the first term $\rho^{2}$ in the variance in Proposition \ref{CLTI} can 
be interpreted as the fluctuations from the asset price and the second term 
$\frac{2}{3}\rho^{4}\beta$ explains the fluctuations of the stochastic volatility.

\section{Log-Euler-Euler Scheme}\label{LogEulerEulerSection}

\subsection{Lyapunov Exponents of the Moments}

We would like to compute the moments of the asset price in the 
Log-Euler-Euler time discretization defined by (\ref{LogEulerEulerS}).
They are given by
\begin{equation*}
\mathbb{E}[(S_{n})^{q}]
=S_{0}^{q}\mathbb{E}\left[e^{\frac{1}{2}q(q-1)\sum_{k=0}^{n-1}\sigma_{k}^{2}\tau}\right],
\end{equation*}
where $\rho=\sigma_{0}\sqrt{\tau}$, $\beta=\frac{1}{2}\omega^{2}n^{2}\tau$ and
\begin{equation}
\sigma_{k}=\sigma_{0}\prod_{j=1}^{k}(1+\omega\sqrt{\tau}V_{j}),
\end{equation}
where $V_{j}$ are i.i.d. $N(0,1)$ random variables.

\begin{theorem}\label{LogEulerEulerThm}
For any $q<0$ or $q>1$, $\lim_{n\rightarrow\infty}\frac{1}{n}\log\mathbb{E}[(S_{n})^{q}]=\infty$.

For any $0\leq q\leq 1$,
\begin{align}
&\lim_{n\rightarrow\infty}\frac{1}{n}\log\mathbb{E}[(S_{n})^{q}]
\\
&=\sup_{g\in\mathcal{AC}_{0}[0,1]}\left\{\frac{\rho^{2}q(q-1)}{2}\int_{0}^{1}e^{2\sqrt{2\beta}g(x)}dx
-\frac{1}{2}\int_{0}^{1}(g'(x))^{2}dx\right\}
\nonumber
\\
&=\lambda\left(\frac{1}{2}\rho^{2}q(1-q),2\sqrt{2\beta}\right),
\nonumber
\end{align}
where the function $\lambda(a,b)$ is given explicitly in Section \ref{VarSection}, see Proposition \ref{prop:lamab}.
\end{theorem}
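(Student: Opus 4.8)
The plan is to exploit that the moment identity $\mathbb{E}[(S_n)^q]=S_0^q\,\mathbb{E}[e^{\frac12 q(q-1)\sum_{k=0}^{n-1}\sigma_k^2\tau}]$ is formally identical to the one underlying Theorem~\ref{LogEulerLogEulerThm}; the \emph{only} difference is that here $\sigma_k=\sigma_0\prod_{j=1}^k(1+\omega\sqrt{\tau}V_j)$ is the Euler iterate rather than the exact geometric Brownian motion $\sigma_0 e^{\omega Z_{t_k}-\frac12\omega^2 t_k}$. I would therefore aim to show that, under Assumption~\ref{assump5}, the two volatility paths produce the same sample-path large deviation behaviour, so that the whole apparatus of the Log-Euler-Log-Euler proof (Mogulskii's theorem, Varadhan's lemma, and the explicit solution $\lambda(a,b)$ of Proposition~\ref{prop:lamab}) carries over.

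For the divergence claim ($q<0$ or $q>1$, where $c:=\frac12 q(q-1)>0$) I would show the moment is in fact $+\infty$ for every $n\geq 3$. Since every summand is nonnegative, $\mathbb{E}[(S_n)^q]\geq S_0^q\,\mathbb{E}[e^{c\sigma_{n-1}^2\tau}]$ with $\sigma_{n-1}^2\tau=\rho^2\prod_{j=1}^{n-1}(1+\omega\sqrt{\tau}V_j)^2$. Conditioning on $V_1,\dots,V_{n-2}$ and integrating the single Gaussian $V_{n-1}$, the inner integral $\mathbb{E}_{V_{n-1}}[\exp(c\rho^2 B\,(1+\omega\sqrt{\tau}V_{n-1})^2)]$, with $B=\prod_{j=1}^{n-2}(1+\omega\sqrt{\tau}V_j)^2\geq 0$, diverges precisely when $c\rho^2\omega^2\tau\,B\geq\frac12$. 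As $B$ exceeds any threshold on an event of positive probability, the full expectation is $+\infty$, whence $\frac1n\log\mathbb{E}[(S_n)^q]=+\infty$ and the limit is $+\infty$.

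For $0\leq q\leq 1$ one has $c=-\frac12 q(1-q)\leq 0$, so the exponential moment is at most $1$ and the limit is finite. Writing $\sigma_{\lfloor nx\rfloor}^2\tau=\rho^2 e^{2\sqrt{2\beta}\,\tilde g_n(x)}$ with $\tilde g_n(x):=\frac{1}{\sqrt{2\beta}}\sum_{j=1}^{\lfloor nx\rfloor}\log|1+\omega\sqrt{\tau}V_j|$ and recalling $\omega\sqrt{\tau}=\sqrt{2\beta}/n$, the target becomes $\lim_n\frac1n\log\mathbb{E}[\exp(n\Phi(\tilde g_n))]$, where $\Phi(g)=\frac{\rho^2 q(q-1)}{2}\int_0^1 e^{2\sqrt{2\beta}g(x)}dx$ is bounded above by $0$ and continuous in the uniform topology. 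The key step is to prove that $\tilde g_n$ is exponentially equivalent at speed $n$ to the rescaled random walk $g_n(x):=\frac1n\sum_{j=1}^{\lfloor nx\rfloor}V_j$, which by Mogulskii's theorem satisfies the LDP with good rate function $\frac12\int_0^1(g'(x))^2 dx$ on $\mathcal{AC}_0[0,1]$. Exponential equivalence transfers this LDP to $\tilde g_n$, and then Varadhan's lemma (whose upper-bound hypothesis is trivial because $\Phi\leq 0$), combined with the Riemann-sum approximation $\frac1n\sum_k e^{2\sqrt{2\beta}\tilde g_n(k/n)}\to\int_0^1 e^{2\sqrt{2\beta}\tilde g_n}dx$, yields the variational formula; its closed form is identified with $\lambda(\frac12\rho^2 q(1-q),2\sqrt{2\beta})$ via Proposition~\ref{prop:lamab}, exactly as in Theorem~\ref{LogEulerLogEulerThm}.

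The main obstacle is the exponential-equivalence estimate. A term-by-term expansion gives $\frac{1}{\sqrt{2\beta}}\log|1+\frac{\sqrt{2\beta}}{n}V_j|-\frac{V_j}{n}=-\frac{\sqrt{2\beta}}{2n^2}V_j^2+\cdots$, so heuristically $\sup_x|\tilde g_n(x)-g_n(x)|\lesssim\frac{1}{n^2}\sum_{j=1}^n V_j^2$, and I must upgrade this to $\limsup_n\frac1n\log\mathbb{P}(\sup_x|\tilde g_n-g_n|>\delta)=-\infty$ for every $\delta>0$. Two issues need care: the Taylor bound holds only on $\{|V_j|<n/\sqrt{2\beta}\}$, whereas on the complement $1+\omega\sqrt{\tau}V_j$ may vanish or change sign (this is precisely why $\sigma_k$ can turn negative), so these events must be shown to be super-exponentially rare at speed $n$; and the quadratic remainder is controlled through the chi-square tail bound $\mathbb{P}(\sum_{j=1}^n V_j^2>n^2\delta')\leq e^{-\lambda n^2\delta'}(1-2\lambda)^{-n/2}$ for $\lambda<\frac12$, which is of order $e^{-cn}$ on the logarithmic scale and hence negligible. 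Once these super-exponential bounds are established, the reduction to the already-solved Log-Euler-Log-Euler variational problem is immediate.
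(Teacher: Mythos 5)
Your proposal is correct, and on the main case $0\leq q\leq 1$ it is essentially the paper's own proof: the paper likewise writes $\sigma_{k}^{2}\tau=\rho^{2}\exp\bigl(\tfrac{2\sqrt{2\beta}}{n}\sum_{j=1}^{k}V_{j}^{(n)}\bigr)$ with $V_{j}^{(n)}=\tfrac{n}{\sqrt{2\beta}}\log\bigl|1+\tfrac{\sqrt{2\beta}}{n}V_{j}\bigr|$, proves that the rescaled walks built from $(V_{j}^{(n)})$ and from $(V_{j})$ are exponentially equivalent (this is Lemma \ref{EquivLemma}, established exactly as you outline: the estimate $\bigl|\log|1+x|-x\bigr|\leq 2x^{2}$ on the event $\bigl\{\max_{j}|V_{j}|<\tfrac{n}{2\sqrt{2\beta}}\bigr\}$, a union bound showing the complement is super-exponentially rare, and a Chernoff bound on $\sum_{j}V_{j}^{2}$), then transfers Mogulskii's LDP, applies Varadhan's lemma (upper bound trivial since the functional is nonpositive), and identifies the supremum via Proposition \ref{prop:lamab}, just as in Theorem \ref{LogEulerLogEulerThm}. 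Incidentally, your chi-square tail bound with $\lambda<\tfrac12$ quietly repairs a slip in the paper's proof of Lemma \ref{EquivLemma}, which invokes the finiteness of $\mathbb{E}[e^{\frac12 V_{1}^{2}}]$; this expectation is infinite for a standard Gaussian, and one must take $\lambda<\tfrac12$ exactly as you do. Where you genuinely depart from the paper is the divergence case $q<0$ or $q>1$: the paper derives a divergent lower bound by restricting the expectation to the event $\{V_{j}\geq\sqrt{n},\ 1\leq j\leq n-1\}$, whose probability decays like $e^{-n^{2}/2}$ while the integrand on it is at least of order $\exp\bigl(c\,e^{c'\sqrt{n}}\bigr)$, so that $\tfrac1n\log$ of the bound still diverges; you instead prove that $\mathbb{E}[(S_{n})^{q}]=+\infty$ for every fixed $n\geq 3$, since conditionally on $V_{1},\dots,V_{n-2}$ the Gaussian integral over $V_{n-1}$ of $\exp\bigl(c\rho^{2}B(1+\omega\sqrt{\tau}V_{n-1})^{2}\bigr)$ diverges as soon as $c\rho^{2}\omega^{2}\tau B\geq\tfrac12$, an event of positive probability, and Tonelli (the integrand being nonnegative) makes this rigorous. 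Your argument is simpler and strictly stronger: it shows that the Log-Euler-Euler scheme actually shares the infinite-moment property for $q>1$ once $n\geq 3$, which refines the remark in Section \ref{sec:7} that only the Log-Euler-Log-Euler discretization has infinite moments; both arguments, of course, validly establish the stated limit.
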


\begin{proof}
The proof will be given in Section \ref{AppendixSection}.
\end{proof}

\begin{lemma}\label{EquivLemma}
For any $\epsilon>0$,
\begin{equation}
\limsup_{n\rightarrow\infty}\frac{1}{n}
\log\mathbb{P}\left(\sup_{0\leq x\leq 1}\left|\frac{1}{n}\sum_{j=1}^{\lfloor xn\rfloor}V_{j}^{(n)}-\frac{1}{n}\sum_{j=1}^{\lfloor xn\rfloor}V_{j}\right|
\geq\epsilon\right)=-\infty.
\end{equation}
\end{lemma}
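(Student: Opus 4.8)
The plan is to prove the statement as a superexponential estimate at speed $n$, which is precisely the exponential-equivalence input needed to transfer the Mogulskii-type large deviation principle from the Log-Euler-Log-Euler driving sequence $\frac1n\sum_{j=1}^{\lfloor xn\rfloor}V_j$ to its Euler counterpart (and hence to identify the two Lyapunov exponents in Theorem~\ref{LogEulerEulerThm}). Recall that $\omega\sqrt{\tau}=\sqrt{2\beta}/n$ and that $V_j^{(n)}$ arises from the Euler recursion for $\sigma$ through the identity $\frac{\sqrt{2\beta}}{n}\sum_{j=1}^{k}V_j^{(n)}=\sum_{j=1}^{k}\log\bigl|1+\tfrac{\sqrt{2\beta}}{n}V_j\bigr|$, so that the object to control is the i.i.d. family
\[
W_j:=V_j^{(n)}-V_j=\frac{n}{\sqrt{2\beta}}\log\Bigl|1+\tfrac{\sqrt{2\beta}}{n}V_j\Bigr|-V_j.
\]
Write $A_n$ for the event in the statement. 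Since $k\mapsto\sum_{j=1}^{k}W_j$ is piecewise constant in the same way that $\lfloor xn\rfloor$ is, the supremum over $x\in[0,1]$ equals $\max_{0\le k\le n}\bigl|\frac1n\sum_{j=1}^{k}W_j\bigr|$, and by the triangle inequality this is at most $\frac1n\sum_{j=1}^{n}|W_j|$. This first reduction replaces the running maximum by a single sum of absolute increments, so that no maximal inequality is required.

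Next I would localize on the good event $G_n:=\{\max_{1\le j\le n}|V_j|\le \tfrac{n}{2\sqrt{2\beta}}\}$. On $G_n$ each argument satisfies $|\tfrac{\sqrt{2\beta}}{n}V_j|\le\frac12$, hence $1+\tfrac{\sqrt{2\beta}}{n}V_j\ge\frac12>0$, and the elementary bound $|u-\log(1+u)|\le u^2$ valid for $|u|\le\frac12$ yields $|W_j|\le \tfrac{\sqrt{2\beta}}{n}V_j^2$. Therefore, on $G_n$,
\[
\frac1n\sum_{j=1}^{n}|W_j|\le\frac{\sqrt{2\beta}}{n^2}\sum_{j=1}^{n}V_j^2,
\]
so that $A_n\cap G_n$ forces $\sum_{j=1}^{n}V_j^2\ge \epsilon n^2/\sqrt{2\beta}$. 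This gives the decomposition
\[
\mathbb{P}(A_n)\le \mathbb{P}(G_n^c)+\mathbb{P}\Bigl(\sum_{j=1}^{n}V_j^2\ge \tfrac{\epsilon n^2}{\sqrt{2\beta}}\Bigr).
\]

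It then remains to check that both terms decay faster than every $e^{-Ln}$. For the first, a Gaussian tail estimate together with a union bound give $\mathbb{P}(G_n^c)\le 2n\,e^{-n^2/(16\beta)}$, whence $\frac1n\log\mathbb{P}(G_n^c)\le \frac{\log(2n)}{n}-\frac{n}{16\beta}\to-\infty$. For the second, the Chernoff bound for the $\chi^2$ sum, $\mathbb{P}(\sum_{j=1}^n V_j^2\ge t)\le e^{-\theta t}(1-2\theta)^{-n/2}$ for $0<\theta<\frac12$, applied with $\theta=\frac14$ and $t=\epsilon n^2/\sqrt{2\beta}$, yields the bound $e^{-\epsilon n^2/(4\sqrt{2\beta})}2^{n/2}$, so its $\frac1n\log$ is at most $-\frac{\epsilon n}{4\sqrt{2\beta}}+\frac{\log 2}{2}\to-\infty$. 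Combining the two contributions through $\frac1n\log(a_n+b_n)\le \frac{\log 2}{n}+\max\bigl(\frac1n\log a_n,\frac1n\log b_n\bigr)$ gives $\limsup_n \frac1n\log\mathbb{P}(A_n)=-\infty$, as claimed.

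The main obstacle, and the only genuinely delicate point, is that the quadratic control $|W_j|\le\tfrac{\sqrt{2\beta}}{n}V_j^2$ fails exactly where $1+\tfrac{\sqrt{2\beta}}{n}V_j$ is small or negative, i.e. for $V_j\lesssim -n/\sqrt{2\beta}$; this is precisely why one localizes on $G_n$. The quantitative fact that makes the argument close is that both excised contributions live at the scale $e^{-O(n^2)}$ rather than $e^{-O(n)}$: the deviation $|V_j|\gtrsim n$ is a Gaussian fluctuation of order $n$, and the upper tail $\sum V_j^2\gtrsim n^2$ asks the $\chi^2$ sum to be $n$ times its typical value, so each is superexponential at speed $n$, which is exactly what the $\limsup=-\infty$ conclusion demands.
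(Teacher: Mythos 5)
Your proof is correct and follows essentially the same route as the paper's: reduce the supremum to $\frac{1}{n}\sum_{j=1}^{n}\bigl|V_{j}^{(n)}-V_{j}\bigr|$, localize on the event $\bigl\{\bigl|\tfrac{\sqrt{2\beta}}{n}V_{j}\bigr|<\tfrac{1}{2},\,1\leq j\leq n\bigr\}$, dispose of its complement by a Gaussian union bound, and on the good event use the quadratic bound on $|\log(1+u)-u|$ to reduce the question to a $\chi^{2}$-tail at scale $n^{2}$, which a Chernoff bound kills superexponentially. In fact your Chernoff step is slightly cleaner than the paper's: you take $\theta=\tfrac{1}{4}$, for which $\mathbb{E}[e^{V_{1}^{2}/4}]=\sqrt{2}<\infty$, whereas the paper uses $\theta=\tfrac{1}{2}$ and asserts $\mathbb{E}\bigl[e^{\frac{1}{2}V_{1}^{2}}\bigr]<\infty$, which is false for a standard Gaussian, so your choice repairs a harmless but genuine slip in the published argument.
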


\begin{proof}
The proof will be given in Section \ref{AppendixSection}.
\end{proof}


\subsection{Almost Sure Limit and Fluctuations}

In the Log-Euler-Euler scheme we have
\begin{equation}
S_{n}=S_{0}e^{\sum_{i=0}^{n-1}\sigma_{i}\Delta W_{i}-\frac{1}{2}\sigma_{i}^{2}\tau},
\end{equation}
where $\sigma_{i}=\sigma_{0}\prod_{j=1}^{i}(1+\omega\sqrt{\tau}V_{j})$.
One can show that $\sigma_{n}\rightarrow\sigma_{0}$ and thus we have 
the following almost sure limit.

\begin{proposition}\label{ASIIIProp}
\begin{equation}
\lim_{n\rightarrow\infty}\frac{1}{n}\log S_{n}=-\frac{1}{2}\rho^{2},\qquad\text{a.s.}
\end{equation}
\end{proposition}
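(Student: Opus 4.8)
The plan is to work on a single probability space carrying the independent i.i.d.\ $N(0,1)$ sequences $(V_j)_{j\geq1}$ and $(\varepsilon_i)_{i\geq0}$, with all $n$-dependence entering through the parameters. Since $\beta=\frac12\omega^2n^2\tau$ and $\rho=\sigma_0\sqrt\tau$ are fixed, one has $\omega\sqrt\tau=\frac{\sqrt{2\beta}}{n}$ and $\sigma_i\sqrt\tau=\rho P_i^{(n)}$, where $P_i^{(n)}:=\prod_{j=1}^i\bigl(1+\tfrac{\sqrt{2\beta}}{n}V_j\bigr)$ (with $P_0^{(n)}=1$). Taking logarithms in $S_n=S_0\exp\bigl(\sum_{i=0}^{n-1}\sigma_i\Delta W_i-\frac12\sigma_i^2\tau\bigr)$ and using $\Delta W_i=\sqrt\tau\varepsilon_i$ gives
\begin{equation}
\frac1n\log S_n=\frac{\log S_0}{n}+\frac{\rho}{n}\sum_{i=0}^{n-1}P_i^{(n)}\varepsilon_i-\frac{\rho^2}{2n}\sum_{i=0}^{n-1}\bigl(P_i^{(n)}\bigr)^2 .
\end{equation}
The first term vanishes, so the claim reduces to showing that the drift sum converges to $\frac{\rho^2}{2}$ and that the martingale-type sum is $o(n)$ almost surely.

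The key lemma I would establish first is the uniform convergence $\max_{0\le i\le n}\bigl|P_i^{(n)}-1\bigr|\to0$ a.s. Writing $\log P_i^{(n)}=\sum_{j=1}^i\log\bigl(1+\tfrac{\sqrt{2\beta}}{n}V_j\bigr)$ and Taylor expanding, I control three pieces uniformly in $i\le n$: the linear part $\frac{\sqrt{2\beta}}{n}\sum_{j\le i}V_j$ is small because $\frac1n\max_{0\le i\le n}\bigl|\sum_{j=1}^iV_j\bigr|\to0$ a.s.\ (a consequence of the strong law $\frac1k\sum_{j\le k}V_j\to0$); the quadratic part $\frac{\beta}{n^2}\sum_{j\le i}V_j^2$ is dominated by $\frac{\beta}{n}\cdot\frac1n\sum_{j\le n}V_j^2\to0$ a.s.; and the cubic remainder is handled using $\max_{j\le n}|V_j|\le\sqrt n$ eventually, which follows from Borel--Cantelli since $n\,\mathbb{P}(|V_1|>\sqrt n)$ is summable. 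The same bound shows $\frac{\sqrt{2\beta}}{n}|V_j|<1$ for all $j\le n$ eventually, so every factor is positive and the logarithm is legitimate. This yields $\max_i|\log P_i^{(n)}|\to0$, hence the uniform claim, and as an immediate corollary $\frac1n\sum_{i=0}^{n-1}(P_i^{(n)})^2\to1$ a.s., so the drift term tends to $\frac{\rho^2}{2}$.

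For the martingale term $G_n:=\sum_{i=0}^{n-1}P_i^{(n)}\varepsilon_i$ I would condition on $\mathcal{F}^V=\sigma(V_j:j\geq1)$. Because $(\varepsilon_i)$ is independent of $\mathcal{F}^V$, given $\mathcal{F}^V$ the sum $G_n$ is centered Gaussian with variance $s_n^2=\sum_{i=0}^{n-1}(P_i^{(n)})^2$. On the $\mathcal{F}^V$-measurable event $A_n=\{\max_{i\le n}|P_i^{(n)}-1|\le1\}$ one has $s_n^2\le4n$, so the Gaussian tail bound gives, for any $\eta>0$,
\begin{equation}
\mathbb{P}\bigl(|G_n|>\eta n\mid\mathcal{F}^V\bigr)\le 2\exp\!\Bigl(-\tfrac{\eta^2n^2}{2s_n^2}\Bigr)\le 2\exp\!\Bigl(-\tfrac{\eta^2n}{8}\Bigr)\quad\text{on }A_n.
\end{equation}
Multiplying by $\mathbf{1}_{A_n}$ and taking expectations makes $\sum_n\mathbb{P}(\{|G_n|>\eta n\}\cap A_n)$ finite; since $A_n$ holds for all large $n$ by the key lemma, Borel--Cantelli yields $G_n/n\to0$ a.s., so the martingale term vanishes. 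Combining the three limits gives $\frac1n\log S_n\to-\frac{\rho^2}{2}$ a.s.

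I expect the main obstacle to be the martingale term rather than the drift, because the problem is genuinely a triangular array: both the weights $P_i^{(n)}$ and the parameters depend on $n$, so no single strong law applies directly to $\sum_i\sigma_i\Delta W_i$. The conditional Gaussian concentration, together with the uniform control of $s_n^2$ on the good event $A_n$, is precisely what makes the Borel--Cantelli step succeed; the positivity of the factors and the uniform smallness of $P_i^{(n)}-1$ are comparatively routine and are delivered for free by the $\max_{j\le n}|V_j|\le\sqrt n$ bound.
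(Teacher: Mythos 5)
Your proof is correct, and it supplies details that the paper itself does not: the paper's proof of Proposition \ref{ASIIIProp} is omitted entirely, with only the hint that one shows $\sigma_n\rightarrow\sigma_0$ and then argues as for the Log-Euler-Log-Euler scheme (Proposition \ref{ASIIProp}). Your overall decomposition of $\frac{1}{n}\log S_n$ into a drift sum and a Gaussian-weighted sum matches that model proof, but your execution differs in two substantive ways. First, the paper's treatment of the stochastic term in Proposition \ref{ASIIProp} is the one-line identification $\frac{1}{n}\sum_{i}\sigma_i\sqrt{\tau}\Delta B_i=\sigma_0\sqrt{\tau}B_n/n\rightarrow 0$, which silently replaces the varying weights $\sigma_i$ by $\sigma_0$; your conditional argument --- freezing $\mathcal{F}^V$, noting that $G_n$ is then exactly centered Gaussian with variance $s_n^2$, bounding $s_n^2\leq 4n$ on the $\mathcal{F}^V$-measurable event $A_n$, and closing with Gaussian tails plus Borel--Cantelli --- is what makes this step airtight in the genuinely triangular-array setting where both the weights $P_i^{(n)}$ and the parameters depend on $n$. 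Second, your uniform lemma $\max_{i\leq n}|P_i^{(n)}-1|\rightarrow 0$ a.s., together with the preliminary observation that $\max_{j\leq n}|V_j|\leq\sqrt{n}$ eventually (so every factor $1+\frac{\sqrt{2\beta}}{n}V_j$ is positive and the logarithm is legitimate), addresses a point specific to the Euler discretization of the volatility that the paper never mentions: unlike in the log-Euler volatility schemes, the factors here can in principle be negative, so the positivity check is not cosmetic. What the paper's sketched route buys is brevity; what yours buys is a complete argument, and your conditional-Gaussian concentration step could be reused verbatim to make the paper's proof of Proposition \ref{ASIIProp} fully rigorous as well.
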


\begin{proof}
We omit the details of the proof.
\end{proof}

The corresponding result for fluctuations is given by the following proposition.

\begin{proposition}
\begin{equation}
\frac{\log S_{n}+\frac{1}{2}\rho^{2}n}{\sqrt{n}}\rightarrow 
N\left(0,\rho^{2}+\frac{2}{3}\rho^{4}\beta\right),
\end{equation}
in distribution as $n\rightarrow\infty$.
\end{proposition}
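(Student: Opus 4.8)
The plan is to follow the structure of the proof of Proposition~\ref{CLTI}, isolating a martingale contribution driven by the increments $\varepsilon_i$ of $W$ and a volatility-drift contribution driven only by the increments $V_j$ of the Euler-discretized $\sigma$. Write $a_n := \omega\sqrt{\tau} = \sqrt{2\beta}/n$, $S_i := \sum_{j=1}^i V_j$ and $R_i := \prod_{j=1}^i (1+a_n V_j)$, so that $\sigma_i = \sigma_0 R_i$. Starting from $\log S_n = \log S_0 + \sum_{i=0}^{n-1}\sigma_i\sqrt{\tau}\varepsilon_i - \tfrac12\sum_{i=0}^{n-1}\sigma_i^2\tau$ and using $\tfrac12\rho^2 n = \tfrac12\sum_{i=0}^{n-1}\rho^2$ together with $\sigma_i^2\tau = \rho^2 R_i^2$, I would decompose
\begin{equation}
\log S_n + \tfrac12\rho^2 n = \log S_0 + M_n + D_n, \quad M_n := \rho\sum_{i=0}^{n-1} R_i\varepsilon_i, \quad D_n := -\tfrac12\rho^2\sum_{i=0}^{n-1}(R_i^2-1).
\end{equation}
Since $\log S_0/\sqrt{n}\to 0$, it suffices to show $(M_n+D_n)/\sqrt n \Rightarrow N(0,\rho^2+\tfrac23\rho^4\beta)$ (convergence in distribution), which I would establish by conditioning on $\mathcal{F}^V := \sigma(V_j : j\ge 1)$ so as to exploit that $(\varepsilon_i)$ and $(V_j)$ are independent families and that $D_n$ is $\mathcal{F}^V$-measurable.

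For the drift part I would expand $R_i^2 - 1 = 2a_n S_i + r_i$, where the remainder $r_i$ collects the $O(a_n^2)$ and higher terms of the product, its leading pieces being $-a_n^2\sum_{j\le i}V_j^2 + 2a_n^2 S_i^2$. Using $\sum_{i=0}^{n-1} S_i^2 = O_P(n^2)$ and $a_n^2 = 2\beta/n^2$, a moment estimate shows $\sum_{i=0}^{n-1} r_i = O_P(1)$, so that $n^{-1/2}\sum_i r_i \to 0$ in probability and
\begin{equation}
\frac{D_n}{\sqrt n} = -\rho^2\frac{\sqrt{2\beta}}{n^{3/2}}\sum_{i=0}^{n-1} S_i + o_P(1).
\end{equation}
Applying Donsker's theorem to the partial-sum process $t\mapsto n^{-1/2} S_{\lfloor nt\rfloor}$ and the continuous-mapping theorem to the integral functional gives $n^{-3/2}\sum_{i=0}^{n-1} S_i \Rightarrow \int_0^1 B_t\,dt$ for a standard Brownian motion $B$, whence $D_n/\sqrt n \Rightarrow -\rho^2\sqrt{2\beta}\int_0^1 B_t\,dt$. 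This limit is centered Gaussian with variance $2\rho^4\beta\,\text{Var}\big(\int_0^1 B_t\,dt\big) = \tfrac23\rho^4\beta$, matching the second term in the asserted variance; it also explains why the answer coincides with Proposition~\ref{CLTI}, since $R_i^2$ and $e^{2\omega Z_i-\omega^2 t_i}$ share the same leading fluctuation $1 + 2a_n S_i + \cdots$.

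For the martingale part I would condition on $\mathcal{F}^V$: because the $\varepsilon_i$ are i.i.d.\ $N(0,1)$ and independent of the $V_j$, the conditional law of $M_n/\sqrt n$ given $\mathcal{F}^V$ is Gaussian with mean zero and variance $\tfrac{\rho^2}{n}\sum_{i=0}^{n-1} R_i^2$. Using $\mathbb{E}[R_i^2]=(1+a_n^2)^i$ and the analogous bound for $\mathbb{E}[R_i^4]$ (both uniformly bounded since $a_n^2 i \le 2\beta/n$), a first- and second-moment computation gives $\tfrac1n\sum_{i=0}^{n-1} R_i^2 \to 1$ in probability, so the conditional variance concentrates at $\rho^2$. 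I would then write
\begin{equation}
\mathbb{E}\!\left[e^{i\theta(M_n+D_n)/\sqrt n}\right] = \mathbb{E}\!\left[e^{i\theta D_n/\sqrt n}\, e^{-\frac{\theta^2\rho^2}{2n}\sum_{i=0}^{n-1} R_i^2}\right]
\end{equation}
and replace the inner exponential by its limit $e^{-\theta^2\rho^2/2}$; the replacement error is bounded by $\mathbb{E}\big|e^{-\frac{\theta^2\rho^2}{2n}\sum_i R_i^2}-e^{-\theta^2\rho^2/2}\big|\to 0$ by dominated convergence, the integrand being at most $1$ and tending to $0$ in probability. This decouples the two pieces and yields $\mathbb{E}[e^{i\theta(M_n+D_n)/\sqrt n}]\to e^{-\theta^2\rho^2/2}\,e^{-\frac{\theta^2}{2}\cdot\frac23\rho^4\beta}$, the characteristic function of $N\big(0,\rho^2+\tfrac23\rho^4\beta\big)$, which is the claim.

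The main obstacle is the bookkeeping of the remainder $r_i$ in the expansion of $R_i^2$: one must verify that the fluctuations of $\sum_{j\le i}(V_j^2-1)$ together with the cubic-and-higher terms of the product accumulate to only $O_P(1)$, hence $o_P(\sqrt n)$, in $D_n$, so that the discrete product volatility may be replaced by its leading Brownian fluctuation without altering the limiting variance. Note that the cruder exponential-scale comparison in Lemma~\ref{EquivLemma} is too weak here, as it controls the two volatility paths only at order $O_P(n)$, whereas the central limit theorem demands control at order $o_P(\sqrt n)$; the needed estimate must therefore come from a direct Taylor expansion with moment bounds rather than from the large-deviations lemma. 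Once this estimate is in place, the Donsker limit and the conditioning argument are routine, and the centering $-\tfrac12\rho^2 n$ is exactly the almost sure limit already identified in Proposition~\ref{ASIIIProp}.
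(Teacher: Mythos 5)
Your proposal is correct, and it supplies a proof where the paper gives none: for this proposition the paper states only ``We omit the details of the proof,'' so the natural benchmark is the paper's proof of Proposition~\ref{CLTI} (the Log-Euler-Log-Euler CLT), which your argument parallels in strategy but differs from in execution. Both arguments exploit the fact that, conditionally on the volatility path, $\log S_n$ is exactly Gaussian, and both reduce the drift term to a linear functional of the $V_j$ plus a remainder that is $o_P(\sqrt{n})$. The differences are these. First, the paper linearizes the exponential volatility $e^{2\omega Z_i-\omega^2 t_i}$ using elementary inequalities such as $e^{-x}\geq 1-x$; your scheme instead has the product volatility $\prod_{j\leq i}(1+\omega\sqrt{\tau}V_j)^2$, and the corresponding remainder control---the second-order terms $2a_n^2(\sum_{j\leq i}V_j)^2-a_n^2\sum_{j\leq i}V_j^2$ summing to $O_P(1)$, higher-order products to $o_P(1)$---is precisely the content the paper never wrote down; your remark that Lemma~\ref{EquivLemma} cannot substitute for this (exponential equivalence controls errors at scale $O_P(n)$, while the CLT needs $o_P(\sqrt{n})$) is correct and worth making explicit. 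Second, for the linear term the paper computes the characteristic function exactly, since $(n-1)V_1+(n-2)V_2+\cdots+V_{n-1}$ is Gaussian with variance $\tfrac{1}{6}(n-1)n(2n-1)$, whereas you invoke Donsker's theorem and continuous mapping to get $n^{-3/2}\sum_i\sum_{j\leq i}V_j\Rightarrow\int_0^1 B_t\,dt$; both yield the variance $\tfrac{2}{3}\rho^4\beta$, the exact computation being more elementary, the Donsker route more robust to perturbations of the weights. Third, your decoupling of the martingale and drift contributions---conditioning on $\mathcal{F}^V$, replacing the conditional variance $\tfrac{\rho^2}{n}\sum_i R_i^2$ by its limit $\rho^2$, and controlling the replacement error by bounded convergence---is tighter than the corresponding ``putting everything together'' step in the paper's proof of Proposition~\ref{CLTI}, which multiplies an a.s.\ convergent factor with a distributionally convergent one without spelling out the uniform integrability. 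One cosmetic flaw: you write $S_i$ both for the asset price and for the partial sums $\sum_{j=1}^i V_j$; the latter should be renamed to avoid a clash with the paper's notation.
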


\begin{proof}
We omit the details of the proof.
\end{proof}

\section{Euler-Euler Scheme}\label{EulerEulerSection}

\subsection{Lyapunov Exponents of the Moments}

We would like to compute the moments of the asset price in the 
Euler-Euler discretization defined by (\ref{EulerEulerS}).
They are given by
\begin{align}
\mathbb{E}[(S_{n})^{q}]
&=S_{0}^{q}\mathbb{E}\left[\prod_{k=0}^{n-1}\left(1+\sigma_{0}\sqrt{\tau}\prod_{j=1}^{k}(1+\omega\sqrt{\tau}V_{j})\varepsilon_{k}\right)^{q}\right]
\\
&=S_{0}^{q}m(q)^{n}\mathbb{E}
\left[\prod_{k=0}^{n-1}e^{\log\rho Y_{k}+(\frac{\sqrt{2\beta}}{n}\sum_{j=1}^{k}V_{j}^{(n)})Y_{k}}\right],
\nonumber
\end{align}
where $\rho=\sigma_{0}\sqrt{\tau}$, $\beta=\frac{1}{2}\omega^{2}n^{2}\tau$, and
\begin{equation}
V_{j}^{(n)}:=n\frac{1}{\sqrt{2\beta}}\log\left|1+\frac{\sqrt{2\beta}}{n}V_{j}\right|.
\end{equation}
$(Y_{k})_{k=0}^{n-1}$ are i.i.d. random variables such that
\begin{equation}
\mathbb{P}(Y_{k}=2j)=\frac{1}{m(q)}\frac{q!}{(2j)!(q-2j)!}(2j-1)!!,
\qquad
j=0,1,2,\ldots,[q/2].
\end{equation}
$m(q)$ is the normalizer defined as
\begin{equation}
m(q)=\sum_{j=0}^{[q/2]}\frac{q!}{(2j)!(q-2j)!}(2j-1)!!.
\end{equation}

\begin{theorem}\label{EulerEulerThm}
\begin{align}
&\lim_{n\rightarrow\infty}\frac{1}{n}\log\mathbb{E}[(S_{n})^{q}]
\\
&=\sup_{g\in\mathcal{G}_{q}}
\left\{g(1)\log\rho+\beta\int_{0}^{1}(g(1)-g(x))^{2}dx-\int_{0}^{1}I_{q}(g'(x))dx\right\}.
\nonumber
\end{align}
\end{theorem}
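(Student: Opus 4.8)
The plan is to recognize that the variational formula asserted here is \emph{identical} to the one already proved for the Euler-Log-Euler scheme in Theorem~\ref{EulerLogEulerThm}, and to prove the theorem by reducing the Euler-Euler case to that one. Starting from the representation displayed just above the statement, I subtract the prefactor and write
\begin{equation}
\frac{1}{n}\log\mathbb{E}[(S_{n})^{q}]=\log m(q)+\frac{1}{n}\log\mathbb{E}\left[\prod_{k=0}^{n-1}e^{\log\rho\, Y_{k}+(\frac{\sqrt{2\beta}}{n}\sum_{j=1}^{k}V_{j}^{(n)})Y_{k}}\right].
\end{equation}
The only structural difference from the Euler-Log-Euler expression is that the Gaussian volatility random walk $\frac{\sqrt{2\beta}}{n}\sum_{j\le k}V_{j}$ appearing there (coming from $\omega Z_{k}$, with the negligible drift $-\frac12\omega^{2}t_{k}$ bounded by $\beta$) is here replaced by the partial sums of the non-Gaussian variables $V_{j}^{(n)}=\frac{n}{\sqrt{2\beta}}\log|1+\frac{\sqrt{2\beta}}{n}V_{j}|$. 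The whole task is therefore to show that this replacement does not change the large-$n$ limit.

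The key tool is Lemma~\ref{EquivLemma}, which asserts that the two scaled volatility paths $\Phi_{n}(x)=\frac{1}{n}\sum_{j=1}^{\lfloor xn\rfloor}V_{j}^{(n)}$ and $\Psi_{n}(x)=\frac{1}{n}\sum_{j=1}^{\lfloor xn\rfloor}V_{j}$ are superexponentially close in the sup-norm, i.e.\ exponentially equivalent. First I would note that the difference of the two exponents equals $\sqrt{2\beta}\sum_{k}(\Phi_{n}(k/n)-\Psi_{n}(k/n))\frac{Y_{k}}{n}$, which is bounded in absolute value by $\sqrt{2\beta}\,(2[q/2])\sup_{x}|\Phi_{n}-\Psi_{n}|$ because $Y_{k}\ge 0$ and $\frac{1}{n}\sum_{k}Y_{k}\le 2[q/2]$. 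On the event where $\sup_{x}|\Phi_{n}-\Psi_{n}|<\epsilon$ the Euler-Euler and Euler-Log-Euler integrands thus differ by a factor $e^{\pm\sqrt{2\beta}(2[q/2])\epsilon n}$, while the complementary event has superexponentially small probability by Lemma~\ref{EquivLemma}. A Hölder split of the expectation across these two events, together with the already-established finiteness of the Euler-Log-Euler exponent, then gives that the two normalized log-moments have the same limit; letting $\epsilon\downarrow 0$ yields exactly the formula of Theorem~\ref{EulerLogEulerThm}, which is the claim.

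As a self-contained alternative I would instead run the direct large-deviations argument used for Theorem~\ref{EulerLogEulerThm}. The pair $(\Phi_{n},G_{n})$, with $G_{n}(x)=\frac{1}{n}\sum_{k<\lfloor xn\rfloor}Y_{k}$, satisfies a joint Mogulskii-type sample-path LDP: the $G_{n}$ coordinate has good rate function $\int_{0}^{1}[I_{q}(g'(x))+\log m(q)]\,dx$ on $\mathcal{G}_{q}$, while the $\Phi_{n}$ coordinate \emph{inherits} the Gaussian rate $\frac12\int_{0}^{1}(\phi'(x))^{2}dx$ on $\mathcal{AC}_{0}[0,1]$ precisely by the exponential equivalence of Lemma~\ref{EquivLemma}, and the two are independent. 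Applying Varadhan's lemma to $F(\phi,g)=\log\rho\, g(1)+\sqrt{2\beta}\int_{0}^{1}\phi(x)g'(x)\,dx$ produces a supremum over $(\phi,g)$; the inner optimization over $\phi$ is a quadratic (Gaussian) variational problem whose Euler-Lagrange equation $\phi''(x)=-\sqrt{2\beta}\,g'(x)$ with $\phi(0)=0,\ \phi'(1)=0$ gives $\phi'(x)=\sqrt{2\beta}(g(1)-g(x))$ and optimal value $\beta\int_{0}^{1}(g(1)-g(x))^{2}dx$, recovering the stated formula after adding back $\log m(q)$.

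The main obstacle is the verification of the hypotheses needed for these limit theorems. For the reduction route the delicate point is controlling the contribution of the superexponentially rare event on which the equivalence fails, since $V_{j}^{(n)}$ is genuinely heavy-tailed (it blows up when $1+\frac{\sqrt{2\beta}}{n}V_{j}$ is near $0$ or large); this is exactly where Hölder's inequality together with the $-\infty$ rate from Lemma~\ref{EquivLemma} is essential, and I expect it to require the most care. For the Varadhan route the analogous difficulty is that the bilinear functional $\int_{0}^{1}\phi\,g'\,dx$ is not sup-norm continuous for general $g$, but this is rescued here because $g$ is nondecreasing with total variation at most $2[q/2]$ (as $Y_{k}\ge 0$ and bounded), and the exponential-moment condition in Varadhan's lemma holds since the $Y_{k}$ are uniformly bounded. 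Either way, the crux of the proof is transferring the Gaussian-path rate function to the non-Gaussian increments, and this is furnished by Lemma~\ref{EquivLemma}.
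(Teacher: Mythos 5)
Your primary argument is essentially the paper's own proof: the paper reduces the Euler-Euler case to Theorem~\ref{EulerLogEulerThm} by splitting the expectation over the event $\sup_{0\leq x\leq 1}|\Phi_n(x)-\Psi_n(x)|<\epsilon$ and its complement, bounding the integrand on the good event by the Gaussian (Euler-Log-Euler) integrand times $e^{n\sqrt{2\beta}q\epsilon}$, and controlling the bad event via Cauchy--Schwarz together with the superexponential estimate of Lemma~\ref{EquivLemma}, exactly as you propose. One minor slip: the difference of the two exponents is $\sqrt{2\beta}\sum_{k}(\Phi_{n}(k/n)-\Psi_{n}(k/n))Y_{k}$ (no $1/n$), so it is of order $\epsilon n$ on the good event; your stated conclusion, that the integrands differ by a factor $e^{\pm\sqrt{2\beta}(2[q/2])\epsilon n}$, is the correct one and is what the argument needs.
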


\begin{proof}
The proof will be given in Section \ref{AppendixSection}.
\end{proof}

\subsection{Almost Sure Limit and Fluctuations}

In the Euler-Euler scheme we have
\begin{equation}
S_{n}=S_{0}\prod_{k=0}^{n-1}\left(1+\sigma_{0}\sqrt{\tau}\prod_{j=1}^{k}(1+\omega\sqrt{\tau}V_{j})\varepsilon_{k}\right).
\end{equation}
One can show that
\begin{equation}
\sigma_{n}=\sigma_{0}\prod_{j=1}^{n}(1+\omega\sqrt{\tau}V_{j})\rightarrow\sigma_{0}
\end{equation}
a.s. as $n\rightarrow\infty$ and then follow the similar arguments as before, one can show that


\begin{proposition}\label{ASIIIIProp}
\begin{equation}
\lim_{n\rightarrow\infty}\frac{1}{n}\log|S_{n}|=
\mathbb{E}\left[\log\left|1+\rho\varepsilon_{1}\right|\right]\qquad\text{a.s.}
\end{equation}
\end{proposition}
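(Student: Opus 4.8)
The plan is to reduce the statement to the ordinary strong law of large numbers for the i.i.d.\ sequence $(\log|1+\rho\varepsilon_k|)_{k\ge 0}$, following the proof of Proposition~\ref{ASIThm}. Using $\sigma_0\sqrt\tau=\rho$ and $\omega\sqrt\tau=\tfrac{\sqrt{2\beta}}{n}$, write
\[
\frac1n\log|S_n|=\frac1n\log|S_0|+\frac1n\sum_{k=0}^{n-1}\log\bigl|1+\rho\,r_k\varepsilon_k\bigr|,\qquad r_k:=\frac{\sigma_k}{\sigma_0}=\prod_{j=1}^{k}\Bigl(1+\tfrac{\sqrt{2\beta}}{n}V_j\Bigr).
\]
The first term tends to $0$, and the remaining sum has exactly the form treated in Proposition~\ref{ASIThm}; the only structural difference is that the volatility ratio $r_k$ is now a multiplicative random walk instead of $e^{\omega Z_k-\frac12\omega^2 t_k}$. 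Hence it suffices to prove the uniform almost sure convergence $\max_{0\le k\le n-1}|r_k-1|\to0$, after which the argument of Proposition~\ref{ASIThm} carries over unchanged.

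First I would establish this uniform limit. Since $\max_{j\le n}|V_j|=O(\sqrt{\log n})$ almost surely, for large $n$ every increment $\tfrac{\sqrt{2\beta}}{n}|V_j|$ is uniformly small, so the expansion $\log|1+x|=x-\tfrac12 x^2+O(|x|^3)$ applies and gives
\[
\log|r_k|=\frac{\sqrt{2\beta}}{n}\sum_{j=1}^{k}V_j-\frac{\beta}{n^2}\sum_{j=1}^{k}V_j^2+R_k,\qquad \sup_{k\le n}|R_k|\le \frac{C}{n^3}\sum_{j=1}^{n}|V_j|^3.
\]
The three contributions vanish uniformly in $k\le n$: the first by the maximal/iterated-logarithm bound $\max_{k\le n}|\sum_{j\le k}V_j|=O(\sqrt{n\log\log n})$, the second because $\tfrac{\beta}{n^2}\sum_{j\le n}V_j^2=\tfrac{\beta}{n}\cdot\tfrac1n\sum V_j^2\to0$, and the third because $\tfrac1n\sum_{j\le n}|V_j|^3\to\mathbb E|V_1|^3<\infty$. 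Thus $\max_{0\le k\le n-1}\bigl|\log|r_k|\bigr|\to0$, hence $\max_k|r_k-1|\to0$, almost surely.

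Next I would condition on $\mathcal V:=\sigma(V_j:j\ge1)$. Given $\mathcal V$ the variables $\log|1+\rho r_k\varepsilon_k|$ are independent with conditional mean $\phi(r_k)$ and variance $v(r_k)$, where $\phi(r):=\mathbb E[\log|1+\rho r\varepsilon_1|]$ and $v(r):=\operatorname{Var}[\log|1+\rho r\varepsilon_1|]$ are finite and continuous for $r$ near $1$ (the singularity at $\varepsilon_1=-1/(\rho r)$ is square-log integrable against the Gaussian density, uniformly for $r$ bounded away from $0$). Splitting
\[
\frac1n\sum_{k=0}^{n-1}\log|1+\rho r_k\varepsilon_k|=\frac1n\sum_{k=0}^{n-1}\phi(r_k)+\frac1n\sum_{k=0}^{n-1}\bigl(\log|1+\rho r_k\varepsilon_k|-\phi(r_k)\bigr),
\]
the first sum converges to $\phi(1)=\mathbb E[\log|1+\rho\varepsilon_1|]$ by the uniform limit above and continuity of $\phi$, while the centered sum tends to $0$ almost surely: uniform boundedness of the conditional fourth moments (valid once the $r_k$ are close to $1$) yields $\mathbb E[(\tfrac1n\sum_k(\cdots))^4\mid\mathcal V]=O(n^{-2})$, and summability plus Borel--Cantelli give the claim. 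Integrating out $\mathcal V$ finishes the proof.

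The main obstacle is that one cannot compare $\log|1+\rho r_k\varepsilon_k|$ with $\log|1+\rho\varepsilon_k|$ term by term: the natural error factor $\varepsilon_k/(1+\rho\varepsilon_k)$ fails to be integrable because of its heavy tail near $\varepsilon_k=-1/\rho$, so a direct bound on $\tfrac1n\sum_k\bigl|\log|1+\rho r_k\varepsilon_k|-\log|1+\rho\varepsilon_k|\bigr|$ is unavailable. The conditioning device above avoids this, since what is actually needed is the square-log integrability of $\log|1+\rho r\varepsilon|$ rather than integrability of $\varepsilon/(1+\rho\varepsilon)$. A secondary subtlety is the triangular-array structure, as the summands depend on $n$ through $r_k=r_k^{(n)}$, so the conditional strong law must be run with $n$-dependent moment bounds, controlled only on the full-probability event furnished by the uniform limit $\max_k|r_k-1|\to0$.
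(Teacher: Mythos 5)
Your proof is correct, but it takes a genuinely different route from the paper's. The paper gives no argument for Proposition~\ref{ASIIIIProp} beyond noting that $\sigma_n\to\sigma_0$ a.s.\ and instructing the reader to ``follow the similar arguments as before,'' i.e.\ the proof of Proposition~\ref{ASIThm}; that proof proceeds by a pathwise, term-by-term comparison of $\log|1+\sigma_i\sqrt{\tau}\varepsilon_i|$ with $\log|1+\rho\varepsilon_i|$, bounding each difference by $\log\bigl(1+\epsilon\bigl|\tfrac{\sqrt{\tau}\varepsilon_i}{1+\rho\varepsilon_i}\bigr|\bigr)$ and finishing with the i.i.d.\ strong law and monotone convergence as $\epsilon\to0$. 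You keep the first step --- and your uniform version $\max_{k\le n}|r_k-1|\to0$, proved by Taylor expansion plus the iterated-logarithm and strong-law bounds, is the formulation actually needed for the triangular array $r_k=r_k^{(n)}$ --- but you replace the comparison step by conditioning on $\mathcal{V}$ and proving the law of large numbers directly for the conditionally independent array, via continuity of $\phi(r)=\mathbb{E}[\log|1+\rho r\varepsilon_1|]$, uniform conditional fourth-moment bounds, and Borel--Cantelli. This costs more machinery but buys rigor: the comparison inequality the paper's argument rests on, $\bigl|\log|1+u|\bigr|\le\log(1+|u|)$, is in fact false for $u\in(-1,0)$ (it reverses there) and fails badly as $u\to-1$, which is precisely the regime produced by the singularity of $\log|1+\rho r_k\varepsilon_k|$ sitting at the moving point $\varepsilon_k=-1/(\rho r_k)$; your conditioning device never compares the two singular terms pathwise, so it avoids this issue entirely. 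One correction to your own diagnosis, though: the obstruction is not that the paper's error factor is non-integrable --- the envelope $\log\bigl(1+\epsilon\bigl|\tfrac{\sqrt{\tau}\varepsilon}{1+\rho\varepsilon}\bigr|\bigr)$ has finite expectation, the logarithm taming the heavy tail, and this is exactly what the paper checks --- but rather that no integrable (indeed, no finite) envelope can dominate $\sup_{|r-1|\le\epsilon}\bigl|\log|1+\rho r\varepsilon|-\log|1+\rho\varepsilon|\bigr|$, which is $+\infty$ on an interval of $\varepsilon$-values of positive measure. So your workaround is well-motivated, even if the stated reason for needing it is slightly off target.
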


\begin{proof}
We omit the details of the proof.
\end{proof}

\begin{proposition}
\begin{align}
&\frac{\log|S_{n}|-\mathbb{E}[\log|1+\rho\varepsilon_{1}|]n}{\sqrt{n}}
\\
&\qquad\qquad\qquad\qquad\rightarrow 
N\left(0,\frac{2}{3}(H'(0))^{2}\rho^{2}\beta
+\text{Var}[|1+\rho\varepsilon_{1}|]\right),
\nonumber
\end{align}
in distribution as $n\rightarrow\infty$, where $H(x):=\mathbb{E}[\log|1+\frac{\varepsilon_{1}x}{1+\rho\varepsilon_{1}}|]$.
\end{proposition}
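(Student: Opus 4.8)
The plan is to mirror the argument behind Proposition~\ref{CLTII}, the only new ingredient being that the exact log-normal volatility is replaced by its Euler product. Writing $P_k=\prod_{j=1}^k(1+\omega\sqrt{\tau}V_j)$ with $\omega\sqrt{\tau}=\sqrt{2\beta}/n$, so that $\sigma_k\sqrt{\tau}=\rho P_k$, we have $\log|S_n|=\log|S_0|+\sum_{k=0}^{n-1}\log|1+\rho P_k\varepsilon_k|$. I would condition on the volatility filtration $\mathcal{F}^V=\sigma(V_j:j\ge 1)$ and split each summand as $\log|1+\rho P_k\varepsilon_k|=G(\rho P_k)+\bigl(\log|1+\rho P_k\varepsilon_k|-G(\rho P_k)\bigr)$, where $G(s):=\mathbb{E}[\log|1+s\varepsilon_1|]$, so that the second bracket is centred given $\mathcal{F}^V$. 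Since $G(\rho)$ is the a.s.\ limit from Proposition~\ref{ASIIIIProp}, subtracting $nG(\rho)$ decomposes the target into a volatility part $\sum_{k}(G(\rho P_k)-G(\rho))$ and an asset part $\sum_k(\log|1+\rho P_k\varepsilon_k|-G(\rho P_k))$. The algebraic identity $\log|1+\tfrac{\varepsilon x}{1+\rho\varepsilon}|=\log|1+(\rho+x)\varepsilon|-\log|1+\rho\varepsilon|$ gives $H(x)=G(\rho+x)-G(\rho)$, hence $H'(0)=G'(\rho)$, which is the constant that must appear.

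For the volatility part I would first linearize, $G(\rho P_k)-G(\rho)=G'(\rho)\rho(P_k-1)+r_k$ with $|r_k|\le C(P_k-1)^2$ near $P_k=1$. Expanding the product gives $P_k-1=\tfrac{\sqrt{2\beta}}{n}\sum_{j=1}^kV_j+\text{(higher order)}$, and summing over $k$ (each $V_j$ occurring $n-j$ times) yields $\sum_{k=0}^{n-1}(P_k-1)\approx\sqrt{2\beta}\sum_{j=1}^{n-1}(1-\tfrac{j}{n})V_j$. This is a sum of independent centred Gaussians with variance $2\beta\sum_{j=1}^{n-1}(1-j/n)^2\sim\tfrac23\beta n$, so by the Lindeberg triangular-array central limit theorem $\tfrac{1}{\sqrt n}\sum_k(G(\rho P_k)-G(\rho))\Rightarrow N\bigl(0,\tfrac23(G'(\rho))^2\rho^2\beta\bigr)=N\bigl(0,\tfrac23(H'(0))^2\rho^2\beta\bigr)$, the first variance term. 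The remainders $r_k$ together with the neglected higher-order product terms must be shown to be $o_P(\sqrt n)$ after summation; this is where estimates in the spirit of Lemma~\ref{EquivLemma} enter, controlling $\sup_k|P_k-1|$ and the quadratic sum $\sum_k(P_k-1)^2$.

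For the asset part, conditionally on $\mathcal{F}^V$ the array $D_{n,k}:=\log|1+\rho P_k\varepsilon_k|-G(\rho P_k)$ is a martingale-difference array with respect to $\mathcal{G}_{n,k}=\mathcal{F}^V\vee\sigma(\varepsilon_0,\dots,\varepsilon_{k-1})$, with conditional variance $v(\rho P_k)$ where $v(s):=\mathrm{Var}[\log|1+s\varepsilon_1|]$. Since $P_k\to 1$ a.s., the Cesàro average $\tfrac1n\sum_k v(\rho P_k)\to \mathrm{Var}[\log|1+\rho\varepsilon_1|]$, so the martingale central limit theorem gives $\tfrac1{\sqrt n}\sum_kD_{n,k}\Rightarrow N(0,\mathrm{Var}[\log|1+\rho\varepsilon_1|])$. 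Because the two pieces are built from the independent families $(V_j)$ and $(\varepsilon_k)$, I would obtain joint convergence and asymptotic independence by conditioning on $\mathcal{F}^V$: the conditional characteristic function of the asset part converges to $\exp(-\tfrac12 t^2\mathrm{Var}[\log|1+\rho\varepsilon_1|])$, while the volatility part is $\mathcal{F}^V$-measurable and converges in distribution as above, so their sum is Gaussian with variance equal to the sum $\tfrac23(H'(0))^2\rho^2\beta+\mathrm{Var}[\log|1+\rho\varepsilon_1|]$. The main obstacle is the uniform control of the two linearization remainders at the $\sqrt n$ scale — in particular showing that $\sum_k(P_k-1)^2$ and the discarded product terms contribute only $o_P(\sqrt n)$ — which requires tail and moment bounds for $P_k$ analogous to those used in Lemma~\ref{EquivLemma}, now at central-limit rather than large-deviations precision.
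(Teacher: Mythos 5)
Your proposal is correct and takes essentially the same route the paper intends for this (omitted) proof: condition on the volatility variables $(V_j)$, use the identity $H(x)=G(\rho+x)-G(\rho)$ to linearize the volatility fluctuation so that the weighted sum $\sqrt{2\beta}\sum_{j}(1-j/n)V_j$ yields the variance $\tfrac{2}{3}(H'(0))^{2}\rho^{2}\beta$, and add the conditional asset-noise variance $\mathrm{Var}[\log|1+\rho\varepsilon_{1}|]$ --- exactly the structure of the paper's proof of Proposition~\ref{CLTII}, which the text says to imitate for the Euler--Euler scheme, with your Lindeberg/martingale-CLT packaging being only a stylistic variant of the paper's direct characteristic-function expansion. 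Note also that the variance you derive, with $\mathrm{Var}[\log|1+\rho\varepsilon_{1}|]$, is the intended one; the statement's $\mathrm{Var}[|1+\rho\varepsilon_{1}|]$ is evidently missing a $\log$ (compare Proposition~\ref{CLTII}).
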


\begin{proof}
We omit the details of the proof.
\end{proof}


\section{Comparison with the Continuous-time Model and Applications}
\label{sec:7} 

In this section we compare the predictions of the four time discretizations
discussed above against the known properties of the continuous-time
stochastic volatility model (\ref{SDES}). As mentioned in the Introduction,
the discrete time model is not expected to reproduce the properties of the
continuous time model.
We also discuss the implications of 
our results for the numerical simulation of the model in a discrete time setting. 

\subsection{Martingale property}

The time discretized model in the Euler-Maruyama discretization has the following
property, which is in marked contrast with the corresponding property of the
continuous time model.

\begin{proposition}\label{prop:martingale}
Under all four applications of the Euler-Maruyama discretization, the 
asset price is a martingale, for any value of the correlation 
$\varrho \in [-1,1]$
\begin{eqnarray}\label{MartingaleDiscrete}
\mathbb{E}[S_n] = S_0 \,.
\end{eqnarray}
\end{proposition}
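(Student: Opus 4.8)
The plan is to prove the martingale property separately for each of the four discretization schemes by verifying the one-step conditional expectation identity $\mathbb{E}[S_{i+1}\mid \mathcal{F}_i] = S_i$, where $\mathcal{F}_i$ is the filtration generated by $\{\varepsilon_j, V_j\}_{j<i}$ (or the corresponding Brownian increments). Since in every scheme $S_{i+1}$ is obtained from $S_i$ by multiplication with a factor that depends only on $\sigma_i$ and on the fresh increment $\varepsilon_i$ (respectively $W_{i+1}-W_i$), and since $\sigma_i$ is $\mathcal{F}_i$-measurable while the fresh increment is independent of $\mathcal{F}_i$, the conditional expectation factorizes. Iterating the tower property from $i=n-1$ down to $i=0$ then yields $\mathbb{E}[S_n] = S_0$.

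The key computation is the single-step multiplier in each case. For the two Euler schemes for $S_t$ (Euler-log-Euler and Euler-Euler), the recursion reads $S_{i+1} = S_i(1 + \sigma_i\sqrt{\tau}\,\varepsilon_i)$, and since $\varepsilon_i \sim N(0,1)$ is independent of $\mathcal{F}_i$ with mean zero, we have
\begin{equation}
\mathbb{E}[S_{i+1}\mid \mathcal{F}_i] = S_i\bigl(1 + \sigma_i\sqrt{\tau}\,\mathbb{E}[\varepsilon_i]\bigr) = S_i.
\end{equation}
For the two log-Euler schemes for $S_t$ (log-Euler-log-Euler and log-Euler-Euler), the recursion reads $S_{i+1} = S_i \exp\bigl(\sigma_i(W_{i+1}-W_i) - \tfrac12\sigma_i^2\tau\bigr)$. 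Here $\sigma_i$ is $\mathcal{F}_i$-measurable and $W_{i+1}-W_i \sim N(0,\tau)$ is independent of $\mathcal{F}_i$, so using the Gaussian moment generating function $\mathbb{E}[e^{\sigma_i(W_{i+1}-W_i)}\mid\mathcal{F}_i] = e^{\frac12\sigma_i^2\tau}$ gives
\begin{equation}
\mathbb{E}[S_{i+1}\mid \mathcal{F}_i] = S_i\, e^{-\frac12\sigma_i^2\tau}\,\mathbb{E}\bigl[e^{\sigma_i(W_{i+1}-W_i)}\bigm|\mathcal{F}_i\bigr] = S_i.
\end{equation}
In both families the multiplier has conditional mean one, which is the crux of the argument.

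With the one-step identity in hand, the remaining work is routine: I would conclude by the tower property, writing $\mathbb{E}[S_n] = \mathbb{E}[\mathbb{E}[S_n\mid\mathcal{F}_{n-1}]] = \mathbb{E}[S_{n-1}] = \cdots = \mathbb{E}[S_0] = S_0$, together with a brief check of integrability at each step (the conditional expectations are finite because $\sigma_i$ has finite moments and the Gaussian factors are integrable, so $S_i \in L^1$ by induction). The main conceptual point—rather than an obstacle—is that the argument works for arbitrary correlation $\varrho\in[-1,1]$: the correlation only couples $W$ and $Z$, hence only affects the joint law of $\sigma_i$ and future increments, but the martingale computation conditions on $\sigma_i$ entirely and uses only that the fresh $S$-increment has the stated conditional mean, which is unaffected by correlation. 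This is precisely the feature that distinguishes the discrete-time model from the continuous-time model, where the strict-martingale property fails for $\varrho>0$.
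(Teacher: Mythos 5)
Your proof is correct and follows essentially the same route as the paper's: verify the one-step identity $\mathbb{E}[S_{i+1}\mid\mathcal{F}_i]=S_i$ in each scheme (mean-zero Gaussian increment for the Euler recursions, the Gaussian moment generating function for the log-Euler recursions) and iterate via the tower property. The only cosmetic difference is that the paper makes the correlation explicit by decomposing the driving noise as $\varrho\,\Delta Z_i+\sqrt{1-\varrho^2}\,\Delta W_i^{\perp}$, whereas you argue directly that the fresh increment is independent of $\mathcal{F}_i$ with the required conditional mean regardless of $\varrho$ --- the same observation in different clothing.
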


\begin{proof}

i) Euler-Log Euler discretization. We have
\begin{eqnarray}
S_{i+1} = S_i ( 1 + \varrho \sigma_i  \Delta Z_i + \sqrt{1-\varrho^2}
\sigma_i \Delta W_i^\perp )
\end{eqnarray}
with $\sigma_i = \sigma_0 e^{\omega Z_i - \frac12\omega^2 t_i}$, 
$\Delta Z_i = Z_{i+1} - Z_i$ and $W_i^\perp$ is a standard Brownian
motion uncorrelated with $Z_i$. It is easy to see that one has
$\mathbb{E}[S_{i+1}| \mathcal{F}_i] = S_i$.
Repeated application of the tower relation for the conditional expectations
gives (\ref{MartingaleDiscrete}).

ii) Log  Euler-Log Euler. We have
\begin{eqnarray}
S_{i+1} = S_i e^{\varrho\sigma_i \Delta Z_i + \sqrt{1-\varrho^2} \sigma_i \Delta W_i^\perp
- \frac12 \sigma_i^2 \tau} \,.
\end{eqnarray}
We have again the relation $\mathbb{E}[S_{i+1}| \mathcal{F}_i] = S_i$
which leads to (\ref{MartingaleDiscrete}) by repeated application of the 
tower relation for conditional expectations.

The same result is obtained also for the Log Euler-Euler, and Euler-Euler
schemes, and we omit the proofs. 
\end{proof}

\subsection{Large-Maturity Limit}

The properties of the continuous-time model (\ref{SDES}) in the large-maturity
limit $t\to \infty$ have been studied in \cite{Forde1}. The asset price $S_t$
converges in distribution as $t\to \infty$ \cite{Forde1}
\begin{equation}\label{Sinfty}
\lim_{t\to \infty}  S_t = 
\begin{cases}
S_\infty  &  \omega > 0 \\
0         &  \omega=0 
\end{cases}
\,.
\end{equation}
$S_\infty$ is a random variable which has
$\mathbb{P}(S_\infty>0)>0$ and is known in closed form. 

The corresponding limit result for the growth rate of $\log S_t$ is
\begin{equation}
\lim_{t\to \infty} \frac{1}{t} \log S_t = 
\begin{cases}
0  & \omega > 0 \\
- \frac12\sigma_0^2 &  \omega=0 
\end{cases}
\,.
\end{equation}
The discontinuous behavior for $\omega=0$ and $\omega>0$ is due to the different
limits for the volatility $\sigma_t$ in the $t\to \infty$ limit. If $\omega=0$ 
the volatility is constant $\sigma_t = \sigma_0$, while for any positive
non-zero value of the parameter $\omega>0$, the limit is 
$\lim_{t\to\infty}\sigma_t = 0$.

We would like to compare these results with the large-time limit in the
time discretization considered in this paper, keeping in mind that
the $n\to \infty$ limit considered here is taken at fixed $\beta$.
This means that $\omega$ is small and approaches zero as
\begin{equation}
\omega = \frac{\sqrt{2\beta}}{n \sqrt{\tau}} \to 0\,.
\end{equation}
This is different from the $t\to \infty$ limit in \cite{Forde1},
which is taken at fixed $\omega>0$. 

The time discretizations considered above give different
results in the $n\to \infty$ limit at fixed time step $\tau$:

i) Euler-Log-Euler (Proposition~\ref{ASIThm}): 
\begin{equation}
\lim_{n\to \infty} \frac{1}{n} \log |S_n| = \mathbb{E}[\log | 1+\rho \varepsilon_1|]
\end{equation}
This expectation is negative for $\rho < 1.556$, and positive for 
$\rho>1.556$, as seen from Fig.~\ref{Fig:expint}.
We get thus 
\begin{equation}
\lim_{n\to \infty} S_n = 
\begin{cases}
0 & \rho < 1.556 \\
+\infty &  \rho > 1.556 \\
\end{cases}
\, .
\end{equation}
For both cases the result is different from the long maturity asymptotics
in (\ref{Sinfty}). We note that the behavior in the first case $\rho < 1.556$ 
is closer to that in the Black-Scholes model.
The condition $\rho = \sigma_0 \sqrt{\tau} < 1.556$ is satisfied in all
practical time discretizations. 

ii) Log-Euler-Log-Euler (Proposition~\ref{ASIIProp}):
\begin{equation}
\lim_{n\to \infty} \frac{1}{n} \log S_n = - \frac12 \sigma_0^2 \tau ,
\end{equation}
which gives
\begin{equation}
\lim_{n\to \infty} S_n = 0\,,
\end{equation}
similar to the asymptotics in the Euler-Log-Euler scheme for $\rho < 1.556$.

iii) Log-Euler-Euler (Proposition~\ref{ASIIIProp}):
\begin{equation}
\lim_{n\to \infty} \frac{1}{n} \log S_n = - \frac12 \sigma_0^2 \tau .
\end{equation}
This case is similar to (ii).

iv) Euler-Euler (Proposition~\ref{ASIIIIProp}):
\begin{equation}
\lim_{n\to \infty} \frac{1}{n} \log |S_n| = \mathbb{E}[\log | 1+\rho \varepsilon_1|] .
\end{equation}
This case is similar to (i). 


\subsection{Moment Finiteness}
In the uncorrelated stochastic volatility model (\ref{SDES}) in continuous time
all moments of order $q>1$ are infinite for any time $t > 0$
\cite{Jourdain,LionsMusiela}, see (\ref{Tstaru}). 

\begin{figure}
    \centering
   \includegraphics[width=4.0in]{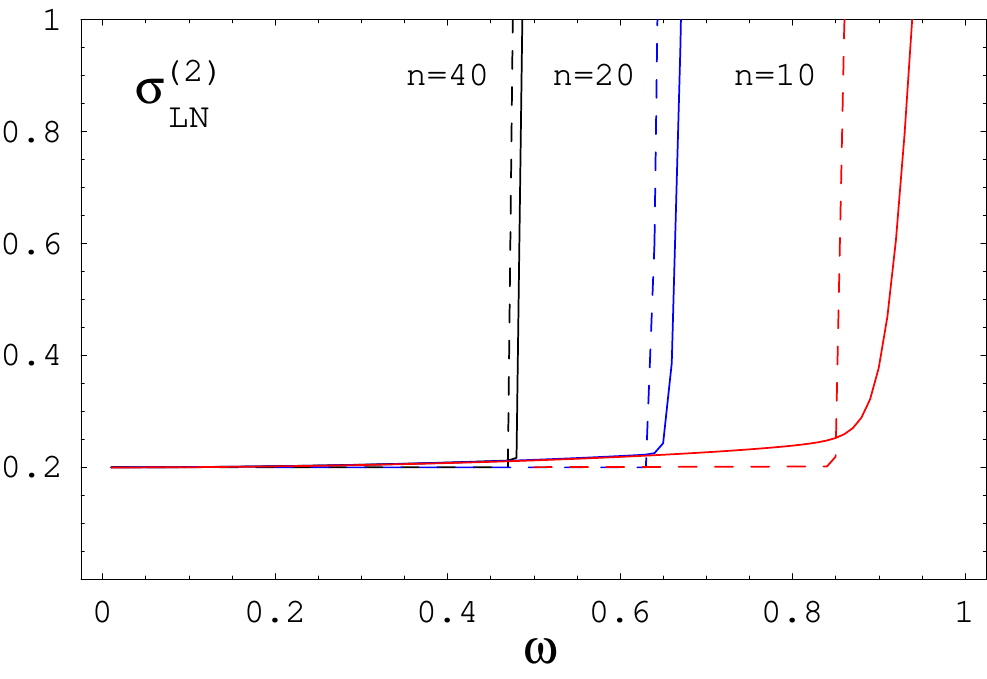}
   \includegraphics[width=4.0in]{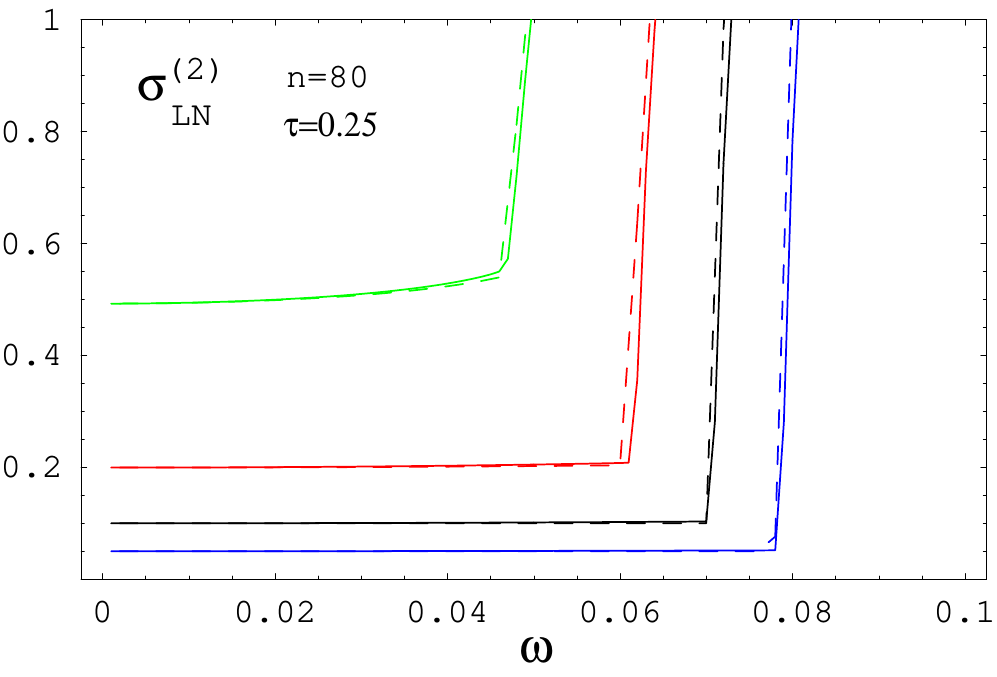}
    \caption{
Plots of the log-normal equivalent volatility of the second 
moment of the asset price in the Euler-Log-Euler scheme $\sigma_{\rm LN}^{(2)}$ 
vs $\omega$. Solid curves: finite $n$ results, dashed curves: asymptotic 
results using (\ref{sigLNasympt}) in terms of the Lyapunov exponent $\lambda(\rho,\beta;2)$.
Above: Fixed $\sigma_0 = 20\%$ and $t_n=1$ for three time discretizations with 
$n=10,20,40$ time steps (from right to left).
Below: Fixed $t_n=20$ and $\sigma_0 = 5\%, 10\%, 20\%, 50\%$
(from bottom to top). The time step is $\tau=0.25$.
}
\label{Fig:2}
 \end{figure}

In the time discretized version, this property holds only in the 
Log-Euler-Log-Euler discretization,
which is thus closest to the continuous-time model in this respect.
In particular, the variance of the asset price $S_t$ in the
Log-Euler-Log-Euler scheme is infinite for any $t>0$.
This may be inconvenient in Monte Carlo simulations of the model, where a
finite payoff variance is required for a reliable error estimate of expectation
values. If a finite variance for $S_n$ is needed, then the Euler-Log-Euler discretization may be more
appropriate, as all positive integer moments $\mathbb{E}[(S_n)^q]$
are finite in this scheme. This feature must be balanced against the potential 
inconvenience arising from the fact that the asset price $S_n$ is not positive definite.

In the Euler-Log-Euler discretization, the moments $q>1$ of the asset price 
$S_n$ are finite, but their numerical values can explode to very large values, 
exceeding double precision. 
We illustrate the explosion of the $q=2$ moment of the asset price $S_n$ in 
the Euler-Log-Euler scheme in Figure~\ref{Fig:2},
which shows plots of the equivalent log-normal volatility 
$\sigma_{\rm LN}^{(2)}(t_n)$. We define the equivalent log-normal
volatility of the $q-$th moment at maturity $t_n$ as
\begin{equation}
\sigma_{\rm LN}^{(q)2}(t_n) = \frac{2}{q(q-1)}\frac{1}{t_n} 
\log \frac{\mathbb{E}[(S_n)^q]}{S_0^q}\,.
\end{equation}
These quantities are defined such that $\sigma_{\rm LN}^{(q)}(t_n)$ is the
standard deviation of $\log S_n$ assuming that this random variable is log-normally 
distributed. This assumption is an exact result for $\omega=0$, 
which gives $\lim_{\omega \to 0} \sigma_{\rm LN}^{(q)}(t_n) = \sigma_0$ for
all $q\in \mathbb{Z}_+$. 
In the $n\to \infty$ limit, $(\sigma_{\rm LN}^{(q)}(t_n))^2$ is related to 
the Lyapunov exponent of the $q$ moment as 
\begin{equation}\label{sigLNasympt}
\lim_{n \to \infty} \sigma_{\rm LN}^{(q)2}(t_n) = \frac{2}{q(q-1)}
\frac{1}{\tau}\lambda(\rho,\beta;q)\,.
\end{equation}

\begin{figure}
    \centering
   \includegraphics[width=4.0in]{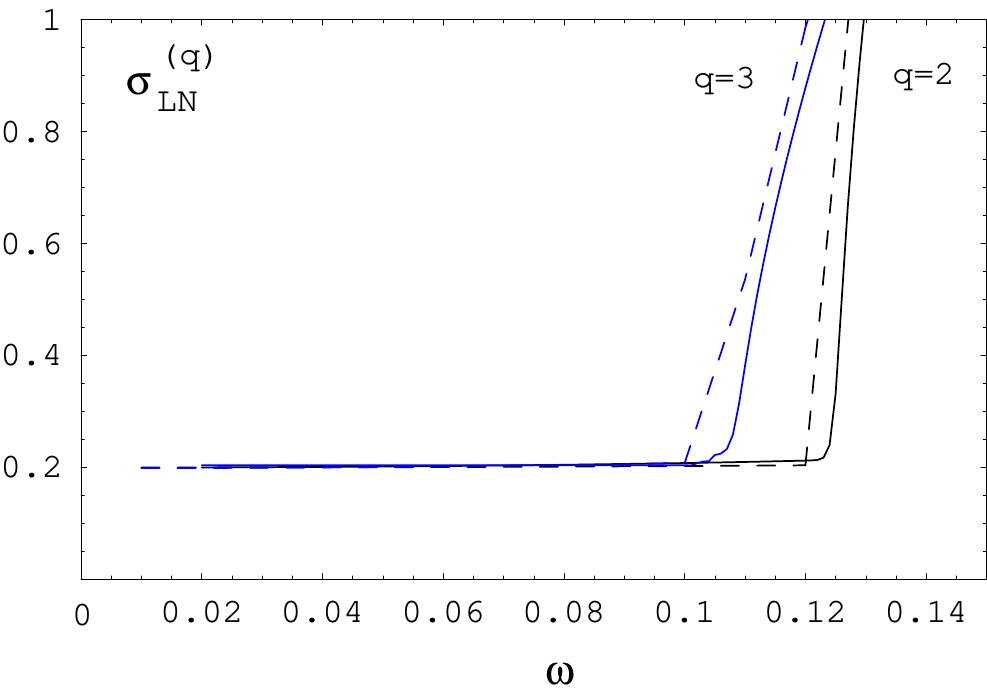}
    \caption{
Plots of the log-normal equivalent volatility of the second 
moment $\sigma_{\rm LN}^{(2)}$ (black curves) and of the third moment 
$\sigma_{\rm LN}^{(3)}$ (blue curves)
of the asset price in the Euler-Log-Euler scheme $\sigma_{\rm LN}^{(2)}$ 
vs $\omega$. Solid curves: finite $n$ results, dashed curves: asymptotic 
results using (\ref{sigLNasympt}) in terms of the Lyapunov exponent $\lambda(\rho,\beta;2)$.
Fixed $\sigma_0 = 20\%$ and $\tau=0.25, n=40$.
}
\label{Fig:3}
 \end{figure}

The positive integer moments of the asset price $S_n$ in the Euler-Log-Euler 
scheme can be computed exactly for finite $n$ using Proposition~\ref{prop:Mn}.
We used this method for the numerical
evaluation of the solid curves in Figures~\ref{Fig:2} and \ref{Fig:3}. 
From these plots one
observes that the $q=2$ moment has a rapid increase at a certain value of 
the model parameter $\omega$. This phenomenon is associated with the rapid
increase of the Lyapunov exponents $\lambda(\rho,\beta;q)$ 
when crossing the phase transition curve $\beta_{\rm cr}^{(q)}(\rho)$
from smaller to larger values of $\beta$.
For $n\gg 1$ the moments are given approximatively by
\begin{equation}
\mathbb{E}[(S_n)^q] = e^{\lambda(\rho,\beta;q) n + o(n)}\,.
\end{equation}
When crossing the phase transition curve in the direction of increasing $\beta$,
the Lyapunov exponent is continuous but it starts growing very rapidly with $\beta$.
See for example the plots in Fig.~\ref{Fig:lambda2} for $q=2$. This appears 
in numerical simulations as a numerical explosion of the respective moment. 

To illustrate this point, we show in Figure~\ref{Fig:2} (dashed curves)
also the $n\to \infty$ asymptotic
results for $\sigma_{\rm LN}^{(2)}(t_n)$ which are given by the Lyapunov exponent
$\lambda(\rho,\beta;2)$ in terms of the relation (\ref{sigLNasympt}). They are
seen to agree very well with the finite $n$ results (solid curves). The agreement
becomes better as $n$ is larger. 

The asymptotic result (\ref{sigLNasympt}) allows one to compute the explosion
value of $\omega$ in terms of the critical curve of the Lyapunov exponent 
$\lambda(\rho,\beta;2)$.
For example, the red curve in the lower plot of Figure~\ref{Fig:2} corresponds
to $\sigma_0=0.2,\tau=0.25$ and thus $\rho=\sigma_0\sqrt{\tau}  =0.1$. 
The phase transition takes place at 
$\beta_{\rm cr}^{(2)}(\rho)$ which is given by the
$q=2$ phase transition curve in Figure~\ref{Fig:ptq2}. 
As discussed in Sec.~\ref{sec:q2}, this is given to a good approximation by 
the mean-field result (\ref{q2app}) and is $\beta_{\rm cr}^{(2)}(0.1) = 3.07$. 
The value of $\omega$
at which the phase transition curve is crossed is $\omega_{\rm cr}^2  = 
\frac{2\beta_{\rm cr}^{(2)}}{t_n n} = (0.062)^2$, which agrees very well
with the explosion value of this parameter observed in Figure~\ref{Fig:2}.

The same explosive phenomenon appears for all positive integer moments in the 
Euler-Log-Euler scheme. This is illustrated in Figure~\ref{Fig:3} where we 
compare the log-normal equivalent volatilities of the $q=2$ and $q=3$ moments. 
At given $\rho$, higher moments explode at smaller values of $\beta$, as seen
from the relative position of the phase transition curves in 
Figure~\ref{Fig:ptq2}.

In conclusion, in order to avoid the numerical explosion of the 
$q$-th moment in the Euler-Log-Euler scheme, the simulation of the model 
must be restricted to the region 
of sufficiently small values of $\beta$ such that the corresponding phase 
transition curve $\beta_{\rm cr}^{(q)}(\rho)$ is not crossed from above. 
For fixed simulation time step $\tau$ this is equivalent to restricting $\omega$
sufficiently small values, below a critical value. In general this gives a 
constraint on the simulation parameters $(\tau, \sigma_0, \omega, n)$ which 
has to be satisfied for all time steps of the simulation.


\section{Appendix: Technical Proofs}\label{AppendixSection}

\subsection{Proofs of the Results in Section \ref{EulerLogEulerSection}}

\begin{proof}[Proof of Theorem \ref{EulerLogEulerThm}]
Recall that $Z_{i+1}-Z_{i}=\sqrt{\tau}V_{i}$, where $V_{i}$ are i.i.d. $N(0,1)$ random variables.
Therefore,
\begin{align}
\mathbb{E}\left[\prod_{k=0}^{n-1}e^{\log\rho Y_{k}+\omega Z_{k}Y_{k}}\right]
&=\mathbb{E}\left[e^{\log\rho\sum_{k=0}^{n-1}Y_{k}+\sum_{k=0}^{n-1}(\sum_{j=0}^{k-1}\sigma\sqrt{\tau}V_{j})Y_{k}}\right]
\\
&=\mathbb{E}\left[e^{\log\rho\sum_{i=0}^{n-1}Y_{k}+\sum_{j=0}^{n-2}(\sum_{k=j+1}^{n-1}Y_{k})\omega\sqrt{\tau}V_{j}}\right]
\nonumber
\\
&=\mathbb{E}\left[e^{\log\rho\sum_{i=0}^{n-1}Y_{k}+\frac{\beta}{n^{2}}\sum_{j=0}^{n-2}(\sum_{k=j+1}^{n-1}Y_{k})^{2}}\right]
\nonumber
\end{align}
By large deviations theory in probability, the Mogulskii theorem (see e.g. \cite{Dembo}) says that
$\mathbb{P}(\frac{1}{n}\sum_{i=1}^{\lfloor n\cdot\rfloor}Y_{i}\in\cdot)$ satisfies a sample path large deviation principle
on the space $L_{\infty}[0,1]$ (i.e. the space of functions on $[0,1]$ equipped with supremum norm)
with the rate function
\begin{equation}
\int_{0}^{1}\mathcal{I}_{q}(g'(x))dx,
\end{equation}
where $g(0)=0$, $g$ is absolutely continuous, $0\leq g'\leq 1$, and the rate function is $+\infty$ otherwise
and $\mathcal{I}_{q}$ is a relative entropy function given by
\begin{align}
\mathcal{I}_{q}(x)
&=\sup_{\theta\in\mathbb{R}}\left\{\theta x-\log\mathbb{E}[e^{\theta Y_{1}}]\right\}
\\
&=\sup_{\theta\in\mathbb{R}}\left\{\theta x-\log\sum_{j=0}^{[q/2]}\frac{q!}{(2j)!(q-2j)!}(2j-1)!!e^{2j\theta}\right\}+\log m(q),
\nonumber
\end{align}
Informally speaking, it says that
\begin{equation}
\mathbb{P}\left(\frac{1}{n}\sum_{i=1}^{\lfloor nx\rfloor}Y_{i}\simeq g(x), 0\leq x\leq 1\right)
\simeq e^{-n\int_{0}^{1}\mathcal{I}_{q}(g'(x))dx+o(n)},
\end{equation}
as $n\rightarrow\infty$.

In large deviations theory, the celebrated Varadhan's lemma says that if $P_{n}$ satisfies
a large deviation principle with rate function $\mathcal{I}(x)$ on $\mathbb{X}$ and $F:\mathbb{X}\rightarrow\mathbb{R}$
is a bounded and continuous function, then \cite{VaradhanII}
\begin{equation}
\lim_{n\rightarrow\infty}\frac{1}{n}\log\int_{\mathbb{X}}e^{nF(x)}dP_{n}(x)=\sup_{x\in\mathbb{X}}\{F(x)-\mathcal{I}(x)\}.
\end{equation}
It is easy to check that for any $g\in L_{\infty}[0,1]\cap\mathcal{G}$,
\begin{equation}
g\mapsto\log\rho\cdot g(1)+\beta\int_{0}^{1}(g(1)-g(x))^{2}dx
\end{equation}
is a bounded and continuous map. Moreover,
\begin{align}
&n\left[\log\rho\left(\frac{1}{n}\sum_{i=1}^{n}Y_{i}\right)
+\beta\int_{0}^{1}\left(\frac{1}{n}\sum_{i=1}^{n}Y_{i}-\frac{1}{n}\sum_{i=1}^{\lfloor nx\rfloor}Y_{i}\right)^{2}dx\right]
\\
&=\log\rho\sum_{i=1}^{n}Y_{i}+\frac{\beta}{n}\int_{0}^{1}\left(\sum_{i=\lfloor nx\rfloor+1}^{n}Y_{i}\right)^{2}dx
\nonumber
\\
&=\log\rho\sum_{i=1}^{n}Y_{i}+\frac{\beta}{n^{2}}\sum_{j=0}^{n-1}\left(\sum_{i=j+1}^{n}Y_{i}\right)^{2},
\nonumber
\end{align}
whose difference from $\log\rho\sum_{i=0}^{n-1}Y_{i}+\frac{\beta}{n^{2}}\sum_{j=0}^{n-2}(\sum_{i=j+1}^{n-1}Y_{i})^{2}$
can be bounded by a deterministic constant.

Hence, by Varadhan's lemma, we conclude that
\begin{align}
\lambda(\rho,\beta;q)&=\lim_{n\rightarrow\infty}\frac{1}{n}\log\mathbb{E}[(S_{n})^{q}]
\\
&=\sup_{g\in\mathcal{G}_{q}}\left\{\log\rho g(1)+\beta\int_{0}^{1}(g(1)-g(x))^{2}dx-\int_{0}^{1}I_{q}(g'(x))dx\right\},
\nonumber
\end{align}
where $I_{q}(x)$ was defined in \eqref{IqFormula} and $\mathcal{G}_{q}$ was defined in \eqref{GqFormula}.
\end{proof}

\begin{proof}[Proof of Proposition \ref{AsympProp}]
(i) Since $x\to I_{q}(x)$ is a convex function, 
Jensen's inequality implies the upper bound
\begin{align}
\lambda(\rho,0;q)
&\leq\sup_{g\in\mathcal{G}_{q}}
\left\{\log\rho g(1)-I_{q}\left(\int_{0}^{1}g'(x)dx\right)\right\}
\\
&=\sup_{g\in\mathcal{G}_{q}}\left\{\log\rho g(1)-I_{q}(g(1))\right\}
\nonumber
\\
&=\sup_{0\leq x\leq 2[q/2]}\left\{\log\rho x-I_{q}(x)\right\}. \nonumber
\end{align}
On the other hand, choosing $g(x)=g(1)x$, it is clear that we have the lower bound
\begin{equation}
\lambda(\rho,0;q)\geq\sup_{0\leq x\leq 2[q/2]}\left\{x\log\rho-I_{q}(x)\right\}.
\end{equation}
Therefore,
\begin{equation}
\lambda(\rho,0;q)=\sup_{0\leq x\leq 2[q/2]}\left\{\log\rho x-I_{q}(x)\right\}
=f_{q}(\log\rho)
\end{equation}

(ii) We get a lower bound by taking $g(x)=2[q/2]x$,
\begin{align}
\lambda(\rho,\beta;q)
&\geq
2[q/2]\log\rho+\beta(2[q/2])^{2}\int_{0}^{1}(1-x)^{2}dx-I_{q}(2[q/2])
\\
&=2[q/2]\log\rho+\beta\frac{4}{3}[q/2]^{2}-I_{q}(2[q/2]),
\nonumber
\end{align}
which gives the lower bound in (\ref{bound1}).

On the other hand, by Mean Value Theorem, 
$|g(1)-g(x)|\leq 2[q/2]|1-x|$ for any $0\leq g'\leq 2[q/2]$. Also, using the Jensen
inequality for the last term as in (i), we get the upper bound
\begin{align}
&\lambda(\rho,\beta;q)
\\
&=
\sup_{g(0)=0,0\leq g'\leq 2[q/2]}
\left\{\log\rho g(1)+\beta\int_{0}^{1}(g(1)-g(x))^{2}dx
-\int_{0}^{1}I_{q}(g'(x))dx\right\}
\nonumber
\\
&\leq\sup_{g(0)=0,0\leq g'\leq 2[q/2]}
\left\{\log\rho g(1)+4\beta[q/2]^{2}\int_{0}^{1}(1-x)^{2}dx
-\int_{0}^{1}I_{q}(g'(x))dx\right\}
\nonumber
\\
&=\frac{4}{3}\beta[q/2]^{2}
+\sup_{0\leq x\leq 2[q/2]}\left\{x\log\rho-I_{q}(x)\right\}.\nonumber
\end{align}
This proves the upper bound in (\ref{bound1}).

(iii)
Dividing by $\beta$ in (\ref{bound1})
and taking the $\beta\to\infty$ limit implies the relation (\ref{LargeBeta}).
Moreover, 
\begin{align}\label{DifferenceBound}
&\sup_{0\leq x\leq 2[q/2]}\left\{x\log\rho-I_{q}(x)\right\}-\{2[q/2]\log\rho-I_{q}(2[q/2])\}
\\
&=f_{q}(\log\rho)-\{2[q/2]\log\rho-I_{q}(2[q/2])\}
\nonumber
\\
&=\log\left(\sum_{j=0}^{[q/2]}\frac{q!}{(2j)!(q-2j)!}(2j-1)!!\rho^{2j}\right)-\{2[q/2]\log\rho-I_{q}(2[q/2])\}
\nonumber
\\
&=\log\left(\frac{q!}{(2[q/2])!(q-2[q/2])!}(2[q/2]-1)!!\right)+I_{q}(2[q/2])+o(1),
\nonumber
\end{align}
as $\rho\rightarrow\infty$. Recall that $I_{q}(2[q/2])=\sup_{\theta\in\mathbb{R}}\{\theta 2[q/2]-f_{q}(\theta)\}$ and the optimal
$\theta$ satisfies
\begin{equation}
f'_{q}(\theta)=\frac{\sum_{j=0}^{[q/2]}\frac{q!}{(2j)!(q-2j)!}(2j-1)!!e^{2j\theta}(2j)}
{\sum_{j=0}^{[q/2]}\frac{q!}{(2j)!(q-2j)!}(2j-1)!!e^{2j\theta}}=2[q/2],
\end{equation}
which implies that $\theta=\infty$. But as $\theta\rightarrow\infty$,
\begin{align}
&\theta 2[q/2]-\log\left(\sum_{j=0}^{[q/2]}\frac{q!}{(2j)!(q-2j)!}(2j-1)!!e^{2j\theta}\right)
\\
&\qquad\qquad
\rightarrow-\log\left(\frac{q!}{(2[q/2])!(q-2[q/2])!}(2[q/2]-1)!!\right).
\nonumber
\end{align}
Hence, from \eqref{DifferenceBound}, we get
\begin{equation}
\sup_{0\leq x\leq 2[q/2]}\left\{x\log\rho-I_{q}(x)\right\}-\{2[q/2]\log\rho-I_{q}(2[q/2])\}\rightarrow 0,
\end{equation}
as $\rho\rightarrow\infty$. Therefore, \eqref{LargeRho} follows from (ii).
\end{proof}

\begin{proof}[Proof of Proposition \ref{prop:LambdaSimple}]
The functional $\Lambda[h]$ can be put in the following form
by expressing the double integral using the integral equation 
(\ref{EulerLagrange}) 
\begin{equation}\label{Lam1}
\Lambda[f]  = \frac12 \log\rho \int_0^1  f(z) dz+ \frac12
\int_0^1 f(x) I'_q(f(x)) dx- \int_0^1 I_q(f(x)) dx
\end{equation}
We will show next that the integrals appearing in this formula
can be expressed only in terms of $h(1)$. 

The first integral is
\begin{equation}\label{I0def}
I_0 \equiv \int_0^1 f(x) dx= \frac{1}{2\beta} h'(0) = \frac{1}{\sqrt{\beta}}
\sqrt{f_q(h(1)) - f_q(\log \rho)}
\end{equation}

The second integral can be expressed using the relation (\ref{fvsh2}) 
\begin{equation}
I_1 \equiv \int_0^1 f(x) I'_q(f(x)) dx= \int_0^1 f(x) h(x) dx= 
\frac{1}{2\beta} h'(0) h(0) + \frac{1}{2\beta}
\int_0^1 [h'(y)]^2 dy
\end{equation}

Finally, the last integral can be computed using the 
explicit form of the Legendre transform
$I_q(f(x)) = f(x) h(x) - f_q(h(x))$ as
\begin{align}
I_2 \equiv \int_0^1 I_q(f(x)) dx
&= \int_0^1 \{ f(x) h(x) - f_q(h(x)) \} dx
\\
&= I_1 - \frac{1}{2\beta}
\int_0^1 V(h(x)) dx
\nonumber
\\
&= 
I_1 - \frac{1}{2\beta} V(h(1)) + \frac{1}{4\beta} \int_0^1 [h'(y)]^2 dy \,.
\nonumber
\end{align}

Both $I_1$ and $I_2$ contain the integral of $[h'(y)]^2$ which would appear
to require knowledge of the entire function $h(y)$. However, this integral
can be simplified if we change the integration variable to $h(y)$. This gives
\begin{align}\label{I3}
\int_{0}^{1}[h'(y)]^{2}dy
&=\int_{h(0)=\log\rho}^{h(1)} h'(y) dh(y)
\\
&=\int_{h(0)=\log\rho}^{h(1)} \sqrt{2[V(h(1)) - V(x)]} dx
\nonumber
\\
&= 2\sqrt{\beta} \int_{\log\rho}^{h(1)} \sqrt{f_q(h(1)) - f_q(x)} dx. 
\nonumber
\end{align}

We can insert now all the pieces into (\ref{Lam1}) and get
\begin{align}
\Lambda[f]&=\frac12\log\rho I_0 + \frac12 I_1 - I_2
\\
&= \frac12\log\rho I_0 - \frac12 I_1 + \frac{1}{2\beta} V(h(1)) - \frac{1}{4\beta}
 \int_0^1 [h'(y)]^2 dy
\nonumber
\\
&=\frac12\log\rho I_0 - \frac{1}{4\beta} h'(0) h(0) + \frac{1}{2\beta} V(h(1))
- \frac{1}{2\beta}
 \int_0^1 [h'(y)]^2 dy
\nonumber
\end{align}
The first two terms cancel each other. Using the result (\ref{I3}) for the
integral in the last term we get the final result (\ref{Lambdasimple}) for 
the functional $\Lambda[f]$.
This concludes the proof of this result. 
\end{proof}

\begin{proof}[Proof of Proposition \ref{ASIThm}]
In the Euler-Log-Euler discretization, we have
\begin{equation}
|S_{n}|=S_{0}e^{\sum_{i=0}^{n-1}\log|1+\sigma_{i}\sqrt{\tau}\varepsilon_{i}|}.
\end{equation}
First, we claim that
\begin{equation}\label{ASI}
\lim_{n\rightarrow\infty}\left|\frac{1}{n}\sum_{i=0}^{n-1}\log|1+\sigma_{i}\sqrt{\tau}\varepsilon_{i}|
-\frac{1}{n}\sum_{i=0}^{n-1}\log|1+\sigma_{0}\sqrt{\tau}\varepsilon_{i}|\right|
= 0 \qquad\text{a.s.}
\end{equation}
We start by proving 
that $\sigma_{n}\rightarrow\sigma_{0}$ a.s. as $n\rightarrow\infty$.
Notice that $\frac{1}{2}\omega^{2}t_{n}=\frac{1}{2}\omega^{2}n\tau=\frac{\beta}{n}\rightarrow 0$
as $n\rightarrow\infty$ and $\omega Z_{n}=\omega\sqrt{\tau}B_{n}=\frac{\sqrt{2\beta}}{n}B_{n}$, 
where $B_{n}$ is a standard Brownian motion.
It is well known that for standard Brownian motion $B_{t}$, we have 
$|B_{t}|/t\rightarrow 0$ a.s. as $t\rightarrow\infty$.
Thus, we conclude that $\sigma_{n}\rightarrow\sigma_{0}$ a.s. as $n\rightarrow\infty$.

From this it follows that for any $\epsilon>0$, for a.e. $\omega$, there 
exists $N(\omega)$ so that for any $n\geq N$, 
$|\sigma_{n}-\sigma_{0}|\leq\epsilon$. It is clear that
\begin{equation}
\lim_{n\rightarrow\infty}\frac{1}{n}\sum_{i=0}^{N-1}\log|1+\sigma_{i}\sqrt{\tau}\varepsilon_{i}|
=\lim_{n\rightarrow\infty}\frac{1}{n}\sum_{i=0}^{N-1}\log|1+\sigma_{0}\sqrt{\tau}\varepsilon_{i}|=0.
\end{equation}
On the other hand,
\begin{align}
&\left|\frac{1}{n}\sum_{i=N}^{n-1}\log|1+\sigma_{i}\sqrt{\tau}\varepsilon_{i}|
-\frac{1}{n}\sum_{i=N}^{n-1}\log|1+\sigma_{0}\sqrt{\tau}\varepsilon_{i}|\right|
\\
&\leq\frac{1}{n}\sum_{i=N}^{n-1}\left|\log\left|\frac{1+\sigma_{i}\sqrt{\tau}\varepsilon_{i}}
{1+\sigma_{0}\sqrt{\tau}\varepsilon_{i}}\right|\right|
\nonumber
\\
&=\frac{1}{n}\sum_{i=N}^{n-1}\left|\log\left|1+\frac{(\sigma_{i}-\sigma_{0})\sqrt{\tau}\varepsilon_{i}}
{1+\sigma_{0}\sqrt{\tau}\varepsilon_{i}}\right|\right|
\nonumber
\\
&\leq\frac{1}{n}\sum_{i=N}^{n-1}\log\left(1+\left|\frac{(\sigma_{i}-\sigma_{0})\sqrt{\tau}\varepsilon_{i}}
{1+\sigma_{0}\sqrt{\tau}\varepsilon_{i}}\right|\right)
\nonumber
\\
&\leq\frac{1}{n}\sum_{i=N}^{n-1}\log\left(1+\epsilon\left|\frac{|\sqrt{\tau}\varepsilon_{i}|}
{1+\sigma_{0}\sqrt{\tau}\varepsilon_{i}}\right|\right).
\nonumber
\end{align}
One can check that $\mathbb{E}[\log(1+\epsilon|\frac{\sqrt{\tau}\varepsilon_{1}}
{1+\sigma_{0}\sqrt{\tau}\varepsilon_{1}}|)]$ is finite. By strong law of large numbers,
\begin{align}
&\limsup_{n\rightarrow\infty}\left|\frac{1}{n}\sum_{i=N}^{n-1}\log|1+\sigma_{i}\sqrt{\tau}\varepsilon_{i}|
-\frac{1}{n}\sum_{i=N}^{n-1}\log|1+\sigma_{0}\sqrt{\tau}\varepsilon_{i}|\right|
\\
&\qquad\qquad\qquad
\leq\mathbb{E}\left[\log\left(1+\epsilon\left|\frac{\sqrt{\tau}\varepsilon_{1}}
{1+\sigma_{0}\sqrt{\tau}\varepsilon_{1}}\right|\right)\right]\qquad\text{a.s.}
\nonumber
\end{align}
Let $\epsilon\rightarrow 0$ and by monotone convergence theorem,
\begin{equation}
\lim_{n\rightarrow\infty}\left|\frac{1}{n}\sum_{i=N}^{n-1}\log|1+\sigma_{i}\sqrt{\tau}\varepsilon_{i}|
-\frac{1}{n}\sum_{i=N}^{n-1}\log|1+\sigma_{0}\sqrt{\tau}\varepsilon_{i}|\right|=0.\qquad\text{a.s.}
\end{equation}
This concludes the proof of (\ref{ASI}).

Finally, since $\mathbb{E}[\log|1+\rho\varepsilon_{1}|]$ is finite, 
by the strong law of large numbers,
\begin{equation}\label{ASII}
\lim_{n\rightarrow\infty}
\frac{1}{n}\sum_{i=0}^{n-1}\log|1+\sigma_{0}\sqrt{\tau}\varepsilon_{i}|
=\mathbb{E}\left[\log\left|1+\rho\varepsilon_{1}\right|\right]\qquad\text{a.s.}
\end{equation}
By \eqref{ASI} and \eqref{ASII}, we conclude that
\begin{equation}
\lim_{n\rightarrow\infty}\frac{1}{n}\log|S_{n}|=\mathbb{E}\left[\log\left|1+\rho\varepsilon_{1}\right|\right]\qquad\text{a.s.}
\end{equation}
\end{proof}

\begin{proof}[Proof of Proposition \ref{CLTII}]
The proof is similar to the proof of Proposition \ref{CLTI}. We will give an outline here and detailed
estimates are omitted. Let $\mathbb{E}_{V}$ and $\text{Var}_{V}$ be the expectation and variance conditional on $(V_{j})_{j=1}^{\infty}$.
For any $\theta\in\mathbb{R}$,
\begin{align}
&\mathbb{E}\left[e^{\frac{i\theta}{\sqrt{n}}(\log|S_{n}|-\mathbb{E}[\log|1+\rho\varepsilon_{1}|]n)}\right]
\\
&=S_{0}^{\frac{i\theta}{\sqrt{n}}}\mathbb{E}\left[e^{\sum_{i=0}^{n-1}\log\mathbb{E}_{V}\left[e^{\frac{i\theta}{\sqrt{n}}\log|1+\sigma_{i}\varepsilon_{1}|}\right]
-\sqrt{n}i\theta\mathbb{E}[\log|1+\rho\varepsilon_{1}|]}\right]
\nonumber
\\
&=S_{0}^{\frac{i\theta}{\sqrt{n}}}\mathbb{E}\left[e^{\sum_{i=0}^{n-1}\log(1+\frac{i\theta}{\sqrt{n}}\mathbb{E}_{V}\log|1+\sigma_{i}\varepsilon_{1}|
-\frac{\theta^{2}}{2n}\mathbb{E}_{V}[(\log|1+\sigma_{i}\varepsilon_{1}|)^{2}]+O(n^{-3/2}))-\sqrt{n}i\theta\mathbb{E}[\log|1+\rho\varepsilon_{1}|]}\right]
\nonumber
\\
&=S_{0}^{\frac{i\theta}{\sqrt{n}}}\mathbb{E}\left[e^{\frac{i\theta}{\sqrt{n}}
\sum_{i=0}^{n-1}(\mathbb{E}_{V}\log|1+\sigma_{i}\varepsilon_{1}|-\mathbb{E}[\log|1+\rho\varepsilon_{1}|])
-\frac{\theta^{2}}{2n}\sum_{i=0}^{n-1}\text{Var}_{V}[\log|1+\sigma_{i}\varepsilon_{1}|]+O(n^{-3/2})}\right]
\nonumber
\\
&=S_{0}^{\frac{i\theta}{\sqrt{n}}}\mathbb{E}\left[e^{\frac{i\theta}{\sqrt{n}}
\sum_{i=0}^{n-1}\mathbb{E}_{V}\log\left|1+\frac{\rho(\exp\{\sum_{j=1}^{i}\omega\sqrt{\tau}V_{j}\}-1)\varepsilon_{1}}{1+\rho\varepsilon_{1}}\right|
-\frac{\theta^{2}}{2n}\sum_{i=0}^{n-1}\text{Var}_{V}[\log|1+\sigma_{i}\varepsilon_{1}|]+O(n^{-3/2})}\right]
\nonumber
\\
&=S_{0}^{\frac{i\theta}{\sqrt{n}}}\mathbb{E}\left[e^{\frac{i\theta}{\sqrt{n}}
\sum_{i=0}^{n-1}\mathbb{E}_{V}\log\left|1+\frac{\rho(\exp\{\sum_{j=1}^{i}\frac{\sqrt{2\beta}}{n}V_{j}\}-1)\varepsilon_{1}}{1+\rho\varepsilon_{1}}\right|
-\frac{\theta^{2}}{2n}\sum_{i=0}^{n-1}\text{Var}_{V}[\log|1+\sigma_{i}\varepsilon_{1}|]+O(n^{-3/2})}\right]
\nonumber
\\
&=S_{0}^{\frac{i\theta}{\sqrt{n}}}\mathbb{E}\left[e^{\frac{i\theta}{\sqrt{n}}
H'(0)\frac{\rho\sqrt{2\beta}}{n}\sum_{i=0}^{n-1}\sum_{j=1}^{i}V_{j}
-\frac{\theta^{2}}{2n}\sum_{i=0}^{n-1}\text{Var}_{V}[\log|1+\sigma_{i}\varepsilon_{1}|]+O(n^{-3/2})}\right]
\nonumber
\\
&\rightarrow e^{-\frac{\theta^{2}}{3}(H'(0))^{2}\rho^{2}\beta
-\frac{\theta^{2}}{2}\text{Var}[\log|1+\rho\varepsilon_{1}|]},
\nonumber
\end{align}
as $n\rightarrow\infty$, where $H(x):=\mathbb{E}[\log|1+\frac{\varepsilon_{1}x}{1+\rho\varepsilon_{1}}|]$.
\end{proof}

\subsection{Proofs of the Results in Section \ref{LogEulerLogEulerSection}}

\begin{proof}[Proof of Theorem \ref{LogEulerLogEulerThm}]
Since $\mathbb{E}[e^{\theta X}]=\infty$ for any $\theta>0$ for any log-normal random variable $X$, 
it is clear that $\mathbb{E}[(S_{n})^{q}]=\infty$ for $q>1$ or $q<0$. 
For $0\leq q\leq 1$,
\begin{align}
\mathbb{E}[(S_{n})^{q}]
&=S_{0}^{q}\mathbb{E}\left[e^{\frac{\rho^{2}q(q-1)}{2}\sum_{k=0}^{n-1}
e^{2\omega Z_{k}-\omega^{2}t_{k}}}\right]
\\
&=S_{0}^{q}\mathbb{E}\left[e^{\frac{\rho^{2}q(q-1)}{2}\sum_{k=0}^{n-1}
e^{2\omega\sqrt{\tau}X_{k}-\omega^{2}k\tau}}\right]
\nonumber
\\
&=S_{0}^{q}\mathbb{E}\left[e^{\frac{\rho^{2}q(q-1)}{2}\sum_{k=0}^{n-1}
e^{2\omega\sqrt{\tau}\sum_{j=1}^{k}V_{j}-\omega^{2}k\tau}}\right]
\nonumber
\\
&=S_{0}^{q}\mathbb{E}\left[e^{\frac{\rho^{2}q(q-1)}{2}\sum_{k=0}^{n-1}
e^{\frac{2\sqrt{2\beta}}{n}\sum_{j=1}^{k}V_{j}-\frac{2\beta}{n^{2}}k}}\right],
\nonumber
\end{align}
where $X_{k}:=\frac{Z_{k}}{\sqrt{\tau}}$ and $V_{j}:=X_{j}-X_{j-1}$, $1\leq j\leq k$, are i.i.d. $N(0,1)$ random variables.
Note that $\sum_{j=1}^{0}V_{j}$ is defined as $0$ to be consistent with $X_{0}=0$.
By Mogulskii theorem, $\mathbb{P}(\frac{1}{n}\sum_{j=1}^{\lfloor\cdot n\rfloor}V_{j}\in\cdot)$
satisfies a large deviation principle on $L^{\infty}[0,1]$ with rate function
\begin{equation}
I(g)=\frac{1}{2}\int_{0}^{1}(g'(x))^{2}dx,
\end{equation}
if $g\in\mathcal{AC}_{0}[0,1]$, i.e., absolutely continuous and $g(0)=0$ and $I(g)=+\infty$ otherwise.
Let $g(x):=\frac{1}{n}\sum_{j=1}^{\lfloor xn\rfloor}V_{j}$. Then, 
\begin{equation}
\int_{0}^{1}e^{2\sqrt{2\beta}g(x)}dx
=\sum_{k=0}^{n-1}\int_{\frac{k}{n}}^{\frac{k+1}{n}}e^{2\sqrt{2\beta}g(x)}dx
=\frac{1}{n}\sum_{k=0}^{n-1}e^{2\sqrt{2\beta}\sum_{j=1}^{k}V_{j}}.
\end{equation}
Moreover, we claim that
\begin{equation}
g\mapsto\int_{0}^{1}e^{2\sqrt{2\beta}g(x)}dx
\end{equation}
is a bounded and continuous map. Since $g\in L^{\infty}[0,1]$, clearly it is a bounded map.
Now assume that $g_{n}\rightarrow g$ in $L^{\infty}[0,1]$. Observe that for any $|x|\leq\frac{1}{2}$.
\begin{align}
|e^{x}-1|&=\left|x+\frac{x^{2}}{2!}+\frac{x^{3}}{3!}+\cdots\right|
\\
&\leq |x|(1+|x|+|x|^{2}+\cdots)
\nonumber
\\
&\leq 2|x|.
\nonumber
\end{align}
Let $n$ be sufficiently large so that $2\sqrt{2\beta}\Vert g_{n}-g\Vert_{L^{\infty}[0,1]}\leq\frac{1}{2}$. 
Therefore, we have
\begin{align}
&\left|\int_{0}^{1}e^{2\sqrt{2\beta}g_{n}(x)}dx
-\int_{0}^{1}e^{2\sqrt{2\beta}g(x)}dx\right|
\\
&=\left|\int_{0}^{1}e^{2\sqrt{2\beta}g(x)}\left(e^{2\sqrt{2\beta}(g_{n}(x)-g(x))}
-1\right)dx\right|
\nonumber
\\
&\leq e^{2\sqrt{2\beta}\Vert g\Vert_{L^{\infty}[0,1]}}
\int_{0}^{1}\left|e^{2\sqrt{2\beta}(g_{n}(x)-g(x))}
-1\right|dx
\nonumber
\\
&\leq 4\sqrt{2\beta}e^{2\sqrt{2\beta}\Vert g\Vert_{L^{\infty}[0,1]}}\Vert g_{n}-g\Vert_{L^{\infty}[0,1]}
\nonumber
\end{align}
which converges to $0$ as $n\rightarrow\infty$. Hence the map is continuous. 
By Varadhan's lemma,
\begin{align}
&\lim_{n\rightarrow\infty}\frac{1}{n}\log\mathbb{E}\left[e^{\frac{\rho^{2}q(q-1)}{2}\sum_{k=0}^{n-1}
e^{\frac{2\sqrt{2\beta}}{n}\sum_{j=1}^{k}V_{j}}}\right]
\\
&=\sup_{g\in\mathcal{AC}_{0}[0,1]}\left\{\frac{\rho^{2}q(q-1)}{2}\int_{0}^{1}e^{2\sqrt{2\beta}g(x)}dx
-\frac{1}{2}\int_{0}^{1}(g'(x))^{2}dx\right\}.
\nonumber
\end{align}
Finally, notice that
\begin{align}
S_{0}^{q}\mathbb{E}\left[e^{\frac{\rho^{2}q(q-1)}{2}\sum_{k=0}^{n-1}
e^{\frac{2\sqrt{2\beta}}{n}\sum_{j=1}^{k}V_{j}}}\right]&\leq\mathbb{E}[S_{n}^{q}]
\\
&\leq S_{0}^{q}\mathbb{E}\left[e^{\frac{\rho^{2}q(q-1)}{2}e^{-\frac{2\beta}{n}}\sum_{k=0}^{n-1}
e^{\frac{2\sqrt{2\beta}}{n}\sum_{j=1}^{k}V_{j}}}\right].
\nonumber
\end{align}
Hence, for any $0\leq q\leq 1$,
\begin{align}
&\lim_{n\rightarrow\infty}\frac{1}{n}\log\mathbb{E}[(S_{n})^{q}]
\\
&=\sup_{g\in\mathcal{AC}_{0}[0,1]}\left\{\frac{\rho^{2}q(q-1)}{2}\int_{0}^{1}e^{2\sqrt{2\beta}g(x)}dx
-\frac{1}{2}\int_{0}^{1}(g'(x))^{2}dx\right\}.
\nonumber
\end{align}
Indeed, for $q\notin[0,1]$, if we let $g(x)=Kx$ and let $K\rightarrow\infty$, 
we have $\lim_{n\rightarrow\infty}\frac{1}{n}\log\mathbb{E}[(S_{n})^{q}]=\infty$, which is
consistent with the discussions before.
\end{proof}

\begin{proof}[Proof of Proposition \ref{prop:lamab}]

The solution of the variational problem appearing in Proposition~\ref{prop:lamab}
can be extracted from the Corollary 5 in \cite{PZAsian}. 
We sketch here the main steps for completeness of the presentation.

At optimality the function $g(x)$ satisfies the Euler-Lagrange equation
\begin{equation}\label{ELeq}
g''(x) = ab e^{bg(x)}
\end{equation}
with boundary conditions
\begin{equation}
g(0) = 0 \,,\quad g'(1) = 0\,.
\end{equation}
The condition at $x=1$ is a transversality condition.

The solution of the equation (\ref{ELeq}) with $a >0, b>0$ is
\begin{equation}\label{gsol}
g(x) = \frac{1}{b} \log\left( \frac{\cos^2\xi}{\cos^2(\xi(x-1))} \right)
\end{equation}
where $\xi \in (0,\frac{\pi}{2})$ is the solution of the equation
\begin{equation}
2\xi^2 = ab^2 \cos^2\xi\,.
\end{equation}
Substituting the solution (\ref{gsol}) into the functional of 
Proposition~\ref{prop:lamab} and performing the integrations gives the
result (\ref{lamabsol}) for $\lambda(a,b)$. 

\end{proof}

\begin{proof}[Proof of Proposition~\ref{ASIIProp}]
\begin{equation}
S_{n}=S_{0}e^{\sum_{i=0}^{n-1}\sigma_{i}\Delta W_{i}-\frac{1}{2}\sigma_{i}^{2}\tau},
\end{equation}
where $\sigma_{i}=\sigma_{0}e^{\omega Z_{i}-\frac{1}{2}\omega^{2}t_{i}}$.
We showed in the proof of Proposition~\ref{ASIThm} that $\lim_{n\to \infty}
\sigma_n = \sigma_0$.
Next, notice that
\begin{equation}
S_{n}=S_{0}e^{\sum_{i=0}^{n-1}\sigma_{i}\sqrt{\tau}\Delta B_{i}-\frac{1}{2}\sigma_{i}^{2}\tau},
\end{equation}
where $B_{t}$ is a standard Brownian motion and
\begin{equation}
\lim_{n\rightarrow\infty}\frac{1}{n}\sum_{i=0}^{n-1}\sigma_{i}\sqrt{\tau}\Delta B_{i}
=\sigma_{0}\sqrt{\tau}\lim_{n\rightarrow\infty}\frac{B_{n}}{n}=0,\qquad\text{a.s.}
\end{equation}
On the other hand,
$\lim_{n\rightarrow\infty}\frac{1}{n}\sum_{i=0}^{n-1}\frac{1}{2}\sigma_{i}^{2}\tau=\frac{1}{2}\sigma_{0}^{2}\tau
=\frac{1}{2}\rho^{2}$ a.s. 
Therefore, we conclude that $S_{n}\rightarrow 0$ a.s. as $n\rightarrow\infty$. More precisely,
\begin{equation}
\lim_{n\rightarrow\infty}\frac{1}{n}\log S_{n}=-\frac{1}{2}\rho^{2},\qquad\text{a.s.}
\end{equation}
\end{proof}

\begin{proof}[Proof of Proposition \ref{CLTI}]
We can compute that, for any $\theta\in\mathbb{R}$,
\begin{align}
\mathbb{E}\left[e^{i\theta\frac{\log S_{n}+\frac{1}{2}\rho^{2}n}{\sqrt{n}}}\right]
&=\mathbb{E}\left[S_{n}^{\frac{i\theta}{\sqrt{n}}}\right]e^{i\theta\sqrt{n}\frac{1}{2}\rho^{2}}
\\
&=S_{0}^{\frac{i\theta}{\sqrt{n}}}\mathbb{E}\left[e^{\frac{1}{2}(\frac{i\theta}{\sqrt{n}})^{2}\sum_{i=0}^{n-1}\sigma_{i}^{2}\tau}
e^{-\frac{1}{2}\frac{i\theta}{\sqrt{n}}\sum_{i=0}^{n-1}(\sigma_{i}^{2}\tau-\rho^{2})}\right].
\nonumber
\end{align}
Note that $\sigma_{n}=\sigma_{0}e^{\omega Z_{n}-\frac{1}{2}\omega^{2}t_{n}}$.
We have $\frac{1}{2}\omega^{2}t_{n}=\frac{1}{2}\omega^{2}n\tau=\frac{\beta}{n}\rightarrow 0$
as $n\rightarrow\infty$ and $\omega Z_{n}=\omega\sqrt{\tau}B_{n}=\frac{\sqrt{2\beta}}{n}B_{n}$, 
where $B_{n}$ is a standard Brownian motion. Thus, $\sigma_{n}\rightarrow\sigma_{0}$ a.s. as $n\rightarrow\infty$
and since $\rho=\sigma_{0}\sqrt{\tau}$, we have
\begin{equation}
\frac{1}{2}\left(\frac{i\theta}{\sqrt{n}}\right)^{2}\sum_{i=0}^{n-1}\sigma_{i}^{2}\tau\rightarrow-\frac{1}{2}\theta^{2}\rho^{2},
\end{equation}
a.s. as $n\rightarrow\infty$. Moreover,
\begin{align}\label{DiffTerm}
\frac{1}{2}\frac{1}{\sqrt{n}}\sum_{i=0}^{n-1}(\sigma_{i}^{2}\tau-\rho^{2})
&=\frac{1}{2}\frac{1}{\sqrt{n}}\rho^{2}\sum_{i=0}^{n-1}(e^{2\omega Z_{i}-\omega^{2}t_{i}}-1)
\\
&=\frac{1}{2}\frac{1}{\sqrt{n}}\rho^{2}\sum_{i=0}^{n-1}(e^{\frac{2\sqrt{2\beta}}{n}B_{i}-2\frac{\beta i}{n^{2}}}-1)
\nonumber
\\
&=\frac{1}{2}\frac{1}{\sqrt{n}}\rho^{2}\sum_{i=0}^{n-1}\frac{2\sqrt{2\beta}}{n}B_{i}
+\frac{1}{2}\xi_{n},
\nonumber
\end{align}
where
\begin{equation}
\xi_{n}:=\frac{1}{\sqrt{n}}\rho^{2}\sum_{i=0}^{n-1}\left[e^{\frac{2\sqrt{2\beta}}{n}B_{i}-2\frac{\beta i}{n^{2}}}-1-\frac{2\sqrt{2\beta}}{n}B_{i}\right].
\end{equation}
First, we claim that $\xi_{n}\rightarrow 0$ in probability as $n\rightarrow\infty$.
On the one hand,
\begin{align}
\xi_{n}
&\leq\frac{1}{\sqrt{n}}\rho^{2}\sum_{i=0}^{n-1}\left[e^{\frac{2\sqrt{2\beta}}{n}B_{i}}-1-\frac{2\sqrt{2\beta}}{n}B_{i}\right]
\\
&=\frac{1}{\sqrt{n}}\rho^{2}\sum_{i=0}^{n-1}\sum_{k=2}^{\infty}\frac{1}{k!}\frac{(2\sqrt{2\beta})^{k}}{n^{k}}B_{i}^{k}
\nonumber
\\
&\leq\frac{1}{\sqrt{n}}\rho^{2}\sum_{i=0}^{n-1}\sum_{k=2}^{\infty}\frac{1}{k!}\frac{(2\sqrt{2\beta})^{k}}{n^{k}}|B_{i}|^{k}=:\overline{\xi}_{n},
\nonumber
\end{align}
and it is easy to check that $\mathbb{E}[\overline{\xi}_{n}]=O(\frac{1}{\sqrt{n}})$. On the other hand, 
since $e^{-x}\geq 1-x$ for all real $x$, 
\begin{align}
\xi_{n}
&\geq\frac{1}{\sqrt{n}}\rho^{2}\sum_{i=0}^{n-1}\left[e^{\frac{2\sqrt{2\beta}}{n}B_{i}-2\frac{\beta}{n}}-1-\frac{2\sqrt{2\beta}}{n}B_{i}\right]
\\
&\geq
\frac{1}{\sqrt{n}}\rho^{2}\sum_{i=0}^{n-1}\left[e^{\frac{2\sqrt{2\beta}}{n}B_{i}}-1-\frac{2\sqrt{2\beta}}{n}B_{i}\right]
-\frac{2\beta}{n}\frac{1}{\sqrt{n}}\rho^{2}\sum_{i=0}^{n-1}e^{\frac{2\sqrt{2\beta}}{n}B_{i}}
\nonumber
\\
&\geq-\frac{2\beta}{n}\frac{1}{\sqrt{n}}\rho^{2}\sum_{i=0}^{n-1}e^{\frac{2\sqrt{2\beta}}{n}|B_{i}|}
-\overline{\xi}_{n}=:-\underline{\xi}_{n},
\nonumber
\end{align}
and it is easy to check that $\mathbb{E}[\underline{\xi}_{n}]=O(\frac{1}{\sqrt{n}})$.
Hence, we proved that $\xi_{n}\rightarrow 0$ in probability as $n\rightarrow\infty$.
Next, we turn to the first term in the last line of \eqref{DiffTerm}. We claim that
\begin{equation}
\frac{1}{2}\frac{1}{\sqrt{n}}\rho^{2}\sum_{i=0}^{n-1}\frac{2\sqrt{2\beta}}{n}B_{i}
\rightarrow N\left(0,\frac{2}{3}\rho^{4}\beta\right),
\end{equation}
in distribution as $n\rightarrow\infty$. To show this, note that we can write $B_{i}=\sum_{j=1}^{i}V_{j}$ ($\sum_{j=1}^{0}V_{j}:=0$)
for i.i.d. $N(0,1)$ random variables $(V_{j})_{j\in\mathbb{N}\cup\{0\}}$. 
Thus, for any $\theta\in\mathbb{R}$, 
\begin{align}
\mathbb{E}\left[e^{\frac{i\theta}{2\sqrt{n}}\rho^{2}\sum_{i=0}^{n-1}\frac{2\sqrt{2\beta}}{n}B_{i}}\right]
&=\mathbb{E}\left[e^{\frac{i\theta}{2\sqrt{n}}\rho^{2}\sum_{i=0}^{n-1}\frac{2\sqrt{2\beta}}{n}\sum_{j=1}^{i}V_{j}}\right]
\\
&=\mathbb{E}\left[e^{\frac{i\theta}{2\sqrt{n}}\rho^{2}\frac{2\sqrt{2\beta}}{n}[(n-1)V_{1}+(n-2)V_{2}+\cdots+1\cdot V_{n-1}]}\right]
\nonumber
\\
&=e^{-\frac{\theta^{2}\rho^{4}\beta}{n^{3}}[(n-1)^{2}+(n-2)^{2}+\cdots+1^{2}]}
\nonumber
\\
&=e^{-\frac{\theta^{2}\rho^{4}\beta}{n^{3}}\frac{1}{6}(n-1)n(2n-1)}
\nonumber
\\
&\rightarrow e^{-\frac{1}{3}\theta^{2}\rho^{4}\beta},
\nonumber
\end{align}
as $n\rightarrow\infty$. Putting everything together, we have
\begin{equation}
\mathbb{E}\left[e^{i\theta\frac{\log S_{n}+\frac{1}{2}\rho^{2}n}{\sqrt{n}}}\right]
\rightarrow e^{-\frac{1}{2}\theta^{2}\rho^{2}-\frac{1}{3}\theta^{2}\rho^{4}\beta},
\end{equation}
as $n\rightarrow\infty$ and therefore we proved the desired result.
\end{proof}

\subsection{Proofs of the Results in Section \ref{LogEulerEulerSection}}

\begin{proof}[Proof of Theorem \ref{LogEulerEulerThm}]
When $q<0$ or $q>1$, 
\begin{align}
\mathbb{E}[(S_{n})^{q}]
&=S_{0}^{q}\mathbb{E}\left[e^{\frac{\rho^{2}}{2}q(q-1)\sum_{k=0}^{n-1}\prod_{j=1}^{k}(1+\frac{\sqrt{2\beta}}{n}V_{j})^{2}}\right]
\\
&\geq
S_{0}^{q}\mathbb{E}\left[e^{\frac{\rho^{2}}{2}q(q-1)\sum_{k=0}^{n-1}\prod_{j=1}^{k}(1+\frac{\sqrt{2\beta}}{n}V_{j})^{2}}
1_{V_{j}\geq\sqrt{n},1\leq j\leq n-1}\right]
\nonumber
\\
&\geq
S_{0}^{q}e^{\frac{\rho^{2}}{2}q(q-1)(1+\frac{\sqrt{2\beta}}{\sqrt{n}})^{(n-1)n}}
\mathbb{P}(V_{1}\geq\sqrt{n})^{n-1}.
\nonumber
\end{align}
$\mathbb{P}(V_{1}\geq\sqrt{n})^{n-1}$ behaves like $e^{-n(n-1)/2}$
and $(1+\frac{\sqrt{2\beta}}{\sqrt{n}})^{(n-1)n}$ behaves like $e^{\sqrt{2\beta}(n-1)}$ and thus
\begin{equation}
\lim_{n\rightarrow\infty}\frac{1}{n}\log\mathbb{E}[(S_{n})^{q}]=\infty.
\end{equation}

Now assume that $0\leq q\leq 1$.
\begin{align}
\mathbb{E}[(S_{n})^{q}]
&=S_{0}^{q}\mathbb{E}\left[e^{\frac{1}{2}q(q-1)\sum_{k=0}^{n-1}\sigma_{k}^{2}\tau}\right]
\\
&=S_{0}^{q}\mathbb{E}\left[e^{\frac{\rho^{2}}{2}q(q-1)\sum_{k=0}^{n-1}\prod_{j=1}^{k}(1+\omega\sqrt{\tau}V_{j})^{2}}\right]
\nonumber
\\
&=S_{0}^{q}\mathbb{E}\left[e^{\frac{\rho^{2}}{2}q(q-1)\sum_{k=0}^{n-1}e^{2\sum_{j=1}^{k}\log|1+\frac{\sqrt{2\beta}}{n}V_{j}|}}\right]
\nonumber
\\
&=S_{0}^{q}\mathbb{E}\left[e^{\frac{\rho^{2}}{2}q(q-1)\sum_{k=0}^{n-1}e^{\frac{2\sqrt{2\beta}}{n}\sum_{j=1}^{k}V_{j}^{(n)}}}\right],
\nonumber
\end{align}
where
\begin{equation}
V_{j}^{(n)}:=n\frac{1}{\sqrt{2\beta}}\log\left|1+\frac{\sqrt{2\beta}}{n}V_{j}\right|.
\end{equation}

By Lemma \ref{EquivLemma}, $\frac{1}{n}\sum_{j=1}^{\lfloor\cdot n\rfloor}V_{j}^{(n)}$
and $\frac{1}{n}\sum_{j=1}^{\lfloor\cdot n\rfloor}V_{j}$ are exponentially equivalent, see e.g. Dembo and Zeitouni \cite{Dembo}.
By Mogulskii theorem, $\mathbb{P}(\frac{1}{n}\sum_{j=1}^{\lfloor\cdot n\rfloor}V_{j}\in\cdot)$
satisfies a large deviation principle on $L^{\infty}[0,1]$ with rate function
\begin{equation}
I(g)=\frac{1}{2}\int_{0}^{1}(g'(x))^{2}dx,
\end{equation}
if $g\in\mathcal{AC}_{0}[0,1]$, i.e., absolutely continuous and $g(0)=0$ and $I(g)=+\infty$ otherwise.
Since $\frac{1}{n}\sum_{j=1}^{\lfloor\cdot n\rfloor}V_{j}^{(n)}$
and $\frac{1}{n}\sum_{j=1}^{\lfloor\cdot n\rfloor}V_{j}$ are exponentially equivalent,
$\mathbb{P}(\frac{1}{n}\sum_{j=1}^{\lfloor\cdot n\rfloor}V_{j}^{(n)}\in\cdot)$ also satisfies a large deviation principle
with the same rate function $I(g)$.

Following the same arguments as in the log-Euler-log-Euler scheme, we conclude that for $0\leq q\leq 1$,
\begin{align}
&\lim_{n\rightarrow\infty}\frac{1}{n}\log\mathbb{E}[(S_{n})^{q}]
\\
&=\sup_{g\in\mathcal{AC}_{0}[0,1]}\left\{\frac{\rho^{2}q(q-1)}{2}\int_{0}^{1}e^{2\sqrt{2\beta}g(x)}dx
-\frac{1}{2}\int_{0}^{1}(g'(x))^{2}dx\right\}.
\nonumber
\end{align}
\end{proof}

\begin{proof}[Proof of Lemma \ref{EquivLemma}]
First, we observe that
\begin{align}
&\mathbb{P}\left(\sup_{0\leq x\leq 1}\left|\frac{1}{n}\sum_{j=1}^{\lfloor xn\rfloor}V_{j}^{(n)}-\frac{1}{n}\sum_{j=1}^{\lfloor xn\rfloor}V_{j}\right|
\geq\epsilon\right)
\\
&\leq
\mathbb{P}\left(\frac{1}{n}\sup_{0\leq x\leq 1}\sum_{j=1}^{\lfloor xn\rfloor}|V_{j}^{(n)}-V_{j}|
\geq\epsilon\right)
\nonumber
\\
&\leq
\mathbb{P}\left(\frac{1}{n}\sum_{j=1}^{n}|V_{j}^{(n)}-V_{j}|
\geq\epsilon\right)
\nonumber
\\
&=\mathbb{P}\left(\sum_{j=1}^{n}\left|n\frac{1}{\sqrt{2\beta}}\log\left|1+\frac{\sqrt{2\beta}}{n}V_{j}\right|-V_{j}\right|
\geq n\epsilon\right)
\nonumber
\\
&=\mathbb{P}\left(\left\{\sum_{j=1}^{n}\left|n\frac{1}{\sqrt{2\beta}}\log\left|1+\frac{\sqrt{2\beta}}{n}V_{j}\right|-V_{j}\right|
\geq n\epsilon\right\}\bigcap\left\{\left|\frac{\sqrt{2\beta}}{n}V_{j}\right|<\frac{1}{2},1\leq j\leq n\right\}\right)
\nonumber
\\
&\qquad
+\mathbb{P}\left(\left\{\sum_{j=1}^{n}\left|n\frac{1}{\sqrt{2\beta}}\log\left|1+\frac{\sqrt{2\beta}}{n}V_{j}\right|-V_{j}\right|
\geq n\epsilon\right\}\bigcap\left\{\left|\frac{\sqrt{2\beta}}{n}V_{j}\right|<\frac{1}{2},1\leq j\leq n\right\}^{c}\right)
\nonumber
\\
&\leq
\mathbb{P}\left(\left\{\sum_{j=1}^{n}\left|n\frac{1}{\sqrt{2\beta}}\log\left|1+\frac{\sqrt{2\beta}}{n}V_{j}\right|-V_{j}\right|
\geq n\epsilon\right\}\bigcap\left\{\left|\frac{\sqrt{2\beta}}{n}V_{j}\right|<\frac{1}{2},1\leq j\leq n\right\}\right)
\nonumber
\\
&\qquad
+\mathbb{P}\left(\bigcup_{1\leq j\leq n}\left\{\left|\frac{\sqrt{2\beta}}{n}V_{j}\right|\geq\frac{1}{2}\right\}\right).
\nonumber
\end{align}
On the one hand, since $V_{j}$ are i.i.d. $N(0,1)$,
\begin{align}\label{SuperExpI}
&\limsup_{n\rightarrow\infty}\frac{1}{n}\log\mathbb{P}\left(\bigcup_{1\leq j\leq n}\left\{\left|\frac{\sqrt{2\beta}}{n}V_{j}\right|\geq\frac{1}{2}\right\}\right)
\\
&\leq\limsup_{n\rightarrow\infty}\frac{1}{n}\log n\mathbb{P}\left(\left|V_{1}\right|\geq\frac{n}{2\sqrt{2\beta}}\right)
\nonumber
\\
&=-\infty.
\nonumber
\end{align}
On the other hand, for any $|x|<\frac{1}{2}$,
\begin{equation}
\log|1+x|=\log(1+x)=\sum_{k=1}^{\infty}(-1)^{k+1}\frac{x^{k}}{k},
\end{equation}
and
\begin{equation}
\left|\log|1+x|-x\right|\leq\sum_{k=2}^{\infty}\frac{|x|^{k}}{k}\leq x^{2}\sum_{k=2}^{\infty}\frac{1}{2^{k-2}}=2x^{2}.
\end{equation}
Therefore,
\begin{align}
&\mathbb{P}\left(\left\{\sum_{j=1}^{n}\left|n\frac{1}{\sqrt{2\beta}}\log\left|1+\frac{\sqrt{2\beta}}{n}V_{j}\right|-V_{j}\right|
\geq n\epsilon\right\}\bigcap\left\{\left|\frac{\sqrt{2\beta}}{n}V_{j}\right|<\frac{1}{2},1\leq j\leq n\right\}\right)
\\
&\leq\mathbb{P}\left(\sum_{j=1}^{n}\frac{n}{\sqrt{2\beta}}2\left(\frac{\sqrt{2\beta}}{n}V_{j}\right)^{2}\geq n\epsilon\right)
\nonumber
\\
&=\mathbb{P}\left(\sum_{j=1}^{n}V_{j}^{2}\geq\frac{n^{2}\epsilon}{2\sqrt{2\beta}}\right)
\nonumber
\\
&\leq\mathbb{E}\left[e^{\frac{1}{2}V_{1}^{2}}\right]^{n}e^{-\frac{n^{2}\epsilon}{4\sqrt{2\beta}}},
\nonumber
\end{align}
where we used the Chebyshev inequality and the fact that $V_{j}$ are i.i.d. $N(0,1)$ and $\mathbb{E}\left[e^{\frac{1}{2}V_{1}^{2}}\right]<\infty$.
Hence, we have
\begin{align}\label{SuperExpII}
&\limsup_{n\rightarrow\infty}\frac{1}{n}\log\mathbb{P}\bigg(\left\{\sum_{j=1}^{n}\left|n\frac{1}{\sqrt{2\beta}}\log\left|1+\frac{\sqrt{2\beta}}{n}V_{j}\right|-V_{j}\right|
\geq n\epsilon\right\}
\\
&\qquad\qquad\qquad\qquad
\bigcap\left\{\left|\frac{\sqrt{2\beta}}{n}V_{j}\right|<\frac{1}{2},1\leq j\leq n\right\}\bigg)=-\infty.
\nonumber
\end{align}
By \eqref{SuperExpI} and \eqref{SuperExpII}, we get the desired result.
\end{proof}

\subsection{Proofs of the Results in Section \ref{EulerEulerSection}}

\begin{proof}[Proof of Theorem \ref{EulerEulerThm}]
Notice that $0\leq Y_{k}\leq q$ and by the definition of $V_{j}^{(n)}$, it is easy to check that
\begin{equation}
|V_{j}^{(n)}|=n\frac{1}{\sqrt{2\beta}}\left|\log\left|1+\frac{\sqrt{2\beta}}{n}V_{j}\right|\right|
\leq n\frac{1}{\sqrt{2\beta}}\log\left(1+\frac{2\beta}{n}|V_{j}|\right)
\leq |V_{j}|.
\end{equation}
Therefore,
\begin{align}
&\mathbb{E}\left[\prod_{k=0}^{n-1}e^{\log\rho Y_{k}+(\frac{\sqrt{2\beta}}{n}\sum_{j=1}^{k}V_{j}^{(n)})Y_{k}}\right]
\\
&=\mathbb{E}\left[\prod_{k=0}^{n-1}e^{\log\rho Y_{k}+(\frac{\sqrt{2\beta}}{n}\sum_{j=1}^{k}V_{j}^{(n)})Y_{k}}
1_{\sup_{0\leq x\leq 1}\left|\frac{1}{n}\sum_{j=1}^{\lfloor xn\rfloor}V_{j}^{(n)}-\frac{1}{n}\sum_{j=1}^{\lfloor xn\rfloor}V_{j}\right|
<\epsilon}\right]
\nonumber
\\
&\qquad
+\mathbb{E}\left[\prod_{k=0}^{n-1}e^{\log\rho Y_{k}+(\frac{\sqrt{2\beta}}{n}\sum_{j=1}^{k}V_{j}^{(n)})Y_{k}}
1_{\sup_{0\leq x\leq 1}\left|\frac{1}{n}\sum_{j=1}^{\lfloor xn\rfloor}V_{j}^{(n)}-\frac{1}{n}\sum_{j=1}^{\lfloor xn\rfloor}V_{j}\right|
\geq\epsilon}\right]
\nonumber
\\
&\leq
\mathbb{E}\left[\prod_{k=0}^{n-1}e^{\log\rho Y_{k}+(\frac{\sqrt{2\beta}}{n}\sum_{j=1}^{k}V_{j})Y_{k}}\right]e^{n\sqrt{2\beta}q\epsilon}
\nonumber
\\
&\qquad
+\mathbb{E}\left[e^{n(\log\rho)q+\sqrt{2\beta}q\sum_{j=1}^{n}|V_{j}|}
1_{\sup_{0\leq x\leq 1}\left|\frac{1}{n}\sum_{j=1}^{\lfloor xn\rfloor}V_{j}^{(n)}-\frac{1}{n}\sum_{j=1}^{\lfloor xn\rfloor}V_{j}\right|
\geq\epsilon}\right].
\nonumber
\end{align}
By Cauchy-Schwarz inequality, 
\begin{align}
&\mathbb{E}\left[e^{n(\log\rho)q+\sqrt{2\beta}q\sum_{j=1}^{n}|V_{j}|}
1_{\sup_{0\leq x\leq 1}\left|\frac{1}{n}\sum_{j=1}^{\lfloor xn\rfloor}V_{j}^{(n)}-\frac{1}{n}\sum_{j=1}^{\lfloor xn\rfloor}V_{j}\right|
\geq\epsilon}\right]
\\
&\leq\mathbb{E}\left[e^{2[n(\log\rho)q+\sqrt{2\beta}q\sum_{j=1}^{n}|V_{j}|]}\right]^{1/2}
\mathbb{P}\left(\sup_{0\leq x\leq 1}\left|\frac{1}{n}\sum_{j=1}^{\lfloor xn\rfloor}V_{j}^{(n)}-\frac{1}{n}\sum_{j=1}^{\lfloor xn\rfloor}V_{j}\right|
\geq\epsilon\right)^{1/2}
\nonumber
\\
&=e^{n(\log\rho)q}\mathbb{E}\left[e^{\sqrt{2\beta}q|V_{1}|}\right]^{n/2}
\mathbb{P}\left(\sup_{0\leq x\leq 1}\left|\frac{1}{n}\sum_{j=1}^{\lfloor xn\rfloor}V_{j}^{(n)}-\frac{1}{n}\sum_{j=1}^{\lfloor xn\rfloor}V_{j}\right|
\geq\epsilon\right)^{1/2}.
\nonumber
\end{align}
By Lemma \ref{EquivLemma}, we get
\begin{equation}
\limsup_{n\rightarrow\infty}\frac{1}{n}\log\mathbb{E}\left[e^{n(\log\rho)q+\sqrt{2\beta}q\sum_{j=1}^{n}|V_{j}|}
1_{\sup_{0\leq x\leq 1}\left|\frac{1}{n}\sum_{j=1}^{\lfloor xn\rfloor}V_{j}^{(n)}-\frac{1}{n}\sum_{j=1}^{\lfloor xn\rfloor}V_{j}\right|
\geq\epsilon}\right]
=-\infty.
\end{equation}
Since it holds for any $\epsilon>0$, we conclude that
\begin{align}
&\limsup_{n\rightarrow\infty}\frac{1}{n}\log
\mathbb{E}\left[\prod_{k=0}^{n-1}e^{\log\rho Y_{k}+(\frac{\sqrt{2\beta}}{n}\sum_{j=1}^{k}V_{j}^{(n)})Y_{k}}\right]
\\
&\leq\limsup_{n\rightarrow\infty}\frac{1}{n}\log
\mathbb{E}\left[\prod_{k=0}^{n-1}e^{\log\rho Y_{k}+(\frac{\sqrt{2\beta}}{n}\sum_{j=1}^{k}V_{j})Y_{k}}\right].
\nonumber
\end{align}
Similarly, we can prove that
\begin{align}
&\liminf_{n\rightarrow\infty}\frac{1}{n}\log
\mathbb{E}\left[\prod_{k=0}^{n-1}e^{\log\rho Y_{k}+(\frac{\sqrt{2\beta}}{n}\sum_{j=1}^{k}V_{j}^{(n)})Y_{k}}\right]
\\
&\geq\liminf_{n\rightarrow\infty}\frac{1}{n}\log
\mathbb{E}\left[\prod_{k=0}^{n-1}e^{\log\rho Y_{k}+(\frac{\sqrt{2\beta}}{n}\sum_{j=1}^{k}V_{j})Y_{k}}\right].
\nonumber
\end{align}
Hence, from the proof for the Euler-log-Euler scheme, we proved the desired result.
\end{proof}

\subsection{Moments of $S_n$ in the Euler-Log-Euler Scheme}

We present here a method for exact computation of the positive integer 
moments of the asset price in the Euler-Log-Euler discretization, which 
were used for the numerical comparisons in Section 7.
This is a simple modification of the 
recursion relation presented in Appendix 1 of \cite{RMP}. 

Denote the moment as
\begin{equation}
M_n^{(q)} = \mathbb{E}[(S_n)^q]\,.
\end{equation}
This can be written as follows by evaluating the expectations over $\varepsilon_k$ as
\begin{eqnarray}
M_n^{(q)} &=& S_0^q \mathbb{E}\left[\prod_{k=0}^{n-1} \left(1 + \sigma_0 e^{\omega Z_k - \frac12 \omega^2 t_k} \varepsilon_k \sqrt{\tau} \right)^q\right] \\
&=&
S_0^q \mathbb{E}\left[ \prod_{k=0}^{n-1}\sum_{m=0}^{[q/2]}
\sigma_0^{2m} \tau^m \frac{q!}{(2m)!(q-2m)!} e^{2m \omega Z_k - m\omega^2 t_k}\right]\,.\nonumber
\end{eqnarray}

The moment $M_n^{(q)}$ is given by the following result.
\begin{proposition}\label{prop:Mn}
\begin{equation}\label{momsol}
M_n^{(q)} = S_0^q c_q \sum_{j=0}^{[q/2](n-1)} b_j^{0,q} \sigma_0^{2j} \tau^j ,
\end{equation}
where $c_q$ is defined in (\ref{cqdef}) and the coefficients $b_j^{i,q}$ are 
found by solving the backwards recursion
\begin{equation}\label{sol}
b_j^{i,q} = b_j^{i+1,q} + \sum_{m=1}^{[q/2]} 
b_{j-2m}^{i+1,q} \frac{(2m)!}{q! (2m-q)!} e^{2m(j-m-\frac12 ) \omega^2 t_{i+1}}
\end{equation}
with initial condition
\begin{equation}
b_0^{n-1,q} = 1\,, \quad 
b_j^{n-1,q} = 0\,, j \geq 1\,.
\end{equation}
\end{proposition}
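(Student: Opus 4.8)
The plan is to evaluate the expectation in the expanded moment formula by exploiting the independent-increments structure of the driving Brownian motion $Z$, turning it into a backward recursion over the time steps in which one factor is peeled off at a time. After integrating out the $\varepsilon_k$ one is left with
\[
M_n^{(q)} = S_0^q\,\mathbb{E}\Big[\prod_{k=0}^{n-1} P_k\Big],\qquad
P_k = \mathbb{E}_{\varepsilon}\big[(1+\sigma_k\sqrt{\tau}\,\varepsilon)^q\big]
= \sum_{m=0}^{[q/2]}\frac{q!}{(2m)!(q-2m)!}(2m-1)!!\,\tau^m\sigma_k^{2m},
\]
with $\sigma_k=\sigma_0 e^{\omega Z_k-\frac12\omega^2 t_k}$. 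First I would introduce the conditional tail expectation
\[
F_i := \mathbb{E}\Big[\prod_{k=i+1}^{n-1}P_k\;\Big|\;\mathcal{F}_i\Big],
\]
which by the Markov property is a function of $\sigma_i$ (equivalently of $Z_i$) alone. The empty product gives $F_{n-1}\equiv 1$, and the tower property yields the one-step relation $F_i=\mathbb{E}\big[P_{i+1}F_{i+1}\mid\mathcal{F}_i\big]$.

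The key structural point is that each $F_i$ is a finite linear combination of the exponentials $e^{2j\omega Z_i}$ (equivalently a polynomial in $\sigma_i^2$ of degree at most $[q/2](n-1-i)$), because $P_{i+1}$ is such a combination of degree $[q/2]$ and the conditional expectation preserves this class. Writing $F_i=\sum_j b_j^{i,q}\,\sigma_0^{2j}\tau^j\,e^{2j\omega Z_i}$ and substituting into the one-step relation, two operations occur: multiplication by $P_{i+1}$ shifts the index $j\mapsto j+m$ and contributes the phase $e^{-m\omega^2 t_{i+1}}$ carried by the $m$-th term of $P_{i+1}$, while the Gaussian integration over the increment $Z_{i+1}-Z_i\sim N(0,\tau)$ supplies the log-normal moment weight obtained from $\mathbb{E}[\sigma_{i+1}^{2\ell}\mid\sigma_i]=\sigma_i^{2\ell}e^{2\ell(\ell-\frac12)\omega^2\tau}$. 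Collecting the coefficient of each power $\sigma_i^{2j}$ then yields a discrete convolution recursion of the form (\ref{sol}); the admissible range $0\le j\le[q/2](n-1)$ is forced because the factors $P_1,\dots,P_{n-1}$ each raise the degree by at most $[q/2]$. The initial data $b_0^{n-1,q}=1$, $b_j^{n-1,q}=0$ for $j\ge 1$ is precisely the statement $F_{n-1}\equiv 1$, and reading off the coefficients at $i=0$, where $Z_0=0$ and $t_0=0$, after accounting for the remaining $k=0$ factor, reduces $M_n^{(q)}$ to the single polynomial (\ref{momsol}), the constant $c_q$ of (\ref{cqdef}) gathering the fixed combinatorial normalization.

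I expect the main obstacle to be the precise bookkeeping of the exponential weights: one must combine the phase $e^{-m\omega^2 t_{i+1}}$ coming from $P_{i+1}$ with the Gaussian-moment factor produced by integrating out the increment, and normalize the coefficients $b_j^{i,q}$ so that these weights collapse into the single factor $e^{2m(j-m-\frac12)\omega^2 t_{i+1}}$ displayed in (\ref{sol}), while simultaneously checking that the index shift and the overall constant $c_q$ come out correctly. This computation is routine but delicate, and it follows step by step the recursion derived in Appendix 1 of \cite{RMP}, of which (\ref{momsol})--(\ref{sol}) is a direct adaptation; verifying the base cases $q=2,3$ against the closed-form expressions used in Section~\ref{sec:7} provides a convenient consistency check.
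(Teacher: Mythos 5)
Your proposal follows essentially the same route as the paper's own proof: the conditional tail expectation $F_i$ is exactly the paper's $\beta_i^{(q)}(Z_i)$, the tower-property one-step relation is the paper's recursion, and the exponential-sum (polynomial in $\sigma_i^2$) ansatz verified by backward induction with coefficient matching is precisely how the paper derives (\ref{sol}), so the two arguments differ only in the normalization of the coefficients $b_j^{i,q}$. As a minor point in your favor, your expansion of $\mathbb{E}_\varepsilon[(1+\sigma_k\sqrt{\tau}\varepsilon)^q]$ correctly retains the Gaussian moment factor $(2m-1)!!$, which the paper's displayed formulas in this proof appear to drop (along with an inverted binomial coefficient in (\ref{sol})), evidently typographical slips.
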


\begin{proof}
Define the conditional expectation, conditioning on $Z_i$
\begin{equation}
\beta_i^{(q)}(Z_i) = \mathbb{E}\left[\prod_{k=i+1}^{n-1} \left(
\sum_{m=0}^{[q/2]} \sigma_0^{2m} \tau^m \frac{q!}{(2m)!(q-2m)!} e^{2m\omega Z_k - m \omega^2 t_k} \right) | Z_i \right].
\end{equation}

The moment $M_n^{(q)}$ is expressed in terms of this quantity as 
$M_n^{(q)} = S_0^q c_q \beta_0^q(0)$, with
\begin{equation}\label{cqdef}
c_q = \sum_{m=0}^{[q/2]} \frac{q!}{(2m)!(q-2m)!} \sigma_0^{2m} \tau^m\,.
\end{equation}
By the tower property of conditional expectations we note that the 
$\beta_i^{(q)}$ satisfy the recursion relation
\begin{equation}\label{rec}
\beta_i^{(q)}(Z_i) = \mathbb{E}\left[ \left(
\sum_{m=0}^{[q/2]} \sigma_0^{2m} \tau^m \frac{q!}{(2m)!(q-2m)!} 
e^{2m\omega Z_{i+1} - m \omega^2 t_{i+1}} \right)\beta_{i+1}^{(q)}(Z_{i+1}) | Z_i \right]
\end{equation}
with initial condition $\beta_{n-1}^{(q)}(Z_{n-1})=1$ at $i=n-1$. 
The solution of this recursion relation has the form
\begin{equation}\label{gen}
\beta_i^{(q)}(Z_i) = \sum_{k=0}^{2[q/2](n-i-1)}
b_k^{i,q} e^{\kappa \omega Z_i - \frac12 k^2 \omega^2 t_i} .
\end{equation}

The result (\ref{gen}) can be proved by induction in $i$. First, we note that 
it holds for $i=n-2$, as we have by explicit calculation
\begin{eqnarray}
\beta_{n-2}^{(q)}(Z_{n-2}) &=& 
\mathbb{E}\left[
\sum_{m=0}^{[q/2]} \sigma_0^{2m} \tau^m \frac{q!}{(2m)!(q-2m)!} e^{2m\omega Z_{n-1} - m \omega^2 t_{n-1}}  | Z_{n-2} \right] \\
&=& \sum_{m=0}^{[q/2]} \sigma_0^{2m} \tau^m \frac{q!}{(2m)!(q-2m)!} e^{m(2m-1)}
e^{2m\omega Z_{n-2} - m \omega^2 t_{n-2}} .
\nonumber
\end{eqnarray}

Second, substituting the expression (\ref{gen}), assumed to hold for 
$\beta_{i+1}^{(q)}(Z_{i+1})$, into the recursion (\ref{rec}), one finds 
that the same form (\ref{gen}) holds also for 
$\beta_{i+1}^{(q)}(Z_{i+1})$ with coefficients given by (\ref{sol}). 
This proves the validity of the result (\ref{gen}), and the recursion 
relation (\ref{rec}) for the coefficients $b_k^{i,q}$. Taking $i=0$ in the 
expectation $\beta_i^{(q)}(Z_i)$ reproduces the moment of the asset price 
$S_n$ as given by (\ref{momsol}). This concludes the proof of these results.

\end{proof}


\section*{Acknowledgements}

The authors are grateful to the Editor and two anonymous referees
for their helpful suggestions that greatly improved the quality of the paper.
Lingjiong Zhu is partially supported by NSF Grant DMS-1613164.
We are grateful for the suggestions and comments from the participants at 
the MCFAM Seminar at University of Minnesota, 
AMS Sectional Meeting at University of Georgia, Bachelier Congress New York, 
IAQF/Thalesians Seminar in New York.
We would like to thank Peter Carr, Paul Glasserman, Camelia Pop, Steven
Shreve and Songyun Xu for 
useful discussions.

\end{document}